\newtheoremstyle{mystyle}
  {}                                      
  {}                                      
  {\itshape}                              
  {}                                      
  {\bfseries}                             
  {.}                                     
  { }                                     
  {\thmname{#1}\thmnumber{ #2}\thmnote{ (#3)}}%
\theoremstyle{mystyle}
\newtheorem{theorem}{Theorem}
\newtheorem{definition}[theorem]{Definition} 
\newtheorem{corollary}[theorem]{Corollary} 
\newtheorem{lemma}[theorem]{Lemma}
\newcommand{\nth}{\bar{n}_{\scriptsize\textrm{th}}}
\newcommand{\nthtiny}{\bar{n}_{\tiny\textrm{th}}}
\newcommand{\reg}{\scriptsize\textrm{reg}}
\newcommand{\sq}{\scriptsize\textrm{sq}}
\newcommand{\hex}{\scriptsize\textrm{hex}}
\newcommand{\sqtiny}{\tiny\textrm{sq}}
\newcommand{\hextiny}{\tiny\textrm{hex}}
\begin{document}
\title{
Improved quantum capacity bounds of Gaussian loss channels and achievable rates with Gottesman-Kitaev-Preskill codes
} 
\author{Kyungjoo Noh}\email{kyungjoo.noh@yale.edu}
\affiliation{Department of Applied Physics and Physics, Yale University, New Haven, Connecticut 06520, USA}
\affiliation{Yale Quantum Institute, Yale University, New Haven, Connecticut 06520, USA}
\author{Victor V. Albert}
\affiliation{Walter Burke Institute for Theoretical Physics and Institute for Quantum Information and Matter, California Institute of Technology, Pasadena, California 91125, USA}
\author{Liang Jiang}
\affiliation{Department of Applied Physics and Physics, Yale University, New Haven, Connecticut 06520, USA}
\affiliation{Yale Quantum Institute, Yale University, New Haven, Connecticut 06520, USA}
\begin{abstract}
Gaussian loss channels are of particular importance since they model realistic optical communication channels. Except for special cases, quantum capacity of Gaussian loss channels is not yet known completely. In this paper, we provide improved upper bounds of Gaussian loss channel capacity, both in the energy-constrained and unconstrained scenarios. We briefly review the Gottesman-Kitaev-Preskill (GKP) codes and discuss their experimental implementation. We then prove, in the energy-unconstrained case, that the GKP codes achieve the quantum capacity of Gaussian loss channels up to at most a constant gap from the improved upper bound. In the energy-constrained case, we formulate a biconvex encoding and decoding optimization problem to maximize the entanglement fidelity. The biconvex optimization is solved by an alternating semidefinite programming (SDP) method and we report that, starting from random initial codes, our numerical optimization yields GKP codes as the optimal encoding in a practically relevant regime.
\end{abstract}
\maketitle

\section{Introduction}

Quantum communication is an important area of quantum technology wherein classical communication is enriched with non-local quantum entanglement and quantum state transmission \cite{Wilde2013}. Similarly as in the classical case, practical quantum communication channels are inevitably noisy, and thus the quantum information should be encoded and decoded in a non-trivial way such that the effects of channel noise can be reversed by quantum error correction \cite{Shor1995}. 

Quantum capacity of a channel quantifies the maximum amount of quantum bits per channel use that can be transmitted faithfully (upon optimal encoding and decoding) in the limit of infinite channel uses. Analogous to mutual information in the classical communication theory, \textit{regularized coherent information} of a channel characterizes the channel's quantum capacity \cite{Schumacher1996,Lloyd1997,Devetak2005}. Unlike its classical counterpart, however, coherent information might be \textit{superadditive}, which makes it hard to evaluate the channel's quantum capacity \cite{DiVincenzo1998,Smith2008,Smith2011,Cubitt2015}.   

Bosonic Gaussian channels \cite{Eisert2005,Weedbrook2012,Serafini2017} are among the most studied quantum channels due to its relevance to optical communication. The bosonic pure-loss channel is a special case of Gaussian loss channels.
Since coherent information of the bosonic pure-loss channel is additive, its quantum capacity is known \cite{Holevo2001,Wolf2007,Wilde2012} (see \cite{Wilde2016} for the general formalism of energy-constrained quantum capacity,
and our Theorem \ref{theorem:thermal optimizer is enough for bosonic pure loss channel capacity} 
for a pedagogical self-contained derivation of energy-constrained quantum capacity of bosonic pure-loss channels). For general Gaussian loss channels with added thermal noise, a lower bound of quantum capacity can be obtained by evaluating one-shot coherent information of the channel \cite{Holevo2001}, and several upper bounds are obtained by using, e.g., data-processing inequality \cite{Holevo2001,Pirandola2017,Sharma2017,Rosati2018}.

Evaluation of the quantum capacity, however, does not lend explicit encoding and decoding strategies achieving the capacity. In parallel with the characterization of quantum capacity, many bosonic quantum error-correcting codes have also been developed over the past two decades, using a few coherent states \cite{Cochrane1999,Niset2008,Leghtas2013,Mirrahimi2014,Lacerda2016,Albert2018}, position/momentum eigenstates \cite{Lloyd1998,Braunstein1998,Gottesman2001,Menicucci2014,Hayden2016,Ketterer2016}, finite superpositions of the Fock states \cite{Chuang1997,Knill2001,Ralph2005,Wasilewski2007,Bergmann2016a,Michael2016,
Yuezhen2017} of the bosonic modes. Hybrid CV-DV schemes have also been proposed \cite{Lee2013,Kapit2016}. 
In the context of communication theory, the Gottesman-Kitaev-Preskill (GKP) codes \cite{Gottesman2001} are particularly interesting as they can achieve one-shot coherent information of the Gaussian random displacement channel \cite{Harrington2001}. In our recent work, we showed, both numerically and analytically, that the GKP codes also exhibit excellent performance against the bosonic pure-loss errors, outperforming many the other bosonic codes \cite{Albert2017}.
It is important to understand why GKP code can perform so well, which will guide us to search for better bosonic codes.

In this paper, we prove that the GKP codes achieve quantum capacity of Gaussian loss channels up to at most a constant gap from an upper  bound of the quantum capacity. In section \ref{section:Preliminaries}, we review and summarize key properties of Gaussian loss, amplification and random displacement channels, which will be used in later sections. In section \ref{section:Quantum capacity of Gaussian channels}, we provide an improved upper bound of quantum capacity of the Gaussian loss channel, both in energy-constrained and unconstrained cases by introducing a slight modification of an earlier result in \cite{Sharma2017} (Theorems \ref{theorem:improved upper bound of quantum capacity energy unconstrained}, \ref{theorem:improved upper bound of quantum capacity energy constrained} and \ref{theorem:optimized data-processing upper bound}; see also Eq.\eqref{eq:improved upper bound of Gaussian random displacement channel} for an improved upper bound of Gaussian random displacement channel capacity). In section \ref{section:GKP codes}, we discuss experimental implementation of the GKP codes and prove that the achievable quantum communication rate with the GKP codes deviates only by a constant number of qubits per channel use from our improved upper bound, assuming an energy-unconstrained scenario (Theorem \ref{theorem:achievable rate of the GKP codes for Gaussian loss channels}). 
In section \ref{section:Biconvex encoding and decoding optimization}, we address the energy-constrained encoding scenario and formulate a biconvex optimization problem to find an optimal set of encoding and decoding which maximize the entanglement fidelity. We solve the biconvex optimization using alternating semidefinite programming (SDP), and show that the GKP code emerges as an optimal solution. 

\section{Preliminaries}
\label{section:Preliminaries}

In this section, we provide a summary of preliminary facts about Gaussian channels which will be referenced in later sections. For introduction to bosonic mode, Gaussian state and Gaussian unitary operations, we refer the readers to \cite{Weedbrook2012,Serafini2017} (and Appendix \ref{section:Bosonic mode, Gaussian states and Gaussian unitary operations}, which is a part of \cite{Weedbrook2012} relevant to this paper translated into our notation). Lemmas \ref{lemma:loss plus amplification is displacement pre-amplification} and \ref{lemma:Gaussian loss channel decomposed into pure loss and amplficiation pre-amplification} are the key facts which will be used to derive our two main results in subsection \ref{subsection:GKP code rate} (Theorem \ref{theorem:achievable rate of the GKP codes for Gaussian loss channels}) and subsection \ref{subsection:Improved upper bound of Gaussian loss channel capacity} (Theorems \ref{theorem:improved upper bound of quantum capacity energy unconstrained} and \ref{theorem:improved upper bound of quantum capacity energy constrained}), respectively. 

\subsection{Gaussian channels}
\label{subsection:Gaussian channels}

Let $\mathcal{H}^{\otimes N}$ be a Hilbert space associated with $N$ bosonic modes, $\mathcal{L}(\mathcal{H}^{\otimes N})$ the space of linear operators acting on $\mathcal{H}^{\otimes N}$ and $\mathcal{D}(\mathcal{H}^{\otimes N})\equiv \lbrace \hat{\rho}\in\mathcal{L}(\mathcal{H}^{\otimes N})\, |\, \hat{\rho}^{\dagger}=\hat{\rho}\succeq 0,\textrm{Tr}[\hat{\rho}]=1\rbrace$ the space of density operators. A quantum channel $\mathcal{N} : \mathcal{L}(\mathcal{H}^{\otimes N})\rightarrow \mathcal{L}(\mathcal{H}^{\otimes N})$ maps a quantum state $\hat{\rho}\in\mathcal{D}(\mathcal{H}^{\otimes N})$ to another one in $\mathcal{D}(\mathcal{H}^{\otimes N})$ via a completely positive and trace-preserving (CPTP) map \cite{Choi1975}. Gaussian channels map a Gaussian states (see Eq.\eqref{eq:Wigner characteristic function for Gaussian states} for the definition) to another Gaussian state, and can be simulated by $\mathcal{N}(\hat{\rho}) = \textrm{Tr}_{E}[\hat{U}_{G}(\hat{\rho}\otimes \hat{\rho}_{E})\hat{U}_{G}^{\dagger}]$, where $\hat{U}_{G}$ is a Gaussian unitary operation on the system plus environmental modes, $\hat{\rho}_{E}$ is a Gaussian state and $\textrm{Tr}_{E}$ is partial trace with respect to the environment. Let $\mathbf{\hat{X}}^{T}=(\mathbf{\hat{x}}^{T},\mathbf{\hat{y}}^{T})$ be a collection of quadrature operators of the system mode $\mathbf{\hat{x}}$ and the environmental mode $\mathbf{\hat{y}}$ (see the text below Eq.\eqref{eq:coherent state equals to displacec vacuum} for the definition of quadratures), and assume that the initial system and environmental states are given by $\hat{\rho}_{G}(\mathbf{\bar{x}},\mathbf{V_{x}})$ and $\hat{\rho}_{G}(\mathbf{\bar{y}},\mathbf{V_{y}})$, respectively. Here, $\mathbf{\bar{x}}$, $\mathbf{\bar{y}}$ are the first moments and $\mathbf{V_{x}}$, $\mathbf{V_{y}}$ are the second moments of the system and environment--cf, Eq.\eqref{eq:Wigner characteristic function for Gaussian states}. If the Gaussian unitary on the joint system is characterized by
\begin{equation}
\mathbf{S} = \begin{pmatrix}
\mathbf{S_{xx}} & \mathbf{S_{xy}}   \\
\mathbf{S_{yx}} & \mathbf{S_{yy}}
\end{pmatrix} 
\,\,\textrm{and}\,\,
\mathbf{D}= \begin{pmatrix}
\mathbf{d_{x}}    \\
\mathbf{d_{y}}
\end{pmatrix},  \label{eq:Gaussian unitary characterization for Gaussian channels}
\end{equation}  
the first two moments of the system mode are transformed as $\mathbf{\bar{x}}\rightarrow \mathbf{S_{xx}}\mathbf{\bar{x}}+\mathbf{S_{xy}}\mathbf{\bar{y}}+\mathbf{d_{x}}$ and $\mathbf{V_{x}}\rightarrow \mathbf{S_{xx}}\mathbf{V_{x}}\mathbf{S^{T}_{xx}}+\mathbf{S_{xy}}\mathbf{V_{y}}\mathbf{S^{T}_{xy}}$, as can be derived by specializing Eq.\eqref{eq:transformation of moments of quadrature by Gaussian unitary} to Eq.\eqref{eq:Gaussian unitary characterization for Gaussian channels}. After tracing out the environment, the effective Gaussian channel for the system is characterized by 
\begin{equation}
\mathbf{\bar{x}}\rightarrow \mathbf{T}\mathbf{\bar{x}} + \mathbf{d}, \quad \mathbf{V_{x}}\rightarrow \mathbf{T}\mathbf{V_{x}}\mathbf{T}^{T}+\mathbf{N}, 
\end{equation}
where $\mathbf{T}=\mathbf{S_{xx}}$, $\mathbf{N}=\mathbf{S_{xy}}\mathbf{V_{y}}\mathbf{S^{T}_{xy}}$ and $\mathbf{d}=\mathbf{S_{xy}}\mathbf{\bar{y}}+\mathbf{d_{x}}$. 

The main subject of this paper is the Gaussian channel which models excitation loss, e.g., in optical communication channels. 
\begin{definition}[Gaussian loss channel]
Let $\hat{B}(\eta)$ be the beam splitter unitary with transmissivity $\eta\in[0,1]$, acting on mode $1$ and $2$. Then, a Gaussian loss channel is defined as 
\begin{equation}
\mathcal{N}[\eta,\nth](\hat{\rho}_{1}) \equiv \textrm{Tr}_{2}[\hat{B}(\eta)(\hat{\rho}_{1}\otimes \hat{\rho}_{\nthtiny})\hat{B}^{\dagger}(\eta)]. 
\end{equation}
Here, $\textrm{Tr}_{2}$ is the partial trace with respect to the mode $2$, initially in the thermal state $\hat{\rho}_{\nthtiny}$ with average photon number $\nth$. \label{definition:Gaussian loss channels} 
\end{definition} 

Note that $\mathcal{N}[\eta,\nth]$ is characterized by $\mathbf{T}=\sqrt{\eta}\mathbf{I}_{2}$, $\mathbf{N}=(1-\eta)(\nth+\frac{1}{2})\mathbf{I}_{2}$ and $\mathbf{d}=0$, where $\mathbf{I}_{n}$ is the $n\times n$ identity matrix, as can be derived from the definition of beam splitter unitary (see Eq.\eqref{eq:symplectic transformation beam splitter}). Bosonic pure-loss channel is a special case of Gaussian loss channel with $\nth=0$. Bosonic pure-loss channel with transmissivity $\eta\in[\frac{1}{2},1]$ ($\eta\in[0,\frac{1}{2}]$) is degradable (anti-degradable) \cite{Caruso2006} and its quantum capacity is known. (See section \ref{section:Quantum capacity of Gaussian channels} for the definition of degradability and anti-degradability.) Except for this special case, Gaussian loss channels are not degradable nor anti-degradable \cite{Caruso2006B,Holevo2007}, and thus their coherent information may not be additive. 

By replacing beam splitter unitary in Definition \ref{definition:Gaussian loss channels} with two-mode squeezing unitary, we obtain the Gaussian amplification channel. Quantum-limited amplification channel is a special case of Gaussian amplification and is defined as follows: 
\begin{definition}[Quantum-limited amplification channel] 
Let $\hat{S}_{2}(G)$ be the two-mode squeezing unitary operation with gain $G\ge 1$, acting on mode $1$ and $2$. Then, the quantum-limited amplification channel is defined as 
\begin{equation}
\mathcal{A}[G](\hat{\rho}_{1})  \equiv \textrm{Tr}_{2}[\hat{S}_{2}(G)(\hat{\rho}_{1}\otimes |0\rangle\langle 0|_{2} )\hat{S}_{2}^{\dagger}(G)], \label{eq:definition of quantum limited amplification}
\end{equation}
where $|0\rangle$ is the vacuum state. \label{definition:quantum limited amplification}
\end{definition}

$\mathcal{A}[G]$ is characterized by $\mathbf{T}=\sqrt{G}\mathbf{I}_{2}$, $\mathbf{N}=\frac{(G-1)}{2}\mathbf{I}_{2}$ and $\mathbf{d}=0$, as can be derived from the definition of two-mode squeezing (see Eq.\eqref{eq:symplectic transformation two mode squeezer}). Note that the noise $\mathbf{N}=\frac{(G-1)}{2}\mathbf{I}_{2}$ is due to the variance of the ancillary vacuum state, transferred to the system via two-mode squeezing. Since the vacuum has the minimum variance allowed by Heisenberg uncertainty principle, quantum-limited amplification incurs the least noise among all linear amplification channels \cite{Heffner1962}.  
  
Finally, we introduce the Gaussian random displacement channel, which GKP codes \cite{Gottesman2001} were originally designed to protect against. 
\begin{definition}[Gaussian random displacement channel]
Gaussian random displacement channel is defined as 
\begin{equation}
\mathcal{N}_{B_{2}}[\sigma^{2}](\hat{\rho}) \equiv \frac{1}{\pi\sigma^{2}} \int d^{2}\alpha e^{-\frac{|\alpha|^{2}}{\sigma^{2}}} \hat{D}(\alpha)\hat{\rho}\hat{D}^{\dagger}(\alpha),  
\end{equation}
where $\hat{D}(\alpha)$ is the displacement operator and $\sigma^{2}$ is the variance of random displacement.\label{definition:Gaussian displacement channel}
\end{definition}
We note that the Gaussian random displacement channel belongs to the class $B_{2}$ channel    
\cite{Caruso2006B,Holevo2007} and is characterized by $\mathbf{T}=\mathbf{I}_{2}$, $\mathbf{N}=\sigma^{2}\mathbf{I}_{2}$ and $\mathbf{d}=0$. (Since convention for the Gaussian random displacement channel varies in literature, we prove in Appendix \ref{subsection:Characterization of Gaussian random displacement channel} that $\mathbf{N}=\sigma^{2}\mathbf{I}_{2}$ holds in our convention, which is aligned with \cite{Gottesman2001,Harrington2001,Albert2017}).   

\subsection{Synthesis and decomposition of Gaussian channels}
\label{subsection:Synthesis and decomposition of Gaussian channels}

In Ref \cite{Albert2017}, it has been shown
that the GKP codes outperform many other bosonic codes in protecting encoded quantum information against bosonic pure-loss errors
upon optimal decoding operation numerically obtained by semidefinite programming. 
This excellent performance of the GKP codes can be achieved by a sub-optimal decoding operation, which is designed based on the observation that the bosonic pure-loss channel can be converted into the Gaussian displacement channel by a quantum-limited amplification: 
\begin{equation}
\mathcal{A}[1/\eta] \cdot \mathcal{N}[\eta,0] = \mathcal{N}_{B_{2}}[\sigma^{2}_{\eta,0}], \label{eq:pure loss plus amplification is displacement}
\end{equation} 
where $\sigma_{\eta,0}^{2} \equiv \frac{1-\eta}{\eta}$ (see Eq.(7.21) in \cite{Albert2017}). 
This simple fact was previously noted in \cite{Eisert2005} (but without proof and explicit relation between $\eta$ and $\sigma^{2}$) and has been used in a key disctribution scheme with the GKP codes \cite{Gottesman2001B}. Eq. \eqref{eq:pure loss plus amplification is displacement} implies that we can convert the loss channel into the displacement channel and then use the conventional GKP decoding (see the text below Eq.\eqref{eq:encoding one mode GKP as desired}) to decode the encoded GKP states corrupted by bosonic pure-loss channel. Here, we generalize this result and show that general Gaussian loss channels (with added thermal noise) can also be converted into Gaussian random displacement channel. Before doing so, we state and prove the following known fact:  
\begin{theorem}[Gaussian channel synthesis; Eq.(26) of \cite{Caruso2006B}]
Let $\mathcal{N}_{1}$ and $\mathcal{N}_{2}$ be Gaussian channels with specification $(\mathbf{T}_{1},\mathbf{N}_{1},\mathbf{d}_{1})$ and $(\mathbf{T}_{2},\mathbf{N}_{2},\mathbf{d}_{2})$, respectively. Then, the synthesized channel $\mathcal{N}\equiv \mathcal{N}_{2}\cdot\mathcal{N}_{1}$ is a Gaussian channel with specification 
\begin{align}
\mathbf{T} &= \mathbf{T}_{2}\mathbf{T}_{1},
\nonumber\\
\mathbf{N} &= \mathbf{T}_{2}\mathbf{N}_{1}\mathbf{T}_{2}^{T} + \mathbf{N}_{2},
\nonumber\\
\mathbf{d} &= \mathbf{T}_{2}\mathbf{d}_{1}+\mathbf{d}_{2}. 
\end{align} \label{theorem:Gaussian channel synthesis}
\end{theorem}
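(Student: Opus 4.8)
The plan is to deduce everything from the affine action of a Gaussian channel on the first and second moments, as recorded in subsection~\ref{subsection:Gaussian channels}. First I would observe that $\mathcal{N}\equiv\mathcal{N}_{2}\cdot\mathcal{N}_{1}$ is itself a Gaussian channel: each of $\mathcal{N}_{1},\mathcal{N}_{2}$ maps Gaussian states to Gaussian states, hence so does their composition. Next I would invoke the fact that a Gaussian channel is \emph{completely determined} by the pair of affine maps it induces on $(\mathbf{\bar{x}},\mathbf{V_x})$ — equivalently, by the triple $(\mathbf{T},\mathbf{N},\mathbf{d})$ — because the coherent states form an overcomplete family and thus fix a channel on all of $\mathcal{L}(\mathcal{H}^{\otimes N})$. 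Consequently it suffices to compute $(\mathbf{T},\mathbf{N},\mathbf{d})$ for $\mathcal{N}$ by tracking the moments of an arbitrary Gaussian input through $\mathcal{N}_{1}$ and then through $\mathcal{N}_{2}$.

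The computation itself is pure bookkeeping. Feeding in $\hat{\rho}_{G}(\mathbf{\bar{x}},\mathbf{V_x})$, the moment rule for $\mathcal{N}_{1}$ gives $\mathbf{\bar{x}}\mapsto\mathbf{T}_{1}\mathbf{\bar{x}}+\mathbf{d}_{1}$ and $\mathbf{V_x}\mapsto\mathbf{T}_{1}\mathbf{V_x}\mathbf{T}_{1}^{T}+\mathbf{N}_{1}$; applying the rule a second time with $(\mathbf{T}_{2},\mathbf{N}_{2},\mathbf{d}_{2})$ and using associativity of matrix multiplication together with $(\mathbf{A}\mathbf{B})^{T}=\mathbf{B}^{T}\mathbf{A}^{T}$ yields $\mathbf{\bar{x}}\mapsto(\mathbf{T}_{2}\mathbf{T}_{1})\mathbf{\bar{x}}+(\mathbf{T}_{2}\mathbf{d}_{1}+\mathbf{d}_{2})$ and $\mathbf{V_x}\mapsto(\mathbf{T}_{2}\mathbf{T}_{1})\mathbf{V_x}(\mathbf{T}_{2}\mathbf{T}_{1})^{T}+(\mathbf{T}_{2}\mathbf{N}_{1}\mathbf{T}_{2}^{T}+\mathbf{N}_{2})$. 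Matching against the canonical form $\mathbf{\bar{x}}\mapsto\mathbf{T}\mathbf{\bar{x}}+\mathbf{d}$, $\mathbf{V_x}\mapsto\mathbf{T}\mathbf{V_x}\mathbf{T}^{T}+\mathbf{N}$ reads off $\mathbf{T}=\mathbf{T}_{2}\mathbf{T}_{1}$, $\mathbf{N}=\mathbf{T}_{2}\mathbf{N}_{1}\mathbf{T}_{2}^{T}+\mathbf{N}_{2}$ and $\mathbf{d}=\mathbf{T}_{2}\mathbf{d}_{1}+\mathbf{d}_{2}$, which is exactly the claim.

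A cleaner variant that sidesteps having to invoke overcompleteness is to run the same substitution at the level of the Wigner characteristic function: a Gaussian channel $(\mathbf{T},\mathbf{N},\mathbf{d})$ sends $\chi(\boldsymbol{\xi})\mapsto\chi(\mathbf{T}^{T}\boldsymbol{\xi})\,\exp(-\tfrac12\boldsymbol{\xi}^{T}\mathbf{N}\boldsymbol{\xi}+i\,\mathbf{d}^{T}\boldsymbol{\xi})$ — up to the symplectic-form conventions fixed in the appendix — and composing two such maps is a one-liner that simultaneously exhibits $\mathcal{N}$ as Gaussian and produces the same $(\mathbf{T},\mathbf{N},\mathbf{d})$. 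There is essentially no hard step here: the content is arithmetic. The only point deserving a sentence of care is the assertion that matching the induced moment maps (or, in the second route, the induced action on characteristic functions) suffices to identify the channel — and that identification is precisely what the characteristic-function formulation makes automatic.
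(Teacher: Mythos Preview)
Your proposal is correct and follows essentially the same route as the paper: track the first and second moments of a Gaussian input through $\mathcal{N}_{1}$ and then $\mathcal{N}_{2}$, and read off $(\mathbf{T},\mathbf{N},\mathbf{d})$ from the composite affine action. You are in fact slightly more careful than the paper, which simply performs the moment bookkeeping without explicitly invoking overcompleteness or the characteristic-function route to justify that the specification determines the channel.
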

\begin{proof}
Let $\hat{\rho}$ be a Gaussian state, i.e., $\hat{\rho}=\hat{\rho}_{G}(\mathbf{\bar{x}},\mathbf{V})$. Upon $\mathcal{N}_{1}$, the first two moments of $\hat{\rho}$ are transformed into $\mathbf{\bar{x}}'=\mathbf{T}_{1}\mathbf{\bar{x}}+\mathbf{d}_{1}$ and $\mathbf{V}' = \mathbf{T}_{1}\mathbf{V}\mathbf{T}_{1}^{T} +\mathbf{N}_{1}$. The second channel $\mathcal{N}_{2}$ then transforms $\mathbf{\bar{x}}'$ and $\mathbf{V}'$ into 
\begin{align}
\mathbf{\bar{x}}'' &= \mathbf{T}_{2} \mathbf{\bar{x}}' +\mathbf{d}_{2},
\nonumber\\ 
&=\mathbf{T}_{2}\mathbf{T}_{1}\mathbf{\bar{x}} + \mathbf{T}_{2}\mathbf{d}_{1}+\mathbf{d}_{2}, 
\nonumber\\
\mathbf{V}''&= \mathbf{T}_{2}\mathbf{V}'\mathbf{T}_{2}^{T} + \mathbf{N}_{2} 
\nonumber\\
&=\mathbf{T}_{2}\mathbf{T}_{1}\mathbf{V}\mathbf{T}_{1}^{T}\mathbf{T}_{2}^{T} + \mathbf{T}_{2}\mathbf{N}_{1}\mathbf{T}_{2}^{T} + \mathbf{N}_{2}. 
\end{align}
The synthesized channel $\mathcal{N}=\mathcal{N}_{2}\cdot\mathcal{N}_{1}$ is thus a Gaussian channel mapping $\hat{\rho}_{G}(\mathbf{\bar{x}},\mathbf{V})$ to $\hat{\rho}_{G}(\mathbf{\bar{x}}'',\mathbf{V}'')$, with the specification as stated in the theorem. 
\end{proof}
Theorem \ref{theorem:Gaussian channel synthesis} then leads to the following generalization of Eq.\eqref{eq:pure loss plus amplification is displacement}. 
\begin{corollary}[Loss + amplification = displacement]
Let $\mathcal{N}[\eta,\nth]$ be the general Gaussian loss channel with transmissivity $\eta\in[0,1]$ and thermal photon $\nth$ in the ancillary mode. Let $\mathcal{A}[1/\eta]$ be the quantum-limited amplification with gain $G=1/\eta$. Then, 
\begin{equation}
\mathcal{A}[1/\eta] \cdot \mathcal{N}[\eta,\nth] = \mathcal{N}_{B_{2}}[\sigma^{2}_{\eta,\nthtiny}], 
\end{equation} 
where 
\begin{equation}
\sigma_{\eta,\nthtiny}^{2} \equiv \Big{(}\frac{1-\eta}{\eta}\Big{)}(\nth+1).  \label{eq:effective variance of displacement post-amp} 
\end{equation}  \label{corollary:loss plus amplification is displacement post-amplification}
\end{corollary}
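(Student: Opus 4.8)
The plan is to apply Theorem~\ref{theorem:Gaussian channel synthesis} directly, taking $\mathcal{N}_{1}=\mathcal{N}[\eta,\nth]$ and $\mathcal{N}_{2}=\mathcal{A}[1/\eta]$, compute the specification of the composed channel, and then recognize it as that of a Gaussian random displacement channel. (One could alternatively compose the two Wigner-characteristic-function transformations by hand, but the synthesis theorem already packages exactly this computation.) Note first that the gain $G=1/\eta\ge 1$ is admissible for $\eta\in(0,1]$, so $\mathcal{A}[1/\eta]$ is well defined.

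The inputs are the specifications recorded in subsection~\ref{subsection:Gaussian channels}: $\mathcal{N}[\eta,\nth]$ has $(\mathbf{T}_{1},\mathbf{N}_{1},\mathbf{d}_{1})=(\sqrt{\eta}\,\mathbf{I}_{2},\,(1-\eta)(\nth+\tfrac{1}{2})\mathbf{I}_{2},\,0)$, and $\mathcal{A}[G]$ with $G=1/\eta$ has $(\mathbf{T}_{2},\mathbf{N}_{2},\mathbf{d}_{2})=(\sqrt{1/\eta}\,\mathbf{I}_{2},\,\tfrac{1}{2}(1/\eta-1)\mathbf{I}_{2},\,0)$. Plugging these into the formulas of Theorem~\ref{theorem:Gaussian channel synthesis} gives $\mathbf{T}=\mathbf{T}_{2}\mathbf{T}_{1}=\sqrt{1/\eta}\,\sqrt{\eta}\,\mathbf{I}_{2}=\mathbf{I}_{2}$, $\mathbf{d}=\mathbf{T}_{2}\mathbf{d}_{1}+\mathbf{d}_{2}=0$, and
\begin{equation}
\mathbf{N}=\mathbf{T}_{2}\mathbf{N}_{1}\mathbf{T}_{2}^{T}+\mathbf{N}_{2}=\frac{1-\eta}{\eta}\Big(\nth+\frac{1}{2}\Big)\mathbf{I}_{2}+\frac{1-\eta}{2\eta}\mathbf{I}_{2}=\frac{1-\eta}{\eta}(\nth+1)\mathbf{I}_{2},
\end{equation}
which is precisely $\sigma_{\eta,\nthtiny}^{2}\mathbf{I}_{2}$ with $\sigma_{\eta,\nthtiny}^{2}$ as in Eq.~\eqref{eq:effective variance of displacement post-amp}.

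Finally I would invoke the characterization of the Gaussian random displacement channel noted below Definition~\ref{definition:Gaussian displacement channel}, namely that $\mathcal{N}_{B_{2}}[\sigma^{2}]$ is the Gaussian channel with triple $(\mathbf{I}_{2},\sigma^{2}\mathbf{I}_{2},0)$, together with the fact that a Gaussian channel is determined by its specification $(\mathbf{T},\mathbf{N},\mathbf{d})$, to conclude $\mathcal{A}[1/\eta]\cdot\mathcal{N}[\eta,\nth]=\mathcal{N}_{B_{2}}[\sigma_{\eta,\nthtiny}^{2}]$; setting $\nth=0$ recovers Eq.~\eqref{eq:pure loss plus amplification is displacement}. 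There is no real obstacle here: the only point that warrants being written out explicitly is the bookkeeping of the two vacuum-noise $\tfrac12$ contributions — that the amplifier's intrinsic noise $\tfrac12(1/\eta-1)$ adds to the $1/\eta$-rescaled loss noise so as to promote $\nth+\tfrac12$ to $\nth+1$ — which is exactly the arithmetic displayed above.
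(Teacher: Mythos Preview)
Your proposal is correct and follows essentially the same approach as the paper: both apply Theorem~\ref{theorem:Gaussian channel synthesis} with $\mathcal{N}_{1}=\mathcal{N}[\eta,\nth]$ and $\mathcal{N}_{2}=\mathcal{A}[1/\eta]$, compute $(\mathbf{T},\mathbf{N},\mathbf{d})=(\mathbf{I}_{2},\,\tfrac{1-\eta}{\eta}(\nth+1)\mathbf{I}_{2},\,0)$, and identify this as the specification of $\mathcal{N}_{B_{2}}[\sigma^{2}_{\eta,\nthtiny}]$. Your added remarks on the admissibility of $G=1/\eta$ and the bookkeeping of the two $\tfrac12$ contributions are fine elaborations but do not change the argument.
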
 
\begin{proof}
Note that $\mathcal{N}[\eta,\nth]$ and $\mathcal{A}(1/\eta)$ are characterized by $(\mathbf{T}_{1},\mathbf{N}_{1},\mathbf{d}_{1}) = (\sqrt{\eta}\mathbf{I}_{2},(1-\eta)(\nth +\frac{1}{2})\mathbf{I}_{2},\mathbf{0})$ and $(\mathbf{T}_{2},\mathbf{N}_{2},\mathbf{d}_{2}) = (\sqrt{\frac{1}{\eta}}\mathbf{I}_{2},\frac{1}{2}(\frac{1-\eta}{\eta})\mathbf{I}_{2},\mathbf{0})$, respectively. Let $(\mathbf{T},\mathbf{N},\mathbf{d})$ be the specification of the synthesized channel. Theorem \ref{theorem:Gaussian channel synthesis} implies $\mathbf{T}=\mathbf{T}_{2}\mathbf{T}_{1} = \mathbf{I}_{2}$, $\mathbf{d} = \mathbf{T}_{2}\mathbf{d}_{1}+\mathbf{d}_{2}=\mathbf{0}$ and 
\begin{align}
\mathbf{N} &= \mathbf{T}_{2}\mathbf{N}_{1}\mathbf{T}_{2}^{T}+\mathbf{N}_{2}
\nonumber\\ 
&= \frac{1}{\eta}(1-\eta)\Big{(}\nth+\frac{1}{2}\Big{)}\mathbf{I}_{2} + \frac{1}{2}\Big{(}\frac{1-\eta}{\eta}\Big{)}\mathbf{I}_{2} 
\nonumber\\
&=   \Big{(}\frac{1-\eta}{\eta}\Big{)}(\nth+1)\mathbf{I}_{2} . \label{eq:noise in loss then amp equals displacement}
\end{align}
Comparing Eq.\eqref{eq:noise in loss then amp equals displacement} with $\mathbf{N}=\sigma^{2}\mathbf{I}_{2}$ for the Gaussian random displacement channel, the statement follows. 
\end{proof}
It is noteworthy that the noise from the Gaussian loss channel (i.e., $\mathbf{N}_{1}=(1-\eta)(\nth +\frac{1}{2})\mathbf{I}_{2}$) is amplified by the quantum-limited amplification (first term in the second line of Eq.\eqref{eq:noise in loss then amp equals displacement}), hence increasing the variance $\sigma_{\eta,\nthtiny}^{2}$ of resulting random displacement channel. We can however avoid the noise amplification simply by reversing the order of loss channel and amplification, i.e., amplifying the signal prior to sending it through the Gaussian loss channel. 
\begin{lemma}[Pre-amplification causes less noise]
Let $\mathcal{N}[\eta,\nth]$ and $\mathcal{A}[1/\eta]$ be as specified in corollary \ref{corollary:loss plus amplification is displacement post-amplification}. Then, 
\begin{equation}
\mathcal{N}[\eta,\nth] \cdot \mathcal{A}[1/\eta] = \mathcal{N}_{B_{2}}[\tilde{\sigma}^{2}_{\eta,\nthtiny}], 
\end{equation}
where 
\begin{equation}
\tilde{\sigma}^{2}_{\eta,\nthtiny} \equiv (1-\eta)(\nth +1). \label{eq:effective variance of displacement pre-amp} 
\end{equation}
Note that $\tilde{\sigma}^{2}_{\eta,\nthtiny}$ is strictly less than $\sigma^{2}_{\eta,\nthtiny}$ for all $0 \le \eta<  1$.   \label{lemma:loss plus amplification is displacement pre-amplification}
\end{lemma}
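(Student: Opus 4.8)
The plan is to mimic the proof of Corollary \ref{corollary:loss plus amplification is displacement post-amplification}, but now with the quantum-limited amplification channel applied \emph{first}. So the first step is to recall the Gaussian-channel specifications: $\mathcal{A}[1/\eta]$ is characterized by $(\mathbf{T}_{1},\mathbf{N}_{1},\mathbf{d}_{1}) = \bigl(\sqrt{1/\eta}\,\mathbf{I}_{2},\, \tfrac{1}{2}\bigl(\tfrac{1-\eta}{\eta}\bigr)\mathbf{I}_{2},\, \mathbf{0}\bigr)$ (using $G-1 = \tfrac{1}{\eta}-1 = \tfrac{1-\eta}{\eta}$), and $\mathcal{N}[\eta,\nth]$ is characterized by $(\mathbf{T}_{2},\mathbf{N}_{2},\mathbf{d}_{2}) = \bigl(\sqrt{\eta}\,\mathbf{I}_{2},\, (1-\eta)(\nth+\tfrac{1}{2})\mathbf{I}_{2},\, \mathbf{0}\bigr)$. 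Note that the roles of subscripts $1$ and $2$ are swapped relative to the corollary, because here $\mathcal{N}_1 = \mathcal{A}[1/\eta]$ and $\mathcal{N}_2 = \mathcal{N}[\eta,\nth]$.

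Next I would apply Theorem \ref{theorem:Gaussian channel synthesis} directly. For the transmissivity matrix, $\mathbf{T} = \mathbf{T}_{2}\mathbf{T}_{1} = \sqrt{\eta}\cdot\sqrt{1/\eta}\,\mathbf{I}_{2} = \mathbf{I}_{2}$, and for the displacement, $\mathbf{d} = \mathbf{T}_{2}\mathbf{d}_{1} + \mathbf{d}_{2} = \mathbf{0}$; so the synthesized channel already has the $\mathbf{T}$ and $\mathbf{d}$ of a random displacement channel. The crux is the added-noise matrix:
\begin{align}
\mathbf{N} &= \mathbf{T}_{2}\mathbf{N}_{1}\mathbf{T}_{2}^{T} + \mathbf{N}_{2}
\nonumber\\
&= \eta\cdot\frac{1}{2}\Bigl(\frac{1-\eta}{\eta}\Bigr)\mathbf{I}_{2} + (1-\eta)\Bigl(\nth+\frac{1}{2}\Bigr)\mathbf{I}_{2}
\nonumber\\
&= \Bigl[\frac{1-\eta}{2} + (1-\eta)\nth + \frac{1-\eta}{2}\Bigr]\mathbf{I}_{2}
\nonumber\\
&= (1-\eta)(\nth+1)\,\mathbf{I}_{2}.
\end{align}
Here the key observation is that the amplification noise $\mathbf{N}_1 = \tfrac{1}{2}\tfrac{1-\eta}{\eta}\mathbf{I}_2$ gets \emph{attenuated} by the factor $\eta$ when passing through the loss channel, whereas in the corollary the loss noise got \emph{amplified} by $1/\eta$; this asymmetry is exactly what makes the pre-amplification variance smaller. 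Comparing $\mathbf{N} = \tilde\sigma^2_{\eta,\nthtiny}\mathbf{I}_2$ with the characterization $\mathbf{N} = \sigma^2\mathbf{I}_2$ of the Gaussian random displacement channel (Definition \ref{definition:Gaussian displacement channel}), one reads off $\tilde\sigma^2_{\eta,\nthtiny} = (1-\eta)(\nth+1)$, as claimed.

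Finally, for the strict-inequality remark, I would simply compare $\tilde\sigma^2_{\eta,\nthtiny} = (1-\eta)(\nth+1)$ with $\sigma^2_{\eta,\nthtiny} = \bigl(\tfrac{1-\eta}{\eta}\bigr)(\nth+1)$ from Eq.~\eqref{eq:effective variance of displacement post-amp}: their ratio is $\sigma^2_{\eta,\nthtiny}/\tilde\sigma^2_{\eta,\nthtiny} = 1/\eta$, which is strictly greater than $1$ whenever $0 \le \eta < 1$, since $\nth+1 > 0$ guarantees $\tilde\sigma^2_{\eta,\nthtiny} > 0$ so the division is legitimate (the $\eta = 0$ endpoint needs only a quick note that both sides are interpreted as limits, or one restricts to $\eta > 0$ where $\mathcal{A}[1/\eta]$ is well-defined). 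I do not anticipate a genuine obstacle here: the whole argument is a one-line matrix computation plus an invocation of the already-proved synthesis theorem, and the only thing to be careful about is keeping the subscripts $1 \leftrightarrow 2$ straight and tracking the factor of $\eta$ versus $1/\eta$ in the noise transformation.
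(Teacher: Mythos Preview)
Your proof is correct and follows essentially the same approach as the paper: apply Theorem \ref{theorem:Gaussian channel synthesis} with the subscripts $1\leftrightarrow 2$ swapped relative to Corollary \ref{corollary:loss plus amplification is displacement post-amplification}, and compute $\mathbf{N} = \eta\cdot\tfrac{1}{2}\bigl(\tfrac{1-\eta}{\eta}\bigr)\mathbf{I}_2 + (1-\eta)(\nth+\tfrac{1}{2})\mathbf{I}_2 = (1-\eta)(\nth+1)\mathbf{I}_2$. Your explicit check of $\mathbf{T}$, $\mathbf{d}$, and the strict inequality $\tilde\sigma^2_{\eta,\nthtiny} < \sigma^2_{\eta,\nthtiny}$ goes slightly beyond what the paper writes out, but adds nothing new methodologically.
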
 
\begin{proof}
The proof goes similarly as in corollary \ref{corollary:loss plus amplification is displacement post-amplification}, except that the subscripts are exchanged $1\leftrightarrow 2$: 
\begin{align}
\mathbf{N} &= \mathbf{T}_{1}\mathbf{N}_{2}\mathbf{T}_{1}^{T}+\mathbf{N}_{1}
\nonumber\\ 
&=  \eta \times  \frac{1}{2}\Big{(}\frac{1-\eta}{\eta}\Big{)}\mathbf{I}_{2} + (1-\eta)\Big{(}\nth +\frac{1}{2}\Big{)}\mathbf{I}_{2} 
\nonumber\\
&=  (1-\eta)(\nth+1)\mathbf{I}_{2} . \label{eq:noise in loss then amp equals displacement pre-amplification}
\end{align}
\end{proof} 
Note that in Eq.\eqref{eq:effective variance of displacement pre-amp}, noise from the Gaussian loss channel (i.e., $\mathbf{N}_{1}$) is not amplified. Moreover, additional noise $\mathbf{N}_{2}$ from the quantum-limited amplification is now reduced by the factor of $\eta$ as compared with Eq.\eqref{eq:effective variance of displacement post-amp}, due to the loss (see the first term in the second line of Eq.\eqref{eq:noise in loss then amp equals displacement pre-amplification}). 

In subsection \ref{subsection:GKP code rate}, we combine Lemma \ref{lemma:loss plus amplification is displacement pre-amplification} with the earlier result given in \cite{Harrington2001} to establish the achievable quantum communication rate of the GKP codes for the general Gaussian loss channel $\mathcal{N}[\eta,\nth]$. 

Another interesting application of the idea of Gaussian channel synthesis is the composition of bosonic pure-loss channel and quantum-limited amplification with unmatched transmissivity and gain (i.e., $G\neq 1/\eta$). With this mismatch, one can simulate the general Gaussian loss channel $\mathcal{N}[\eta,\nth]$ with $\mathcal{N}[\eta',0]$ and $\mathcal{A}[G']$ for some properly chosen $\eta',G'$ \cite{Caruso2006B,Raul2012}:
\begin{lemma}[Eq.(5.1) of \cite{Sharma2017}] 
Gaussian loss channel can be decomposed into a bosonic pure-loss channel followed by a quantum-limited amplification 
\begin{equation}
\mathcal{N}[\eta,\nth] = \mathcal{A}[G']\cdot \mathcal{N}[\eta',0], 
\end{equation}
where $\eta' = \frac{\eta}{(1-\eta)\nthtiny + 1}$ and $G'=(1-\eta)\nth +1$.  \label{lemma:Gaussian loss channel decomposed into pure loss and amplficiation post-amplification}
\end{lemma}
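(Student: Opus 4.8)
The plan is to reduce everything to the Gaussian channel synthesis rule, Theorem~\ref{theorem:Gaussian channel synthesis}. I would set $\mathcal{N}_1 = \mathcal{N}[\eta',0]$ (a bosonic pure-loss channel) and $\mathcal{N}_2 = \mathcal{A}[G']$ (a quantum-limited amplification), read off their specifications from the remarks following Definitions~\ref{definition:Gaussian loss channels} and \ref{definition:quantum limited amplification}, namely $(\mathbf{T}_1,\mathbf{N}_1,\mathbf{d}_1) = (\sqrt{\eta'}\mathbf{I}_2,\, \tfrac12(1-\eta')\mathbf{I}_2,\, \mathbf{0})$ and $(\mathbf{T}_2,\mathbf{N}_2,\mathbf{d}_2) = (\sqrt{G'}\mathbf{I}_2,\, \tfrac12(G'-1)\mathbf{I}_2,\, \mathbf{0})$, and then verify that the synthesized specification coincides with that of $\mathcal{N}[\eta,\nth]$, which is $(\sqrt{\eta}\mathbf{I}_2,\, (1-\eta)(\nth+\tfrac12)\mathbf{I}_2,\, \mathbf{0})$.

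First I would check that the decomposition is even well-formed: since $\eta\in[0,1]$ and $\nth\ge 0$, we have $G' = (1-\eta)\nth + 1 \ge 1$, so $\mathcal{A}[G']$ is a legitimate quantum-limited amplification, and $\eta' = \eta/G' \in [0,\eta]\subseteq[0,1]$, so $\mathcal{N}[\eta',0]$ is a legitimate pure-loss channel. The key algebraic identity driving the proof is $G'\eta' = \eta$, which is immediate from the definition of $\eta'$.

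Next I would apply Theorem~\ref{theorem:Gaussian channel synthesis} with this $\mathcal{N}_1,\mathcal{N}_2$. The transmission matrix is $\mathbf{T} = \mathbf{T}_2\mathbf{T}_1 = \sqrt{G'\eta'}\,\mathbf{I}_2 = \sqrt{\eta}\,\mathbf{I}_2$, the displacement is $\mathbf{d} = \mathbf{T}_2\mathbf{d}_1 + \mathbf{d}_2 = \mathbf{0}$, and the noise matrix is
\begin{align}
\mathbf{N} &= \mathbf{T}_2\mathbf{N}_1\mathbf{T}_2^{T} + \mathbf{N}_2 = \tfrac12\big[\, G'(1-\eta') + (G'-1) \,\big]\mathbf{I}_2 \nonumber\\
&= \tfrac12\big[\, 2G' - G'\eta' - 1 \,\big]\mathbf{I}_2 = \tfrac12\big[\, 2(1-\eta)\nth + 1 - \eta \,\big]\mathbf{I}_2 \nonumber\\
&= (1-\eta)\big(\nth + \tfrac12\big)\mathbf{I}_2,
\end{align}
where I used $G'\eta' = \eta$ and $G' = (1-\eta)\nth + 1$. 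Since these three matrices are exactly the specification of $\mathcal{N}[\eta,\nth]$ recorded below Definition~\ref{definition:Gaussian loss channels}, and a Gaussian channel is uniquely determined by its specification triple $(\mathbf{T},\mathbf{N},\mathbf{d})$, the identity $\mathcal{N}[\eta,\nth] = \mathcal{A}[G']\cdot\mathcal{N}[\eta',0]$ follows.

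I do not anticipate a genuine obstacle: the whole argument lives at the level of the covariance-matrix data, and Theorem~\ref{theorem:Gaussian channel synthesis} does the heavy lifting. The only points worth stating explicitly are (i) the well-formedness check $G'\ge 1$ and $\eta'\in[0,1]$, which ensures the claimed pure-loss channel and amplification actually exist, and (ii) the remark that agreement of $(\mathbf{T},\mathbf{N},\mathbf{d})$ suffices to identify a Gaussian channel — both routine.
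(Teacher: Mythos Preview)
Your proposal is correct and follows exactly the template the paper uses for the companion results (Corollary~\ref{corollary:loss plus amplification is displacement post-amplification}, Lemma~\ref{lemma:loss plus amplification is displacement pre-amplification}, and Lemma~\ref{lemma:Gaussian loss channel decomposed into pure loss and amplficiation pre-amplification}): read off the $(\mathbf{T},\mathbf{N},\mathbf{d})$ specifications, apply Theorem~\ref{theorem:Gaussian channel synthesis}, and match against the target channel. The paper does not actually spell out a proof for this particular lemma (it is cited from \cite{Sharma2017}), but your argument is precisely what the paper would have written, and your additional well-formedness check ($G'\ge 1$, $\eta'\in[0,1]$) is a nice touch.
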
 

In \cite{Sharma2017}, Lemma \ref{lemma:Gaussian loss channel decomposed into pure loss and amplficiation post-amplification} was combined with the data-processing argument to upper bound the quantum capacity of Gaussian loss channel $\mathcal{N}[\eta,\nth]$. In subsection \ref{subsection:Improved upper bound of Gaussian loss channel capacity}, we give a tighter upper bound by introducing a slight modification of this approach, i.e., reversing the order of bosonic pure-loss channel and amplification (see Theorem \ref{theorem:improved upper bound of quantum capacity energy unconstrained} and Theorem \ref{theorem:improved upper bound of quantum capacity energy constrained}).  
\begin{lemma}
Reverse the order of bosonic pure-loss channel and quantum-limited amplification in Lemma \ref{lemma:Gaussian loss channel decomposed into pure loss and amplficiation post-amplification}. Then, 
\begin{equation}
\mathcal{N}[\eta,\nth] = \mathcal{N}[\tilde{\eta}',0]\cdot \mathcal{A}[\tilde{G}'], 
\end{equation}
where $\tilde{\eta}' = \eta- (1-\eta)\nth$ and $\tilde{G}' = \frac{\eta}{\eta-(1-\eta)\nthtiny}$.  \label{lemma:Gaussian loss channel decomposed into pure loss and amplficiation pre-amplification}
\end{lemma}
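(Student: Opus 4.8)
The plan is to apply Theorem \ref{theorem:Gaussian channel synthesis} directly, mirroring the proofs of Corollary \ref{corollary:loss plus amplification is displacement post-amplification} and Lemma \ref{lemma:loss plus amplification is displacement pre-amplification}, but now ordering the constituent channels as prescribed by the right-hand side $\mathcal{N}[\tilde{\eta}',0]\cdot\mathcal{A}[\tilde{G}']$. Concretely, I would set $\mathcal{N}_1 = \mathcal{A}[\tilde{G}']$, with specification $(\mathbf{T}_1,\mathbf{N}_1,\mathbf{d}_1) = (\sqrt{\tilde{G}'}\,\mathbf{I}_2,\ \tfrac{\tilde{G}'-1}{2}\mathbf{I}_2,\ \mathbf{0})$, and $\mathcal{N}_2 = \mathcal{N}[\tilde{\eta}',0]$, with specification $(\mathbf{T}_2,\mathbf{N}_2,\mathbf{d}_2) = (\sqrt{\tilde{\eta}'}\,\mathbf{I}_2,\ \tfrac{1-\tilde{\eta}'}{2}\mathbf{I}_2,\ \mathbf{0})$, the latter following from the characterization of $\mathcal{N}[\eta,\nth]$ with $\nth$ set to $0$.

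Next I would evaluate the three quantities output by Theorem \ref{theorem:Gaussian channel synthesis}. The transmissivity part is $\mathbf{T} = \mathbf{T}_2\mathbf{T}_1 = \sqrt{\tilde{\eta}'\tilde{G}'}\,\mathbf{I}_2$, and since $\tilde{\eta}'\tilde{G}' = \big(\eta-(1-\eta)\nth\big)\cdot\tfrac{\eta}{\eta-(1-\eta)\nth} = \eta$, this gives $\mathbf{T} = \sqrt{\eta}\,\mathbf{I}_2$, as needed. The displacement part is immediate: $\mathbf{d} = \mathbf{T}_2\mathbf{d}_1 + \mathbf{d}_2 = \mathbf{0}$. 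The only computation requiring any care is the noise term: $\mathbf{N} = \mathbf{T}_2\mathbf{N}_1\mathbf{T}_2^T + \mathbf{N}_2 = \tilde{\eta}'\tfrac{\tilde{G}'-1}{2}\mathbf{I}_2 + \tfrac{1-\tilde{\eta}'}{2}\mathbf{I}_2$. Using $\tilde{\eta}'(\tilde{G}'-1) = \tilde{\eta}'\tilde{G}' - \tilde{\eta}' = \eta - \tilde{\eta}' = (1-\eta)\nth$ and $1-\tilde{\eta}' = (1-\eta)(\nth+1)$, this collapses to $\mathbf{N} = \tfrac{1}{2}\big[(1-\eta)\nth + (1-\eta)(\nth+1)\big]\mathbf{I}_2 = (1-\eta)(\nth+\tfrac{1}{2})\mathbf{I}_2$, which is exactly the noise of $\mathcal{N}[\eta,\nth]$. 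Since a Gaussian channel is fixed by its triple $(\mathbf{T},\mathbf{N},\mathbf{d})$, the identity follows.

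The one genuine subtlety, and the point I would flag explicitly, is the range of parameters over which the right-hand side is built from legitimate channels: $\mathcal{A}[\tilde{G}']$ requires $\tilde{G}'\ge 1$ and $\mathcal{N}[\tilde{\eta}',0]$ requires $\tilde{\eta}'\in[0,1]$. From $\tilde{\eta}' = \eta-(1-\eta)\nth \le \eta \le 1$ and $\tilde{G}' = \eta/\tilde{\eta}'$ one sees $\tilde{G}'\ge 1 \iff (1-\eta)\nth\ge 0$, which always holds, whereas $\tilde{\eta}'\ge 0$ fails once $\nth > \eta/(1-\eta)$. Hence the decomposition is valid precisely in the regime $(1-\eta)\nth\le\eta$, degenerating to $\tilde{\eta}'\to 0$, $\tilde{G}'\to\infty$ at the boundary; I would note that this is exactly the regime in which the pre-amplification bounds of Theorems \ref{theorem:improved upper bound of quantum capacity energy unconstrained} and \ref{theorem:improved upper bound of quantum capacity energy constrained} are applied. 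There is no real obstacle here: the whole proof is a two-line invocation of Gaussian channel synthesis together with this elementary domain check.
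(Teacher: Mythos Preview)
Your proposal is correct and follows essentially the same approach as the paper: both set $\mathcal{N}_1=\mathcal{A}[\tilde{G}']$, $\mathcal{N}_2=\mathcal{N}[\tilde{\eta}',0]$, invoke Theorem \ref{theorem:Gaussian channel synthesis}, and verify that the resulting $(\mathbf{T},\mathbf{N},\mathbf{d})$ matches that of $\mathcal{N}[\eta,\nth]$. Your additional remark on the domain of validity $\tilde{\eta}'\ge 0\iff (1-\eta)\nth\le\eta$ is a helpful observation that the paper's proof omits.
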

\begin{proof}
Recall that $\mathcal{A}[\tilde{G}']$ and $\mathcal{N}[\tilde{\eta}',0]$ are characterized by $(\mathbf{T}_{1},\mathbf{N}_{1},\mathbf{d}_{1}) = (\sqrt{\tilde{G}'}\mathbf{I}_{2},\frac{(\tilde{G}'-1)}{2}\mathbf{I}_{2},\mathbf{0})$ and $(\mathbf{T}_{2},\mathbf{N}_{2},\mathbf{d}_{2}) = (\sqrt{\tilde{\eta}'}\mathbf{I}_{2},\frac{1}{2}(1-\tilde{\eta}')\mathbf{I}_{2},\mathbf{0})$, respectively. Then, the synthesized channel is a Gaussian channel characterized by $\mathbf{T} = \mathbf{T}_{2}\mathbf{T}_{1} = \sqrt{\tilde{\eta}'\tilde{G}'}\mathbf{I}_{2} = \sqrt{\eta}\mathbf{I}_{2}$, $\mathbf{d}=\mathbf{T}_{2}\mathbf{d}_{1}+\mathbf{d}_{2}=\mathbf{0}$ and 
\begin{align}
\mathbf{N} &= \mathbf{T}_{2}\mathbf{N}_{1}\mathbf{T}_{2}^{T}+\mathbf{N}_{2}
\nonumber\\ 
&=  \tilde{\eta}'\frac{(\tilde{G}'-1)}{2}\mathbf{I}_{2} +   \frac{1}{2}(1-\tilde{\eta}')\mathbf{I}_{2} 
\nonumber\\
&=  \frac{1}{2}(1+\eta)\mathbf{I}_{2}  - \tilde{\eta}'\mathbf{I}_{2} =  (1-\eta)\Big{(}\nth+\frac{1}{2}\Big{)}\mathbf{I}_{2},  
\end{align}
i.e., identical specification to that of $\mathcal{N}[\eta,\nth]$. 
\end{proof}
We note that Lemma \ref{lemma:Gaussian loss channel decomposed into pure loss and amplficiation pre-amplification} was independently discovered and stated in Theorem 30 of \cite{Sharma2017} and in Lemma 1 of \cite{Rosati2018}. 

\section{Quantum capacity of Gaussian channels}
\label{section:Quantum capacity of Gaussian channels}

\subsection{Earlier results on Gaussian quantum channel capacity}

Let $\mathcal{N} : \mathcal{L}(\mathcal{H})\rightarrow \mathcal{L}(\mathcal{H})$ be a noisy quantum channel from an information sender to a receiver, dilated by $\mathcal{N}(\hat{\rho}) \equiv \textrm{Tr}_{E}[\hat{U}(\hat{\rho}\otimes \hat{\rho}_{E})\hat{U}^{\dagger}]$, where $\hat{U}$ is a unitary operator acting on $\mathcal{H}\otimes\mathcal{H}_{E}$, and $\mathcal{H}_{E}$ is the Hilbert space of the environment, causing the channel noise. Complementary channel $\mathcal{N}^{c}:\mathcal{L}(\mathcal{H})\rightarrow \mathcal{L}(\mathcal{H}_{E})$ is the channel from the sender to the environment, i.e., $\mathcal{N}^{c}(\hat{\rho}) \equiv \textrm{Tr}_{S}[\hat{U}(\hat{\rho}\otimes \hat{\rho}_{E})\hat{U}^{\dagger}]$. Coherent information of a channel is defined as $Q(\mathcal{N}) \equiv \max_{\hat{\rho}\in\mathcal{D}(\mathcal{H})} I_{c}(\hat{\rho},\mathcal{N})$, where $I_{c}(\hat{\rho},\mathcal{N}) \equiv  H(\mathcal{N}(\hat{\rho})) - H(\mathcal{N}^{c}(\hat{\rho}))$ and $H(\hat{\rho})\equiv -\textrm{Tr}[\hat{\rho}\log\hat{\rho}]$ is the entropy of a state $\hat{\rho}$, where $\log$ is the logarithm with base $2$. Quantum capacity of a channel $\mathcal{N}$ equals to the regularized coherent information of the channel: $Q_{\reg}(\mathcal{N}) \equiv  \lim_{n\rightarrow \infty}\frac{1}{n}Q(\mathcal{N}^{\otimes n})$ \cite{Schumacher1996,Lloyd1997,Devetak2005}. Coherent information might be superadditive, i.e., $Q(\mathcal{N}\otimes \mathcal{N}') \ge Q(\mathcal{N}) + Q(\mathcal{N}')$, and thus the one-shot expression $Q(\mathcal{N})$ only lower bounds the true quantum capacity: $Q(\mathcal{N})\le Q_{\reg}(\mathcal{N})$. 

A quantum channel $\mathcal{N}$ is degradable iff there exists a degrading channel $\mathcal{D}:\mathcal{L}(\mathcal{H})\rightarrow \mathcal{L}(\mathcal{H}_{E})$ such that $\mathcal{N}^{c} = \mathcal{D}\cdot \mathcal{N}$, i.e., iff the receiver can simulate complementary channel. Coherent information of degradable channels are additive (see, e.g., Theorem 12.5.4 in \cite{Wilde2013}), and thus the one-shot coherent information fully characterizes their quantum capacity: $Q_{\reg}(\mathcal{N})=Q(\mathcal{N})$ if $\mathcal{N}$ is degradable. A quantum channel $\mathcal{N}$ is anti-degradable iff there exists a degrading channel $\mathcal{D}:\mathcal{L}(\mathcal{H}_{E})\rightarrow \mathcal{L}(\mathcal{H})$ such that $\mathcal{N} = \mathcal{D}\cdot \mathcal{N}^{c}$, i.e., iff the environment can simulate the channel from sender to receiver. Anti-degradable channels have zero quantum capacity and thus are unable to support reliable information transmission.    

Complementary channel of the bosonic pure-loss channel with transmissivity $\eta$ is also a bosonic pure-loss channel: $\mathcal{N}^{c}[\eta,0] = \mathcal{N}[1-\eta,0]$. Since bosonic pure-loss channels are multiplicative $\mathcal{N}[\eta_{1},0]\cdot\mathcal{N}[\eta_{2},0]=\mathcal{N}[\eta_{1}\eta_{2},0]$, as can be justified by applying Theorem \ref{theorem:Gaussian channel synthesis}, they are degradable if $\eta \in [\frac{1}{2},1]$, i.e., $\mathcal{N}^{c}[\eta,0] = \mathcal{N}[1-\eta,0] = \mathcal{N}[\frac{1-\eta}{\eta},0]\cdot \mathcal{N}[\eta,0]$ (similarly, anti-degradable if $\eta\in[0,\frac{1}{2}]$). 

Evaluation of the coherent information of a channel involves optimization over all input state $\hat{\rho}$. For degradable Gaussian channels, it was proven that optimization over all Gaussian states is sufficient \cite{Wolf2007}. For Gaussian loss channels, optimization of one-shot coherent information over thermal states was performed in \cite{Holevo2001}. As was pointed out in Remark 17 of \cite{Wilde2016}, however, sufficiency of thermal optimizer is yet to be justified. Here, we justify this for bosonic pure-loss channels by using rotational invariance of bosonic pure-loss channels and concavity of coherent information for degradable channels. 

Define the Gaussian unitary rotation channel $\mathcal{U}[\theta]$ as $\mathcal{U}[\theta](\hat{\rho}) \equiv \hat{U}(\theta)\hat{\rho}\hat{U}^{\dagger}(\theta)$, where $\hat{U}(\theta)\equiv e^{i\theta \hat{n}}$ is the phase rotation operator. Bosonic pure-loss channels are invariant under the phase rotation: $\mathcal{U}[\theta]\mathcal{N}[\eta,0] = \mathcal{N}[\eta,0]\mathcal{U}[\theta]$, as can be verified by specializing Theorem \ref{theorem:Gaussian channel synthesis} to $\mathcal{N}[\eta,0]$ and $\mathcal{U}[\theta]$, where the specification of the latter channel is given by $(\mathbf{T},\mathbf{N},\mathbf{d}) = (\mathbf{R}(\theta),\mathbf{0},\mathbf{0})$ and $\mathbf{R}(\theta)$ is given in Eq.\eqref{eq:symplectic transformation phase rotation}. We then prove the following theorem. 
\begin{theorem}[Sufficiency of thermal optimizer for bosonic pure-loss channel capacity]
Let $\mathcal{N}[\eta,0]$ be a bosonic pure-loss channel with transimissivity $\eta\in[0,1]$ (see Definition \ref{definition:Gaussian loss channels}) and $\hat{\rho}=\sum_{m,n=0}^{\infty}\rho_{mn}|m\rangle\langle n|$ be an arbitrary bosonic state represented in the Fock basis. Then, 
\begin{equation}
I_{c}(\hat{\rho},\mathcal{N}[\eta,0]) \le I_{c}(\sum_{n=0}^{\infty}\rho_{nn}|n\rangle\langle n|,\mathcal{N}[\eta,0]). \label{eq:thermal optimizer concavity rotation twirling}
\end{equation}
Diagonal states in the Fock basis are thus sufficient for the maximization of coherent information of bosonic pure-loss channels. Combining this with the sufficiency of Gaussian input states \cite{Wolf2007}, it follows that coherent information of bosonic pure-loss channel is maximized by thermal states. \label{theorem:thermal optimizer is enough for bosonic pure loss channel capacity}
\end{theorem}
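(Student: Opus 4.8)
The plan is to exploit the phase-rotation covariance of the bosonic pure-loss channel together with concavity of coherent information for degradable channels, via a twirling argument. First I would record the covariance of the full dilation. In the Stinespring dilation of $\mathcal{N}[\eta,0]$ through the beam splitter $\hat{B}(\eta)$ on modes $1,2$ with mode $2$ in vacuum, the total photon number $\hat{n}_{1}+\hat{n}_{2}$ is conserved, so $e^{i\theta(\hat{n}_{1}+\hat{n}_{2})}$ commutes with $\hat{B}(\eta)$; since $e^{i\theta\hat{n}_{2}}|0\rangle_{2}=|0\rangle_{2}$, tracing out mode $2$ (resp.\ mode $1$) yields $\mathcal{N}[\eta,0]\,\mathcal{U}[\theta]=\mathcal{U}[\theta]\,\mathcal{N}[\eta,0]$ (the identity already noted before the theorem) and, on the environment, $\mathcal{N}^{c}[\eta,0]\,\mathcal{U}[\theta]=\mathcal{U}[\theta]\,\mathcal{N}^{c}[\eta,0]$, with $\mathcal{U}[\theta]$ the phase rotation on the relevant output mode. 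Because the von Neumann entropy is invariant under unitaries, these two covariances immediately give $I_{c}(\hat{U}(\theta)\hat{\rho}\hat{U}^{\dagger}(\theta),\mathcal{N}[\eta,0])=I_{c}(\hat{\rho},\mathcal{N}[\eta,0])$ for every $\theta$ and every input $\hat{\rho}$.

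Next I would twirl. A direct computation gives $\frac{1}{2\pi}\int_{0}^{2\pi}d\theta\,\hat{U}(\theta)\hat{\rho}\hat{U}^{\dagger}(\theta)=\sum_{n=0}^{\infty}\rho_{nn}|n\rangle\langle n|$, since the off-diagonal term $\rho_{mn}|m\rangle\langle n|$ acquires the phase $e^{i\theta(m-n)}$, which integrates to $\delta_{mn}$. For $\eta\in[\tfrac12,1]$ the channel $\mathcal{N}[\eta,0]$ is degradable, hence $\hat{\rho}\mapsto I_{c}(\hat{\rho},\mathcal{N}[\eta,0])$ is concave; Jensen's inequality applied to the twirl average together with the covariance identity of the previous step gives
\[ I_{c}\!\left(\sum_{n}\rho_{nn}|n\rangle\langle n|,\,\mathcal{N}[\eta,0]\right)\ \ge\ \frac{1}{2\pi}\int_{0}^{2\pi}\!d\theta\; I_{c}\!\left(\hat{U}(\theta)\hat{\rho}\hat{U}^{\dagger}(\theta),\,\mathcal{N}[\eta,0]\right)\ =\ I_{c}(\hat{\rho},\mathcal{N}[\eta,0]), \]
which is Eq.\eqref{eq:thermal optimizer concavity rotation twirling}. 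For $\eta\in[0,\tfrac12)$ the channel is anti-degradable with $Q_{\reg}=0$, and since the vacuum---a Fock-diagonal state---already attains $I_{c}(|0\rangle\langle 0|,\mathcal{N}[\eta,0])=0=Q(\mathcal{N}[\eta,0])$, Fock-diagonal inputs still suffice for the maximization there. Finally, combining with the sufficiency of Gaussian inputs from \cite{Wolf2007}: a single-mode Gaussian state diagonal in the Fock basis is invariant under all phase rotations, which forces vanishing first moments and an isotropic covariance matrix, i.e.\ it is a thermal state; hence the maximum of $I_{c}$ is attained on thermal states.

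The main obstacle is the concavity step: $I_{c}(\cdot,\mathcal{N})$ is concave for degradable channels but not in general, which is why the argument is split at $\eta=\tfrac12$ and the anti-degradable regime is disposed of by the zero-capacity observation rather than the twirl inequality. A minor technical point to check is the legitimacy of exchanging the $\theta$-integral with the channel and with the entropy in infinite dimensions, and lower semicontinuity of the entropy along the orbit; these are routine since $I_{c}$ is bounded on the norm-compact orbit $\{\hat{U}(\theta)\hat{\rho}\hat{U}^{\dagger}(\theta)\}_{\theta}$ by a quantity controlled by the entropy of $\hat{\rho}$.
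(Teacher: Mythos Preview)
Your proof is correct and follows essentially the same route as the paper's: phase-rotation covariance of the channel and its complement, plus concavity of $I_c$ for degradable channels, applied via a twirling/Jensen argument with the uniform measure on $[0,2\pi)$. Your version is in fact slightly more careful than the paper's in explicitly splitting off the anti-degradable regime $\eta<\tfrac12$ (where the concavity step is unavailable) and in spelling out why ``Gaussian and Fock-diagonal'' forces a thermal state.
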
 
\begin{proof}
Define $\hat{\rho}_{\theta} \equiv \mathcal{U}[\theta]\hat{\rho} = e^{i\theta\hat{n}}\hat{\rho}e^{-i\theta\hat{n}}$ and let $p(\theta)$ be a probability density defined over $\theta\in [0,2\pi)$. Since bosonic pure-loss channel is degradable, 
its coherent information $I_{c}(\hat{\rho},\mathcal{N}[\eta,0])$ is concave in the input state (see Theorem 12.5.6 in \cite{Wilde2013}): 
\begin{equation}
\int_{0}^{2\pi} d\theta p(\theta)I_{c}(\hat{\rho}_{\theta},\mathcal{N}[\eta,0]) \le I_{c}\Big{(}\int_{0}^{2\pi} d\theta p(\theta)\hat{\rho}_{\theta},\mathcal{N}[\eta,0]\Big{)},  \label{eq:thermal sufficiency proof intermediate}
\end{equation}
Rotational invariance of Gaussian loss channel implies $\mathcal{N}[\eta,0](\hat{\rho}_{\theta}) = \mathcal{U}[\theta]\mathcal{N}[\eta,0](\hat{\rho})$ and similarly $\mathcal{N}^{c}[\eta,0](\hat{\rho}_{\theta}) = \mathcal{U}[\theta]\mathcal{N}^{c}[\eta,0](\hat{\rho})$. Since the quantum entropy is invariant under unitary operation, we have $I_{c}(\hat{\rho}_{\theta},\mathcal{N}[\eta,0]) = I_{c}(\hat{\rho},\mathcal{N}[\eta,0])$. The left hand side of Eq.\eqref{eq:thermal sufficiency proof intermediate} is then given by $I_{c}(\hat{\rho},\mathcal{N}[\eta,0])$ since $\int_{0}^{2\pi}p(\theta)=1$. Choosing $p(\theta)$ to be a flat distribution $p(\theta) = 1/(2\pi)$, we find 
\begin{align}
\int_{0}^{2\pi} d\theta p(\theta) \hat{\rho}_{\theta} &= \sum_{m,n=0}^{\infty}\frac{1}{2\pi} \int_{0}^{2\pi}  d\theta e^{i\theta(m-n)} \rho_{mn}|m\rangle\langle n|
\nonumber\\
&= \sum_{n=0}^{\infty} \rho_{nn}|n\rangle\langle n|, \label{eq:rotation twirling} 
\end{align}
where we used $\int_{0}^{2\pi}  d\theta e^{i\theta(m-n)}=2\pi \delta_{mn}$ to derive the last equality. Plugging Eq.\eqref{eq:rotation twirling} into the right hand side of Eq.\eqref{eq:thermal sufficiency proof intermediate}, Eq.\eqref{eq:thermal optimizer concavity rotation twirling} follows. 

We emphasize that the phase rotation $\mathcal{U}[\theta]$ does not change the average photon number of a state. Thus, the above argument also applies to an energy-constrained scenario, where average photon number is bounded from above by $\bar{n}$.   
\end{proof}

Note that the thermal state $\hat{\rho}_{\bar{n}}$ with average photon number $\bar{n}$ is transformed by bosonic pure-loss channel into another thermal state with $\bar{n}'=\eta \bar{n}$: $\mathcal{N}[\eta,0](\hat{\rho}_{\bar{n}}) = \hat{\rho}_{\eta\bar{n}}$. Similarly, $\mathcal{N}^{c}[\eta,0](\hat{\rho}_{\bar{n}}) = \hat{\rho}_{(1-\eta)\bar{n}}$. Since $H(\hat{\rho}_{\bar{n}}) = g(\bar{n})$, where 
\begin{equation}
g(x)\equiv (x+1)\log(x+1)-x\log x
\end{equation}
(see Eq.\eqref{eq:thermal state entropy} and the text below in Appendix \ref{section:Bosonic mode, Gaussian states and Gaussian unitary operations}), Theorem \ref{theorem:thermal optimizer is enough for bosonic pure loss channel capacity} leads to the following energy-constrained quantum capacity of bosonic pure-loss channel: (Eq.(1) in \cite{Wilde2016})
\begin{align}
Q^{n \le \bar{n}}_{\reg}(\mathcal{N}[\eta,0]) &= Q^{n \le \bar{n}}(\mathcal{N}[\eta,0]) 
\nonumber\\
&= \max[ g(\eta \bar{n})-g((1-\eta)\bar{n}) , 0],  \label{eq:quantum capacity bosonic pure loss channel enenrgy constrained}
\end{align}
where $\bar{n}$ is the maximum allowed photon number per bosonic mode and should not to be confused with $\nth$ in $\mathcal{N}[\eta,\nth]$. When deriving the last equality of Eq.\eqref{eq:quantum capacity bosonic pure loss channel enenrgy constrained}, we used the fact that $g(\eta \bar{n})-g((1-\eta)\bar{n})$ increases monotonically in $\bar{n}$ (see Remark 21 of \cite{Sharma2017}) and thus the optimal input state (with average photon number less than $\bar{n}$) which maximizes the coherent information is given by $\hat{\rho}=\hat{\rho}_{\bar{n}}$. In the energy-unconstrained case $\bar{n}\rightarrow \infty$, Eq.\eqref{eq:quantum capacity bosonic pure loss channel enenrgy constrained} reduces to 
\begin{equation}
Q_{\reg}(\mathcal{N}[\eta,0]) = Q(\mathcal{N}[\eta,0])  = \max\Big{[} \log\Big{(}\frac{\eta}{1-\eta}\Big{)}, 0\Big{]}.   \label{eq:quantum capacity bosonic pure loss channel enenrgy un-constrained}
\end{equation} 

The general Gaussian loss channel $\mathcal{N}[\eta,\nth]$ (see Definition \ref{definition:Gaussian loss channels}) with $\nth\neq 0$ is not degradable nor anti-degradable \cite{Caruso2006B},\cite{Holevo2007}, and thus its coherent information may not be additive. A lower bound of quantum capacity of Gaussian loss channel can be obtained by evaluating the one-shot coherent information with a thermal input state \cite{Holevo2001}. In the energy-constrained case $(n\le\bar{n})$,
\begin{align}
I_{c}(\hat{\rho}_{\bar{n}},\mathcal{N}[\eta,\nth]) &=  g(\eta \bar{n}+(1-\eta)\nth)
\nonumber\\
&- g\Big{(}\frac{D+(1-\eta)(\bar{n}-\nth) -1}{2}\Big{)} 
\nonumber\\
&- g\Big{(}\frac{D-(1-\eta)(\bar{n}-\nth) -1}{2}\Big{)} , 
\end{align}
where $D\equiv \sqrt{((1+\eta)\bar{n}+(1-\eta)\nth+1)^{2}-4\eta\bar{n}(\bar{n}+1)}$. In $\bar{n}\rightarrow\infty$ limit, this reduces to 
\begin{equation}
\lim_{\bar{n}\rightarrow\infty } I_{c}(\hat{\rho}_{\bar{n}},\mathcal{N}[\eta,\nth]) = \log\Big{(}\frac{\eta}{1-\eta}\Big{)} -g(\nth).  
\end{equation}
The best known lower bound of the quantum capacity of Gaussian loss channel is thus   
\begin{align}
Q^{n\le \bar{n}}_{\reg}(\mathcal{N}[\eta,\nth]) &\ge  \max [ I_{c}(\hat{\rho}_{\bar{n}},\mathcal{N}[\eta,\nth]) , 0],
\nonumber\\
Q_{\reg}(\mathcal{N}[\eta,\nth]) &\ge  \max \Big{[} \lim_{\bar{n}\rightarrow\infty} I_{c}(\hat{\rho}_{\bar{n}},\mathcal{N}[\eta,\nth]) , 0\Big{]}, \label{eq:quantum capacity Gaussian loss channel lower bound}
\end{align}
in the energy-constrained and unconstrained case, respectively. 

We note that Gaussian random displacement channel can be understood as a certain limit of Gaussian loss channel $\mathcal{N}_{B_{2}}[\sigma^{2}] = \lim_{\eta\rightarrow 1} \mathcal{N}[\eta,\frac{\sigma^{2}}{1-\eta}-\frac{1}{2}]$. Thus, lower bound of its quantum capacity is given by 
\begin{equation}
Q_{\reg}(\mathcal{N}_{B_{2}}[\sigma^{2}]) \ge \max \Big{[} \log\Big{(} \frac{1}{e\sigma^{2}} \Big{)} ,0 \Big{]} \label{eq:quantum capacity Gaussian displacement channel lower bound}
\end{equation}
in the energy-unconstrained case. 

General upper bound of quantum capacity was introduced in \cite{Holevo2001} and applied to Gaussian loss channels (Eq.(5.7) in \cite{Holevo2001}, translated into our notation): 
\begin{align}
Q_{\reg}(\mathcal{N}[\eta,\nth])  &\le  Q_{\scriptsize{\textrm{HW}}}(\eta,\nth)
\nonumber\\
&\equiv \max\Big{[} \log\Big{(} \frac{1+\eta}{(1-\eta)(2\nth + 1)} \Big{)} ,0\Big{]}. \label{eq:Holevo upper bound} 
\end{align}
However, this upper bound is not tight: In the case of bosonic pure-loss channel, $Q_{\scriptsize{\textrm{HW}}}(\eta,0) = \log ( \frac{1+\eta}{1-\eta}) > Q(\mathcal{N}[\eta,0])$ for all $\eta \in [0,1)$, where $Q(\mathcal{N}[\eta,0])$ is given in Eq.\eqref{eq:quantum capacity bosonic pure loss channel enenrgy un-constrained}. Recently, three new upper bounds were obtained in \cite{Sharma2017}, where one of them is based on Lemma \ref{lemma:Gaussian loss channel decomposed into pure loss and amplficiation post-amplification} and data-processing argument, and the other two are based on the notion of approximate degradability of quantum channels developed in \cite{Sutter2017}. Here, we only present the data-processing bound, in the energy-constrained form, while referring to the original paper for the other two approximate degradability bounds: (Theorems 19 and 24 in \cite{Sharma2017})
\begin{equation}
Q^{n\le \bar{n}}_{\reg}(\mathcal{N}[\eta,\nth]) \le Q^{n\le \bar{n}}_{\scriptsize{\textrm{DP}}}(\eta,\nth) \equiv Q^{n\le \bar{n}}(\mathcal{N}[\eta',0]), \label{eq:quantum capacity of Gaussian loss channel data processing upper bound post amplification}
\end{equation}
where $Q^{n\le \bar{n}}_{\scriptsize{\textrm{DP}}}(\eta,\nth)$ is the data-processing bound, $\eta' = \frac{\eta}{(1-\eta)\nthtiny+1}$ and $Q^{n\le \bar{n}}(\mathcal{N}[\eta',0])$ is given in Eq.\eqref{eq:quantum capacity bosonic pure loss channel enenrgy constrained}.
\begin{proof}
Since the regularized coherent information is an achievable quantum communication rate, there exists a set of encoding and decoding channels, denoted by $\lbrace \mathcal{E},\mathcal{D}\rbrace$, which achieves the communication rate $R = Q^{n\le \bar{n}}_{\reg}(\mathcal{N}[\eta,\nth])$ for Gaussian loss channel $\mathcal{N}[\eta,\nth]$. Since $\mathcal{N}[\eta,\nth] = \mathcal{A}[G']\cdot\mathcal{N}[\eta',0]$ (see Lemma \ref{lemma:Gaussian loss channel decomposed into pure loss and amplficiation post-amplification}), this implies that the encoding and decoding set $\lbrace \mathcal{E} , \mathcal{D}\cdot\mathcal{A}[G'] \rbrace$ achieves the rate $R = Q^{n\le \bar{n}}_{\reg}(\mathcal{N}[\eta,\nth])$ for the bosonic pure-loss channel $\mathcal{N}[\eta',0]$. Since the achievable rate $R$ is upper bounded by the quantum capacity $Q^{n\le \bar{n}}(\mathcal{N}[\eta',0])$, Eq.\eqref{eq:quantum capacity of Gaussian loss channel data processing upper bound post amplification} follows. 
\end{proof}
Note that $Q^{n\le \bar{n}}_{\scriptsize{\textrm{DP}}}(\eta,\nth)$ converges to $Q^{n\le \bar{n}}(\mathcal{N}[\eta,0])$ in $\nth\rightarrow 0$ limit, as $\eta'=\eta$ when $\nth=0$. Similarly, an upper bound of the quantum capacity of Gaussian random displacement channel $\mathcal{N}_{B_{2}}[\sigma^{2}]$ can be obtained by combining the data-processing argument and Eq.\eqref{eq:pure loss plus amplification is displacement}: 
\begin{equation}
Q_{\reg}(\mathcal{N}_{B_{2}}[\sigma^{2}]) \le \max \Big{[} \log\Big{(} \frac{1}{\sigma^{2}} \Big{)} ,0 \Big{]}. \label{eq:quantum capacity upper bound displacement channel less tight}
\end{equation}
Alternatively, this bound can be derived from Eq.\eqref{eq:Holevo upper bound} by taking the limit $\eta\rightarrow 1$ and $\nth \rightarrow \frac{\sigma^{2}}{1-\eta}-\frac{1}{2}$. 

\subsection{Improved upper bound of Gaussian loss channel capacity}
\label{subsection:Improved upper bound of Gaussian loss channel capacity}

Here, we improve the data-processing bound slightly by using Lemma \ref{lemma:Gaussian loss channel decomposed into pure loss and amplficiation pre-amplification}, instead of Lemma \ref{lemma:Gaussian loss channel decomposed into pure loss and amplficiation post-amplification}, for the decomposition of Gaussian loss channel.  

\begin{theorem}[Improved data-processing bound]
In the energy-unconstrained case, quantum capacity of Gaussian loss channel $\mathcal{N}[\eta,\nth]$ (see Definition \ref{definition:Gaussian loss channels}) is upper bounded by the improved data-processing bound $Q_{\scriptsize{\textrm{IDP}}}(\eta,\nth)$: 
 
\begin{equation}
Q_{\reg}(\mathcal{N}[\eta,\nth]) \le  Q_{\scriptsize{\textrm{IDP}}}(\eta,\nth) \equiv   Q_{\reg}(\mathcal{N}[\tilde{\eta}',0]),  \label{eq:improved upper bound energy unconstrained less simplified} 
\end{equation}
where $\tilde{\eta}' = \eta-(1-\eta)\nth$ and $Q_{\reg}(\mathcal{N}[\tilde{\eta}',0])$ is given in Eq.\eqref{eq:quantum capacity bosonic pure loss channel enenrgy un-constrained}.  improved data-processing bound $Q_{\scriptsize{\textrm{IDP}}}(\eta,\nth) $ simplifies to 
\begin{equation}
Q_{\scriptsize{\textrm{IDP}}}(\eta,\nth)  = \max\Big{[} \log \Big{(} \frac{\eta-(1-\eta)\nth}{(1-\eta)(\nth+1)} \Big{)},0 \Big{]} . \label{eq:improved upper bound of quantum capacity energy unconstrained simplified}
\end{equation}
\label{theorem:improved upper bound of quantum capacity energy unconstrained}
\end{theorem}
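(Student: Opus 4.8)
The plan is to mirror the data-processing argument used for Eq.\eqref{eq:quantum capacity of Gaussian loss channel data processing upper bound post amplification}, but now feeding in the \emph{pre}-amplification decomposition of Lemma \ref{lemma:Gaussian loss channel decomposed into pure loss and amplficiation pre-amplification} instead of the post-amplification one of Lemma \ref{lemma:Gaussian loss channel decomposed into pure loss and amplficiation post-amplification}. First I would invoke the fact that $Q_{\reg}(\mathcal{N}[\eta,\nth])$ is an achievable rate, so there is a sequence of encoding/decoding pairs $\lbrace \mathcal{E},\mathcal{D}\rbrace$ achieving $R=Q_{\reg}(\mathcal{N}[\eta,\nth])$ over $\mathcal{N}[\eta,\nth]$. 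Since $\mathcal{N}[\eta,\nth] = \mathcal{N}[\tilde{\eta}',0]\cdot\mathcal{A}[\tilde{G}']$ with $\tilde{\eta}'=\eta-(1-\eta)\nth$ and $\tilde{G}'=\eta/(\eta-(1-\eta)\nth)$, the composite pair $\lbrace \mathcal{A}[\tilde{G}']\cdot\mathcal{E},\,\mathcal{D}\rbrace$ achieves the \emph{same} rate $R$ over the bosonic pure-loss channel $\mathcal{N}[\tilde{\eta}',0]$ — the pre-amplification step is absorbed into the encoder. Because any achievable rate is at most the quantum capacity, $R\le Q_{\reg}(\mathcal{N}[\tilde{\eta}',0])$, which is exactly Eq.\eqref{eq:improved upper bound energy unconstrained less simplified}. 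This is the heart of the argument and is essentially identical in structure to the earlier data-processing proof; the only new ingredient is which of the two decompositions is plugged in.

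Next I would simplify the bound using the known closed form of $Q_{\reg}(\mathcal{N}[\tilde{\eta}',0])$. By Eq.\eqref{eq:quantum capacity bosonic pure loss channel enenrgy un-constrained}, $Q_{\reg}(\mathcal{N}[\tilde{\eta}',0]) = \max[\log(\tilde{\eta}'/(1-\tilde{\eta}')),0]$. Substituting $\tilde{\eta}' = \eta-(1-\eta)\nth$ gives $1-\tilde{\eta}' = (1-\eta)(\nth+1)$, hence the ratio is $(\eta-(1-\eta)\nth)/((1-\eta)(\nth+1))$, yielding Eq.\eqref{eq:improved upper bound of quantum capacity energy unconstrained simplified}. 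I would note that when $\tilde{\eta}'\in[\tfrac12,1]$ the pure-loss channel $\mathcal{N}[\tilde{\eta}',0]$ is degradable so $Q_{\reg}=Q$ and the formula is the genuine capacity; and when $\tilde{\eta}'\le 0$ (which can happen once $\nth\ge \eta/(1-\eta)$) the bound is vacuously $0$, consistent with $\mathcal{N}[\eta,\nth]$ being anti-degradable in that regime.

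The one genuine subtlety — and the step I expect to require the most care — is the \emph{applicability} of Lemma \ref{lemma:Gaussian loss channel decomposed into pure loss and amplficiation pre-amplification}: the decomposition $\mathcal{N}[\eta,\nth]=\mathcal{N}[\tilde{\eta}',0]\cdot\mathcal{A}[\tilde{G}']$ only makes sense as a composition of \emph{bona fide} quantum channels when $\tilde{\eta}'=\eta-(1-\eta)\nth\in(0,1]$ and $\tilde{G}'=\eta/(\eta-(1-\eta)\nth)\ge 1$; both conditions hold precisely when $\nth < \eta/(1-\eta)$. So I would split the proof into the case $\nth < \eta/(1-\eta)$, where the data-processing argument above runs verbatim, and the complementary case $\nth \ge \eta/(1-\eta)$, where $\mathcal{N}[\eta,\nth]$ is anti-degradable and thus $Q_{\reg}(\mathcal{N}[\eta,\nth])=0$, which trivially satisfies the stated bound since the right-hand side of Eq.\eqref{eq:improved upper bound of quantum capacity energy unconstrained simplified} is then also $0$ (the argument of the logarithm being non-positive). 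Finally I would remark that the improvement over the old data-processing bound Eq.\eqref{eq:quantum capacity of Gaussian loss channel data processing upper bound post amplification} follows because $\tilde{\eta}' = \eta-(1-\eta)\nth < \eta/((1-\eta)\nth+1) = \eta'$ for all $\nth>0$, and $Q_{\reg}(\mathcal{N}[\cdot,0])$ is monotonically increasing in the transmissivity, so $Q_{\scriptsize{\textrm{IDP}}}(\eta,\nth)\le Q_{\scriptsize{\textrm{DP}}}(\eta,\nth)$ with strict inequality whenever both are positive.
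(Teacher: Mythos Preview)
Your proposal is correct and follows essentially the same approach as the paper: replace Lemma \ref{lemma:Gaussian loss channel decomposed into pure loss and amplficiation post-amplification} by Lemma \ref{lemma:Gaussian loss channel decomposed into pure loss and amplficiation pre-amplification}, absorb $\mathcal{A}[\tilde{G}']$ into the encoder, and invoke $R\le Q_{\reg}(\mathcal{N}[\tilde{\eta}',0])$. Your additional care in handling the regime $\nth \ge \eta/(1-\eta)$ (where $\tilde{\eta}'\le 0$ and the decomposition breaks down) via anti-degradability is a detail the paper's proof leaves implicit, but it does not constitute a different route.
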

\begin{proof}
The proof goes in the same way as above, except that Lemma \ref{lemma:Gaussian loss channel decomposed into pure loss and amplficiation post-amplification} is replaced by Lemma \ref{lemma:Gaussian loss channel decomposed into pure loss and amplficiation pre-amplification}: $\mathcal{N}[\eta,\nth] = \mathcal{N}[\tilde{\eta}',0]\cdot \mathcal{A}[\tilde{G}']$ with $\tilde{G}' = \eta/(\eta-(1-\eta)\nth)=\eta/\tilde{\eta}'$. Then, the encoding and decoding set $\lbrace  \mathcal{A}[\tilde{G}'] \cdot \mathcal{E} , \mathcal{D} \rbrace$ achieves the rate $R = Q_{\reg}(\mathcal{N}[\eta,\nth])$ for the bosonic pure-loss channel $\mathcal{N}[\tilde{\eta}',0]$. Since $R\le Q(\mathcal{N}[\tilde{\eta}',0])$, the theorem follows.   
\end{proof}

Note that 
\begin{equation}
Q_{\scriptsize{\textrm{IDP}}}(\eta,\nth) < Q_{\scriptsize{\textrm{DP}}}(\eta,\nth)\equiv\lim_{\bar{n}\rightarrow\infty} Q^{n\le \bar{n}}_{\scriptsize{\textrm{DP}}}(\eta,\nth) \label{eq:IDP better than DP in energy-unconstrained}
\end{equation}
for all $\eta\in[0,1)$, since 
\begin{align}
\eta' &= \frac{\eta}{(1-\eta)\nthtiny+1}
\nonumber\\
&= \eta - \frac{\eta (1-\eta)\nthtiny}{(1-\eta)\nthtiny+1} > \eta - (1-\eta)\nth = \tilde{\eta}'. \label{eq:comparison of etas in data-processing bound}
\end{align}
Thus, $Q_{\scriptsize{\textrm{IDP}}}(\eta,\nth)$ is a strictly tighter upper bound of Gaussian loss channel capacity than $Q_{\scriptsize{\textrm{DP}}}(\eta,\nth)$. We note that Theorem \ref{theorem:improved upper bound of quantum capacity energy unconstrained} was independently discovered in \cite{Rosati2018} (see Eqs.(39),(40) therein).

\begin{figure*}[!t]
\centering
\subfloat{\includegraphics[width=3.2in]{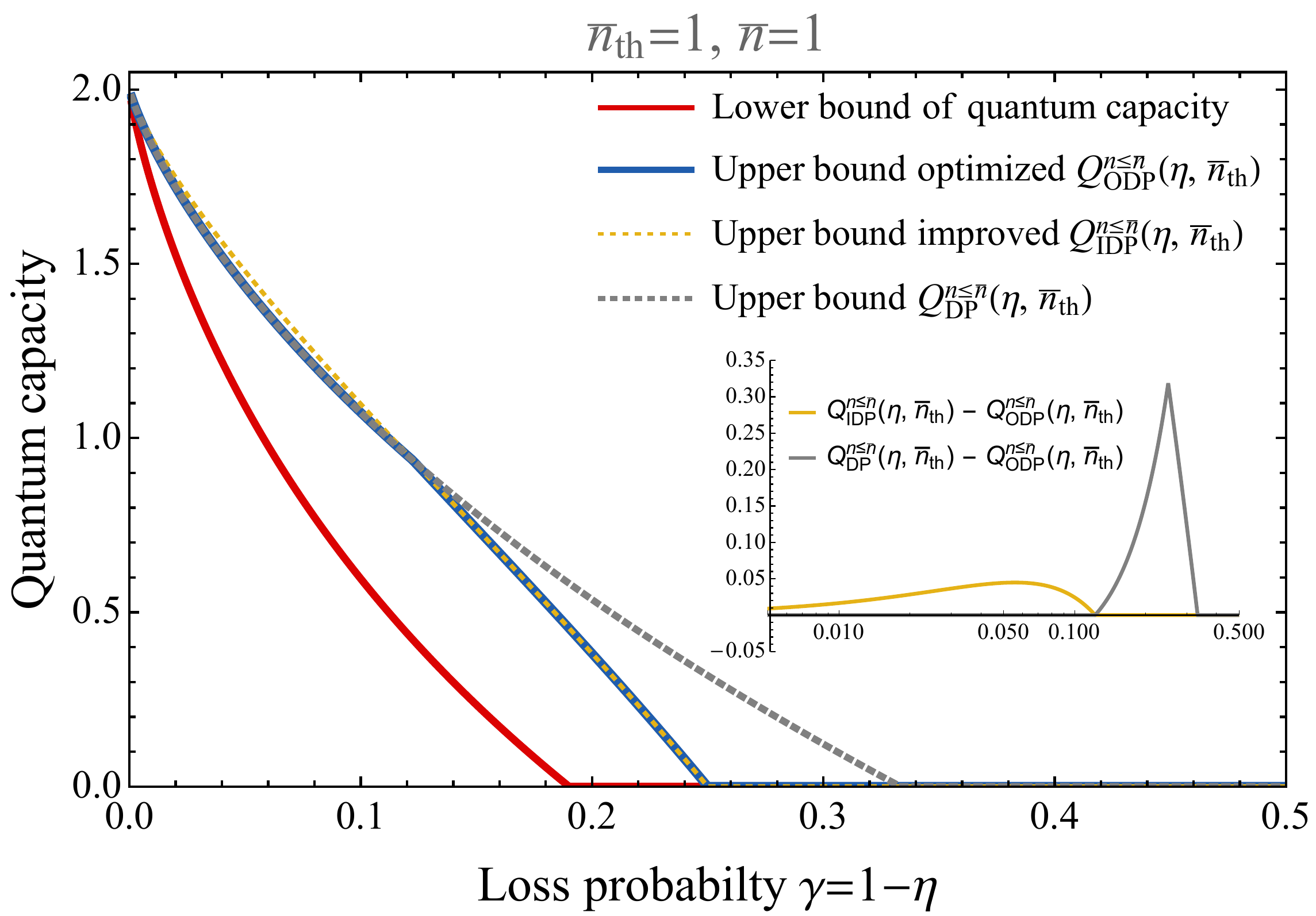}%
\label{fig:Quantum capacity bound combined nbar 01}}
\hfil
\subfloat{\includegraphics[width=3.2in]{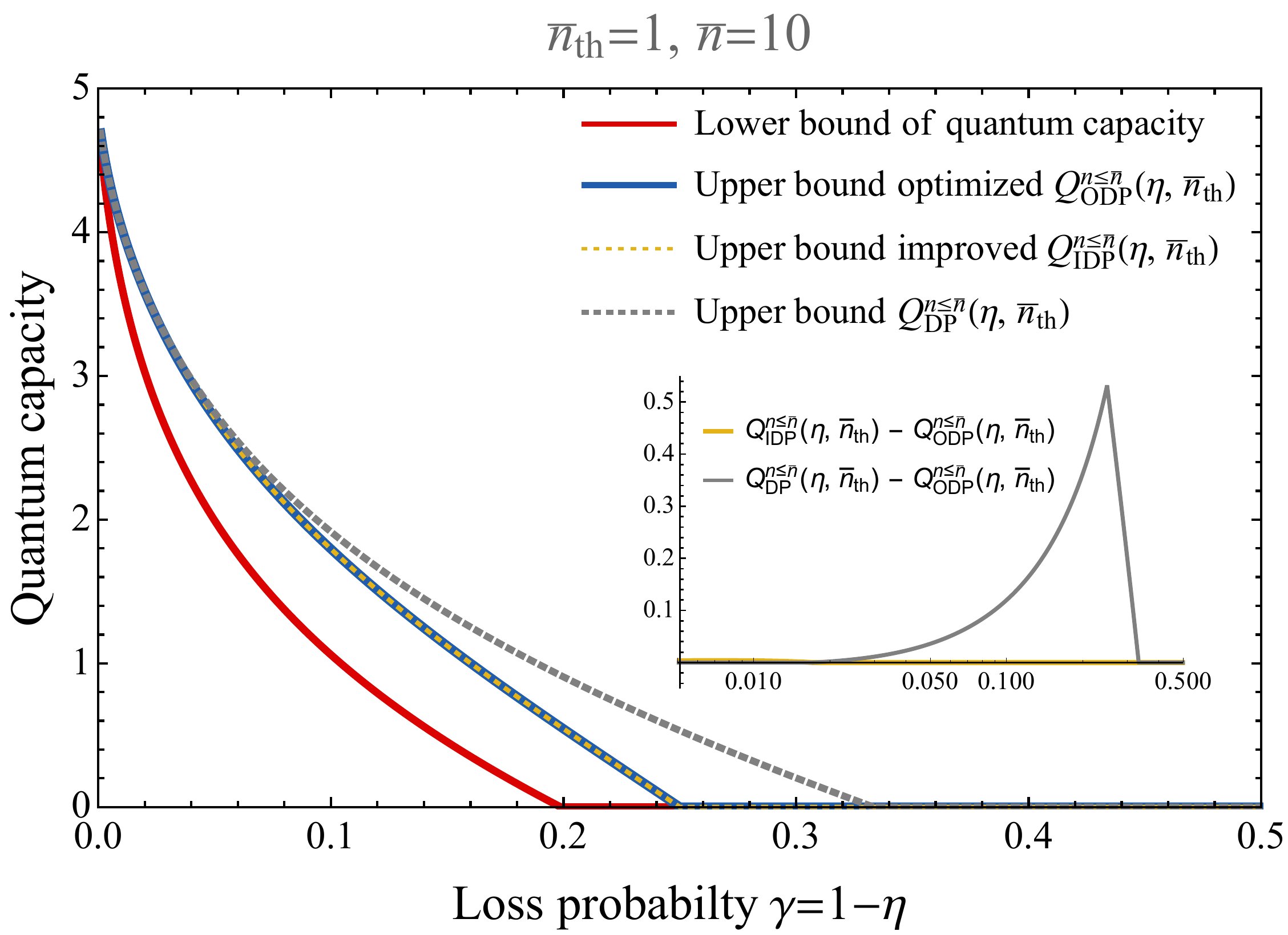}%
\label{fig:Quantum capacity bound combined nbar 10}}
\caption{Bounds of quantum capacity of Gaussian loss channel $\mathcal{N}[\eta,\nthtiny]$ for $(\nth,\bar{n})=(1,1)$ (left) and $(\nth,\bar{n})=(1,10)$ (right). Lower bound (red) is obtained by evaluating the coherent information with a thermal input state (see Eq.\eqref{eq:quantum capacity Gaussian loss channel lower bound}). The improved data-processing bound $Q^{n\le \bar{n}}_{\tiny{\textrm{IDP}}}(\eta,\nthtiny)$ (dashed yellow) is identical to the optimized data-processing bound $Q^{n\le \bar{n}}_{\tiny{\textrm{ODP}}}(\eta,\nthtiny)$ (blue) in a wide range of parameter space, and is very close to the optimal one even when it is not optimal. The data-processing bound $Q^{n\le \bar{n}}_{\tiny{\textrm{DP}}}(\eta,\nthtiny)$ (dashed grey) is optimal when $\eta \ge  \eta^{\star}(\nth,\bar{n})$ for some $\eta^{\star}(\nth,\bar{n})$ (e.g., $\eta^{\star}(1,1) = 0.8775\cdots$).   }
\label{fig:Quantum capacity bound combined}
\end{figure*}

In the energy-constrained case, more care is needed when combining Lemma \ref{lemma:Gaussian loss channel decomposed into pure loss and amplficiation pre-amplification} with the data-processing argument, because the quantum-limited amplification in $\mathcal{E}' = \mathcal{A}[\tilde{G}']\cdot\mathcal{E}$ increases the energy of the encoded state: 
\begin{equation}
\bar{n}' = \textrm{Tr}[\mathcal{A}[\tilde{G}'](\hat{\rho})\hat{n}] = \tilde{G}'\bar{n} +(\tilde{G}'-1), \label{eq:new photon number after amplification}
\end{equation}
as can be derived from the symplectic transformation of the two-mode squeezing (Eq.\eqref{eq:symplectic transformation two mode squeezer} and the text above in Appendix \ref{section:Bosonic mode, Gaussian states and Gaussian unitary operations}).

\begin{theorem}[Improved data-processing bound with energy constraint]
Energy-constrained quantum capacity of Gaussian loss channel $\mathcal{N}[\eta,\nth]$ is upper bounded by
\begin{align}
Q^{n\le \bar{n}}_{\reg}(\mathcal{N}[\eta,\nth]) &\le Q^{n\le \bar{n}}_{\scriptsize{\textrm{IDP}}}(\eta,\nth) 
\nonumber\\
&\equiv Q^{n\le \tilde{G}'\bar{n}+(\tilde{G}'-1)}(\mathcal{N}[\tilde{\eta}',0]), 
\end{align}
where $\bar{n}$ is the maximum allowed average photon number per bosonic mode in the encoded state. $Q^{n\le \bar{n}}_{\scriptsize{\textrm{IDP}}}(\eta,\nth)$ simplifies to 
\begin{align}
Q^{n\le \bar{n}}_{\scriptsize{\textrm{IDP}}}(\eta,\nth) &= \max\Big{[}g(\eta \bar{n}+(1-\eta)\nth) 
\nonumber\\
&- g\Big{(} \frac{ (1-\eta)(\nthtiny+1)(\eta\bar{n}+(1-\eta)\nthtiny)   }{\eta - (1-\eta)\nthtiny} \Big{)} , 0\Big{]}. 
\end{align}    \label{theorem:improved upper bound of quantum capacity energy constrained}
\end{theorem}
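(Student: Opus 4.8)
The plan is to rerun the data-processing argument behind Eq.\eqref{eq:quantum capacity of Gaussian loss channel data processing upper bound post amplification}, but with the \emph{pre}-amplification decomposition of Lemma \ref{lemma:Gaussian loss channel decomposed into pure loss and amplficiation pre-amplification} replacing Lemma \ref{lemma:Gaussian loss channel decomposed into pure loss and amplficiation post-amplification}, and to carefully track the energy inflation caused by the amplification. First I would fix $\epsilon>0$ and, for each sufficiently large block length, choose an encoding $\mathcal{E}$ and decoding $\mathcal{D}$ achieving a rate $R \ge Q^{n\le \bar{n}}_{\reg}(\mathcal{N}[\eta,\nth]) - \epsilon$ over $\mathcal{N}[\eta,\nth]$ with the per-mode energy constraint $\textrm{Tr}[\mathcal{E}(\cdot)\hat{n}]\le\bar{n}$ respected. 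Writing $\mathcal{N}[\eta,\nth] = \mathcal{N}[\tilde{\eta}',0]\cdot\mathcal{A}[\tilde{G}']$ with $\tilde{\eta}' = \eta-(1-\eta)\nth$ and $\tilde{G}' = \eta/\tilde{\eta}' \ge 1$ (Lemma \ref{lemma:Gaussian loss channel decomposed into pure loss and amplficiation pre-amplification}), the pair $\{\,\mathcal{A}[\tilde{G}']\cdot\mathcal{E},\ \mathcal{D}\,\}$ achieves the same rate $R$ over the bosonic pure-loss channel $\mathcal{N}[\tilde{\eta}',0]$.

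The only genuinely new point---and the step I expect to be the crux---is the energy accounting. The states now fed into $\mathcal{N}[\tilde{\eta}',0]$ are the amplified codewords $\mathcal{A}[\tilde{G}'](\mathcal{E}(\cdot))$, not $\mathcal{E}(\cdot)$. By Eq.\eqref{eq:new photon number after amplification}, amplification with gain $\tilde{G}'$ takes a mode of mean photon number $\bar{m}$ to one of mean photon number $\tilde{G}'\bar{m} + (\tilde{G}'-1)$, which is increasing in $\bar{m}$; applying this mode by mode to a codeword with per-mode mean photon number at most $\bar{n}$ produces one with per-mode mean photon number at most $\tilde{G}'\bar{n} + (\tilde{G}'-1)$. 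Hence $\{\,\mathcal{A}[\tilde{G}']\cdot\mathcal{E},\ \mathcal{D}\,\}$ is an admissible code for $\mathcal{N}[\tilde{\eta}',0]$ under the tighter constraint $n \le \tilde{G}'\bar{n} + (\tilde{G}'-1)$, so $R \le Q^{n\le \tilde{G}'\bar{n}+(\tilde{G}'-1)}_{\reg}(\mathcal{N}[\tilde{\eta}',0])$; since for a pure-loss channel the regularized and one-shot energy-constrained coherent informations coincide (Eq.\eqref{eq:quantum capacity bosonic pure loss channel enenrgy constrained}), letting $\epsilon\to 0$ yields the first display of the theorem.

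The closed form then follows by substituting $m \equiv \tilde{G}'\bar{n} + (\tilde{G}'-1) = \tilde{G}'(\bar{n}+1)-1$ and transmissivity $\tilde{\eta}'$ into $Q^{n\le m}(\mathcal{N}[\tilde{\eta}',0]) = \max[\,g(\tilde{\eta}' m)-g((1-\tilde{\eta}')m),\,0\,]$. Using $\tilde{\eta}'\tilde{G}' = \eta$ one gets $\tilde{\eta}' m = \eta\bar{n} + (1-\eta)\nth$; using $1-\tilde{\eta}' = (1-\eta)(\nth+1)$ and $\tilde{G}'(\bar{n}+1)-1 = (\eta\bar{n} + (1-\eta)\nth)/(\eta-(1-\eta)\nth)$ one gets $(1-\tilde{\eta}')m = (1-\eta)(\nth+1)(\eta\bar{n}+(1-\eta)\nth)/(\eta-(1-\eta)\nth)$, which are exactly the two arguments of $g$ in the statement. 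Throughout, the argument presumes $\tilde{\eta}' = \eta-(1-\eta)\nth \in (0,1]$, the regime in which Lemma \ref{lemma:Gaussian loss channel decomposed into pure loss and amplficiation pre-amplification} holds; for $\nth \ge \eta/(1-\eta)$ the Gaussian loss channel has zero quantum capacity and the bound is trivial.
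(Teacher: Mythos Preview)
Your proposal is correct and follows essentially the same route as the paper: use Lemma \ref{lemma:Gaussian loss channel decomposed into pure loss and amplficiation pre-amplification} to write $\mathcal{N}[\eta,\nth] = \mathcal{N}[\tilde{\eta}',0]\cdot\mathcal{A}[\tilde{G}']$, absorb $\mathcal{A}[\tilde{G}']$ into the encoder, and invoke Eq.\eqref{eq:new photon number after amplification} to bound the new per-mode energy by $\tilde{G}'\bar{n}+(\tilde{G}'-1)$, then appeal to Eq.\eqref{eq:quantum capacity bosonic pure loss channel enenrgy constrained}. Your write-up is in fact more careful than the paper's, which omits the $\epsilon$-approximation, the monotonicity remark, the explicit algebra for the closed form, and the caveat about the regime $\tilde{\eta}'>0$.
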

\begin{proof}
Let $\lbrace \mathcal{E}^{n\le \bar{n}},\mathcal{D} \rbrace$ be the set of encoding and decoding which achieves the rate $R=Q^{n\le \bar{n}}_{\reg}(\mathcal{N}[\eta,\nth])$ for the Gaussian loss channel $\mathcal{N}[\eta,\nth]$. Then, the encoding and decoding set $\lbrace \mathcal{A}[\tilde{G}']\cdot\mathcal{E}^{n\le \bar{n}} , \mathcal{D} \rbrace$ achieves the rate $R = Q^{n\le \bar{n}}_{\reg}(\mathcal{N}[\eta,\nth])$ for the bosonic pure-loss channel $\mathcal{N}[\tilde{\eta}',0]$. Since the new encoding $\mathcal{E}' \equiv \mathcal{A}[\tilde{G}']\cdot\mathcal{E}^{n\le \bar{n}}$ has photon number $n\le \tilde{G}'\bar{n}+(\tilde{G}'-1)$, the rate $R$ should be less than $Q^{n\le \tilde{G}'\bar{n}+(\tilde{G'}-1)}(\mathcal{N}[\tilde{\eta}',0])$.  
\end{proof}

We note that Theorem \ref{theorem:improved upper bound of quantum capacity energy constrained} was independently discovered in \cite{Sharma2017} (see Theorem 45 therein). We emphasize that, unlike the energy-unconstrained case, our new upper bound $Q^{n\le \bar{n}}_{\scriptsize{\textrm{IDP}}}(\eta,\nth)$ is not always tighter than $Q^{n\le \bar{n}}_{\scriptsize{\textrm{DP}}}(\eta,\nth)$ (see the inset of Fig. \ref{fig:Quantum capacity bound combined}). Physically, this is because the increased encoding energy due to pre-amplification allows larger quantum capacity, which is crucial if the allowed average photon number in the encoding is small, i.e., $\bar{n}\ll 1$. To overcome this issue, we consider a more general decomposition of the Gaussian loss channel, including both pre-amplification and post-amplification:  
\begin{equation}
\mathcal{N}[\eta,\nth] =\mathcal{A}[G_{1}] \mathcal{N}[\bar{\eta},0]\mathcal{A}[G_{2}], \label{eq:decomposition of thermal loss into preamp pure loss and post amp}
\end{equation}
where $\bar{\eta} = 1-\frac{(1-\eta)}{G_{1}}(\nth+1)$, $G_{2} = \eta/ ( G_{1}-(1-\eta)(\nth+1) )$ and $G_{1}$ can take any value in the range $1\le G_{1} \le 1+(1-\eta)\nth$ (the upper bound of $G_{1}$ is imposed by $G_{2}\ge 1$). In this more general setting, the upper bound of the quantum capacity of $\mathcal{N}[\eta,\nth]$ is given by $Q^{n\le G_{2}\bar{n}+(G_{2}-1)}(\mathcal{N}[\bar{\eta},0])$. This upper bound can then be optimized by choosing $G_{1},G_{2}\ge 1$ such that the upper bound is minimized, i.e., 
\begin{equation}
Q^{n\le \bar{n}}_{\reg}(\mathcal{N}[\eta,\nth]) \le \min_{G_{1},G_{2}\ge 1} Q^{n\le G_{2}\bar{n}+(G_{2}-1)}(\mathcal{N}[\bar{\eta},0]). 
\end{equation} 
\begin{theorem}[Optimized data-processing bound with energy-constraint]
Energy-constrained quantum capacity of Gaussian loss channel $\mathcal{N}[\eta,\nth]$ is upper bounded by the optimized data-processing bound $Q^{n\le \bar{n}}_{\scriptsize{\textrm{ODP}}}(\eta,\nth)$:
\begin{align}
Q^{n\le \bar{n}}_{\reg}(\mathcal{N}[\eta,\nth])  &\le Q^{n\le \bar{n}}_{\scriptsize{\textrm{ODP}}}(\eta,\nth)
\nonumber\\
&\equiv \min_{1 \le G_{1} \le 1+(1-\eta)\nth} f_{\eta,\nth,\bar{n}}(G_{1}), \label{eq:optimized data-processing bound}   
\end{align}
where $\bar{n}$ is the maximum allowed average photon number in the encoding and $f_{\eta,\nth,\bar{n}}(G_{1})$ is defined as 
\begin{align}
f_{\eta,\nth,\bar{n}}(G_{1}) &\equiv \max\Big{[}  g\big{(}\bar{\eta}(G_{2}\bar{n}+(G_{2}-1))\big{)}
\nonumber\\
&\qquad - g\big{(}(1-\bar{\eta})(G_{2}\bar{n}+(G_{2}-1))\big{)},0 \Big{]} . \label{eq:definition of the f function}  
\end{align} 
and $\bar{\eta} = 1-\frac{(1-\eta)}{G_{1}}(\nth+1)$, $G_{2} = \eta/ ( G_{1}-(1-\eta)(\nth+1) )$.   \label{theorem:optimized data-processing upper bound}
\end{theorem}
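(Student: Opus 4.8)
The plan is to reuse the data-processing arguments behind Theorems \ref{theorem:improved upper bound of quantum capacity energy unconstrained} and \ref{theorem:improved upper bound of quantum capacity energy constrained}, but feeding them the two-parameter decomposition Eq.\eqref{eq:decomposition of thermal loss into preamp pure loss and post amp} rather than the one-sided Lemmas \ref{lemma:Gaussian loss channel decomposed into pure loss and amplficiation post-amplification} and \ref{lemma:Gaussian loss channel decomposed into pure loss and amplficiation pre-amplification}. First I would verify Eq.\eqref{eq:decomposition of thermal loss into preamp pure loss and post amp}: applying Theorem \ref{theorem:Gaussian channel synthesis} twice to the specifications of $\mathcal{A}[G_2]$, $\mathcal{N}[\bar{\eta},0]$ and $\mathcal{A}[G_1]$ with $\bar{\eta}=1-\frac{(1-\eta)}{G_1}(\nth+1)$ and $G_2=\eta/(G_1-(1-\eta)(\nth+1))$, one finds $\mathbf{T}=\sqrt{G_1\bar{\eta}G_2}\,\mathbf{I}_2=\sqrt{\eta}\,\mathbf{I}_2$, $\mathbf{d}=\mathbf{0}$, and (after a short computation) $\mathbf{N}=(1-\eta)(\nth+\tfrac{1}{2})\mathbf{I}_2$, i.e. exactly the specification of $\mathcal{N}[\eta,\nth]$. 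The constraint $1\le G_1\le 1+(1-\eta)\nth$ is precisely what keeps $\bar{\eta}\in[0,1]$ and $G_2\ge 1$, so the decomposition is a legitimate concatenation of physical channels for every such $G_1$.

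Next, fix any admissible $G_1$ and let $\{\mathcal{E}^{n\le\bar{n}},\mathcal{D}\}$ be encoding/decoding channels achieving the rate $R=Q^{n\le\bar{n}}_{\reg}(\mathcal{N}[\eta,\nth])$. Because $\mathcal{N}[\eta,\nth]=\mathcal{A}[G_1]\cdot\mathcal{N}[\bar{\eta},0]\cdot\mathcal{A}[G_2]$, the pair $\{\mathcal{A}[G_2]\cdot\mathcal{E}^{n\le\bar{n}},\ \mathcal{D}\cdot\mathcal{A}[G_1]\}$ achieves the same rate $R$ over the pure-loss channel $\mathcal{N}[\bar{\eta},0]$, since the composition of the new encoder, $\mathcal{N}[\bar{\eta},0]$, and the new decoder is exactly $\mathcal{D}\cdot\mathcal{N}[\eta,\nth]\cdot\mathcal{E}^{n\le\bar{n}}$; here the pre-amplification $\mathcal{A}[G_2]$ is absorbed into the encoder and the post-amplification $\mathcal{A}[G_1]$ into the decoder, and only the former affects the energy budget. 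By Eq.\eqref{eq:new photon number after amplification} the new encoder has average photon number at most $G_2\bar{n}+(G_2-1)$, so $R\le Q^{n\le G_2\bar{n}+(G_2-1)}(\mathcal{N}[\bar{\eta},0])$ (the same photon-counting bookkeeping as in Theorem \ref{theorem:improved upper bound of quantum capacity energy constrained}). Substituting $m=G_2\bar{n}+(G_2-1)$ into the closed form Eq.\eqref{eq:quantum capacity bosonic pure loss channel enenrgy constrained} --- legitimate because $g(\bar{\eta}m)-g((1-\bar{\eta})m)$ is monotone increasing in $m$, so the thermal state at the energy ceiling is optimal, and the $\max[\cdot,0]$ covers the anti-degradable regime $\bar{\eta}<\tfrac{1}{2}$ --- gives precisely $f_{\eta,\nth,\bar{n}}(G_1)$ of Eq.\eqref{eq:definition of the f function}.

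Finally, since $R\le f_{\eta,\nth,\bar{n}}(G_1)$ holds for every $G_1\in[1,1+(1-\eta)\nth]$, I would take the infimum over $G_1$ to conclude $Q^{n\le\bar{n}}_{\reg}(\mathcal{N}[\eta,\nth])\le\min_{1\le G_1\le 1+(1-\eta)\nth}f_{\eta,\nth,\bar{n}}(G_1)=Q^{n\le\bar{n}}_{\scriptsize{\textrm{ODP}}}(\eta,\nth)$. As consistency checks, $G_1=1$ gives $\bar{\eta}=\tilde{\eta}'$, $G_2=\tilde{G}'$ and recovers Theorem \ref{theorem:improved upper bound of quantum capacity energy constrained}, while $G_1=1+(1-\eta)\nth$ gives $\bar{\eta}=\eta'$, $G_2=1$ and recovers the post-amplification bound Eq.\eqref{eq:quantum capacity of Gaussian loss channel data processing upper bound post amplification}; hence the optimized bound is automatically no worse than either.

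The part that needs the most care --- more a careful check than a conceptual obstacle --- is the simultaneous verification that the algebra of Eq.\eqref{eq:decomposition of thermal loss into preamp pure loss and post amp} is compatible with all physicality constraints ($\bar{\eta}\in[0,1]$, $G_1,G_2\ge 1$) over the full claimed interval of $G_1$, together with the bookkeeping that the post-amplification attached to the decoder is energetically free; everything else is a direct reuse of the data-processing inequality for quantum capacity together with the known pure-loss formula Eq.\eqref{eq:quantum capacity bosonic pure loss channel enenrgy constrained}.
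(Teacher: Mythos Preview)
Your proposal is correct and follows essentially the same route as the paper: verify the two-sided decomposition Eq.\eqref{eq:decomposition of thermal loss into preamp pure loss and post amp} via Theorem \ref{theorem:Gaussian channel synthesis}, absorb $\mathcal{A}[G_2]$ into the encoder and $\mathcal{A}[G_1]$ into the decoder, track the energy increase with Eq.\eqref{eq:new photon number after amplification}, and then minimize over the admissible $G_1$. The paper's appendix proof is terser (it focuses on the channel-matching algebra and the range of $G_1$, leaving the data-processing step implicit from the discussion preceding the theorem), while you spell out the encoder/decoder absorption and the endpoint consistency checks explicitly; but the underlying argument is the same.
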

\begin{proof}
See appendix \ref{section:Derivation of the optimized data-processing bound} for the derivation. 
\end{proof}  

Since $Q^{n\le \bar{n}}_{\scriptsize{\textrm{DP}}}(\eta,\nth)$ and $Q^{n\le \bar{n}}_{\scriptsize{\textrm{IDP}}}(\eta,\nth)$ are the two extremes with $G_{2}=1$ and $G_{1}=1$, respectively, they are greater than or equal to the optimized data-processing bound. Thus $Q^{n\le \bar{n}}_{\scriptsize{\textrm{ODP}}}(\eta,\nth)$ is the best upper bound among all the data-processing type bounds introduced above (see Fig. \ref{fig:Quantum capacity bound combined}). We numerically observe, however, that either one of these two extremes is optimal: $Q^{n\le \bar{n}}_{\scriptsize{\textrm{ODP}}}(\eta,\nth) = Q^{n\le \bar{n}}_{\scriptsize{\textrm{DP}}}(\eta,\nth)$ for $\eta \ge \eta^{\star}(\nth,n)$ and $Q^{n\le \bar{n}}_{\scriptsize{\textrm{ODP}}}(\eta,\nth)=Q^{n\le \bar{n}}_{\scriptsize{\textrm{IDP}}}(\eta,\nth)$ otherwise for some $\eta^{\star}(\nth,n)$. In $\bar{n}\rightarrow \infty$ limit, $\lim_{\bar{n}\rightarrow\infty }\eta^{\star}(\nth,\bar{n}) =1$ and thus $Q^{n\le \bar{n}}_{\scriptsize{\textrm{ODP}}}(\eta,\nth)=Q^{n\le \bar{n}}_{\scriptsize{\textrm{IDP}}}(\eta,\nth)$ for all $\eta\in [0,1]$, consistent with Eq.\eqref{eq:IDP better than DP in energy-unconstrained}. We refer to Ref. \cite{Sharma2017} (e.g., Fig. 6 therein) for a more comprehensive comparison of existing upper bounds, including approximate degradability bounds \cite{Sutter2017}. 


Finally, the following upper bound of energy-unconstrained quantum capacity of the Gaussian random displacement channel can be derived from the data-processing argument combined with Lemma \ref{lemma:loss plus amplification is displacement pre-amplification} (specialized to $\nth =0$): 
\begin{equation}
Q_{\reg}(\mathcal{N}_{B_{2}}[\sigma^{2}]) \le \max \Big{[} \log\Big{(} \frac{1-\sigma^{2}}{\sigma^{2}} \Big{)} ,0 \Big{]}. \label{eq:improved upper bound of Gaussian random displacement channel}
\end{equation}
Note that this bound is strictly tighter than the one in Eq.\eqref{eq:quantum capacity upper bound displacement channel less tight}.  

\section{Gottesman-Kitaev-Preskill codes}
\label{section:GKP codes}

We now aim to find an explicit encoding and decoding strategy which achieves the quantum capacity of Gaussian loss channels. It is very important to realize that the coherent information of bosonic pure-loss channel being maximized by a Gaussian state (i.e., thermal state; Theorem \ref{theorem:thermal optimizer is enough for bosonic pure loss channel capacity}) does not imply that Gaussian encoding and decoding is sufficient to achieve the quantum capacity of bosonic pure-loss channel. In fact, it was proven that entanglement distillation of Gaussian states with Gaussian operation is impossible \cite{Eisert2002}. Due to the close relation between entanglement distillation and quantum state transmission via the quantum teleportation protocol, this implies that Gaussian encoding and decoding can never achieve a non-vanishing quantum communication rate for Gaussian channels. In other words, quantum error correction of a Gaussian error is impossible if the states and operations are restricted to be Gaussian \cite{Niset2009}. Therefore, non-Gaussian resources are necessary in order to achieve reliable quantum information transmission through noisy Gaussian channels. 

In this section, we show that the ability to prepare a Gottesman-Kitaev-Preskill (GKP) state, which is a non-Gaussian state, is sufficient to establish non-vanishing quantum communication rate through Gaussian loss channels. In particular, we prove that the GKP codes achieve the quantum capacity of Gaussian loss channels up to at most a constant gap from the upper bound given in Theorem \ref{theorem:improved upper bound of quantum capacity energy unconstrained}, in the energy-unconstrained scenario.  

\subsection{One-mode square lattice GKP code}
\label{subsection:One-mode square lattice GKP code}

The one-mode square lattice GKP code is the simplest class of the GKP codes \cite{Gottesman2001}. The code space $\mathcal{C}_{\sq}^{[d]} \subset \mathcal{H}$ is defined as $\mathcal{C}_{\sq}^{[d]} = \lbrace |\psi\rangle \,|\, \hat{S}^{[d]}_{\sq,q}|\psi\rangle=|\psi\rangle,\hat{S}^{[d]}_{\sq,p}|\psi\rangle=|\psi\rangle \rbrace$, where 
\begin{align}
\hat{S}^{[d]}_{\sq,q} &\equiv  \exp(i\hat{q}\sqrt{2\pi d} ) = \hat{D}(i\sqrt{\pi d}), 
\nonumber\\
\hat{S}^{[d]}_{\sq,p} &\equiv  \exp(-i\hat{p}\sqrt{2\pi d}) = \hat{D}(\sqrt{\pi d}), \label{eq:one mode square lattice GKP stabilizers}
\end{align}
are the stabilizers and $d$ is the dimension of the code space, i.e., $\textrm{dim}(\mathcal{C}_{\sq}^{[d]}) = d$. Although simultaneous measurement of position and momentum is impossible (i.e., $[\hat{q},\hat{p}]=i\neq 0$), they can nevertheless be measured simultaneously in modulo $\sqrt{2\pi/d}$ through the stabilizer measurement, since the two stabilizers commute with each other $\hat{S}^{[d]}_{\sq,q}\hat{S}^{[d]}_{\sq,p} = \hat{S}^{[d]}_{\sq,p}\hat{S}^{[d]}_{\sq,q}e^{i 2\pi d} = \hat{S}^{[d]}_{\sq,p}\hat{S}^{[d]}_{\sq,q}$. Thus, one-mode square lattice GKP codes can correct any displacement errors in the square unit cell $|\Delta q|,|\Delta p| < \sqrt{\pi/(2d)}$. 

Square lattice GKP code states are explicitly given by
\begin{equation}
|\mu^{\sq}_{L}\rangle \propto \sum_{n=-\infty}^{\infty} |\hat{q} = (dn+\mu)\sqrt{2\pi /d}\rangle 
\end{equation} 
in the computational basis, where $\mu=0,\cdots,d-1$ and $|\hat{q}=q_{0}\rangle$ is an eigenstate of the position operator $\hat{q}$ localized at $q=q_{0}$, which is an infinitely squeezed state. Phase rotation by angle $\theta=-\pi/2$ implements the basis transformation from $|\mu^{\sq}_{L}\rangle$ to
\begin{equation}
|\bar{\mu}^{\sq}_{L}\rangle \propto \sum_{n=-\infty}^{\infty} |\hat{p} = (dn-\mu)\sqrt{2\pi /d}\rangle,  
\end{equation}
where $|\hat{p}=p_{0}\rangle$ is the eigenstate of the momentum $\hat{p}$ at $p=p_{0}$. Note that $|\bar{\mu}^{\sq}_{L}\rangle = \sqrt{1/d}\sum_{\nu=0}^{d-1} \exp(-i2\pi\mu\nu/d) |\nu^{\sq}_{L}\rangle$. 

Since GKP code states are superpositions of infinitely many infinitely squeezed states, their average photon number diverge. It is possible, however, to construct a finite energy GKP states: Replace infinitely squeezed states by finitely squeezed ones, and introduce an overall Gaussian envelop (see Eq.(31) in \cite{Gottesman2001}). Our brief new observation is that this can be concisely realized by $|\mu_{L}^{\Delta}\rangle \propto \exp(-\Delta^{2}\hat{n})|\mu_{L}\rangle$ (which is an equivalent form of Eqs.(40),(41) in \cite{Gottesman2001}). The non-unitary operation $\exp(-\Delta^{2}\hat{n})$ can be physically implemented by passing the input state through a beam splitter with vacuum ancilla, and then by counting photon number and post-selecting the vacuum click at the out-going idler port. As a result, large enough GKP states of any shape can be carved into smaller ones with Gaussian envelop, where the size of the resulting GKP states (i.e., $\Delta$) can be modulated by transmissivity of the beam splitter. 

In Fig. \ref{fig:GKP one mode combined}, we plot the Wigner function of maximally mixed code state $\hat{\rho}_{\mathcal{C}}=\hat{P}_{\mathcal{C}}/d$ of the square lattice and hexagonal lattice (see subsection \ref{subsection:One-mode hexagonal lattice GKP code}) GKP code states for $d=2$ and $\bar{n}=3$, where $\hat{P}_{\mathcal{C}}$ is the projection operator to the code space $\mathcal{C}$. In the remainder of this section, we restrict ourselves to the properties of energy-unconstrained ideal GKP states. 

\begin{figure}[!t]
\centering
\subfloat{\includegraphics[width=1.4in]{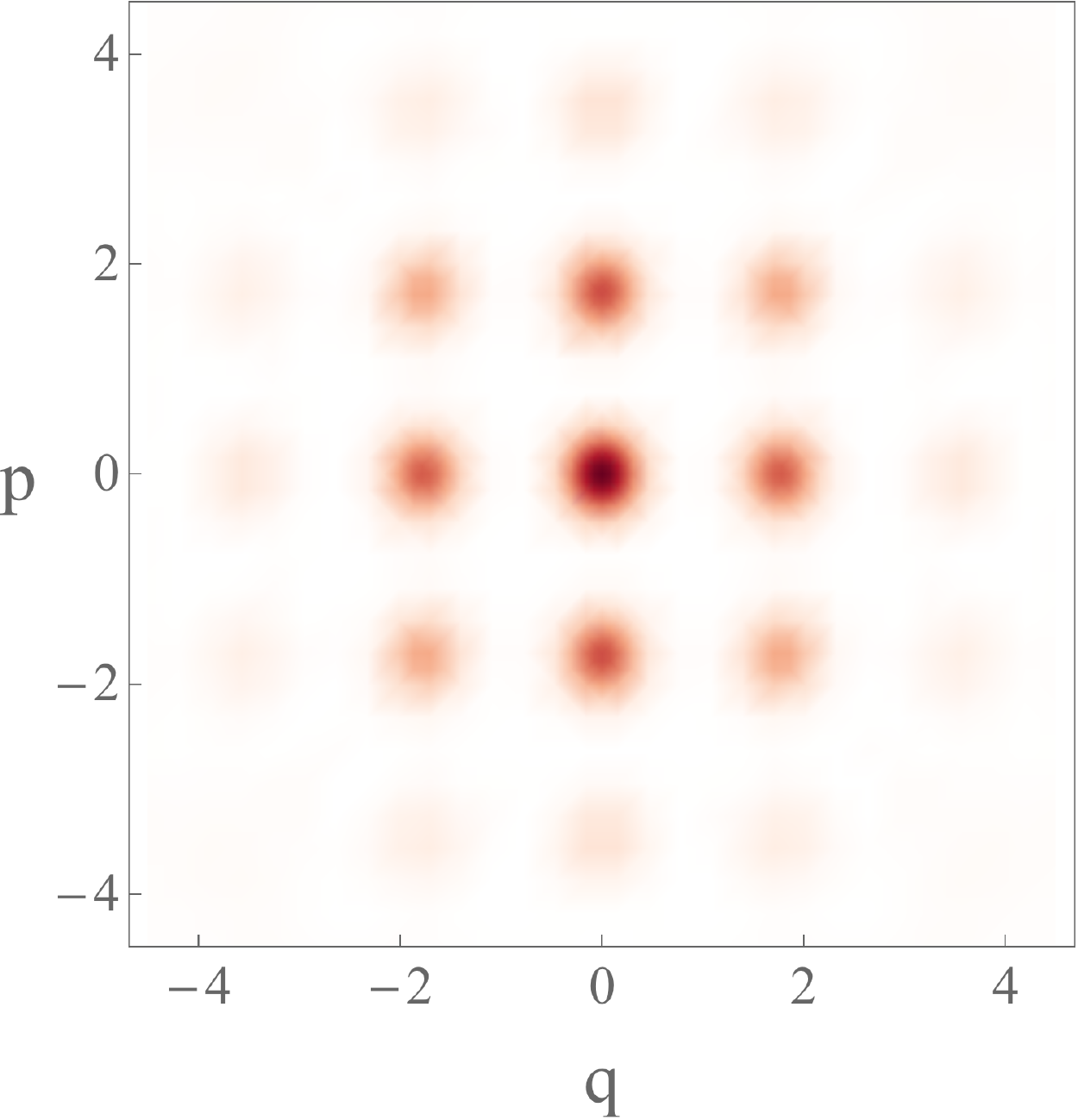}%
\label{fig:GKP square}}
\hfil
\subfloat{\includegraphics[width=1.4in]{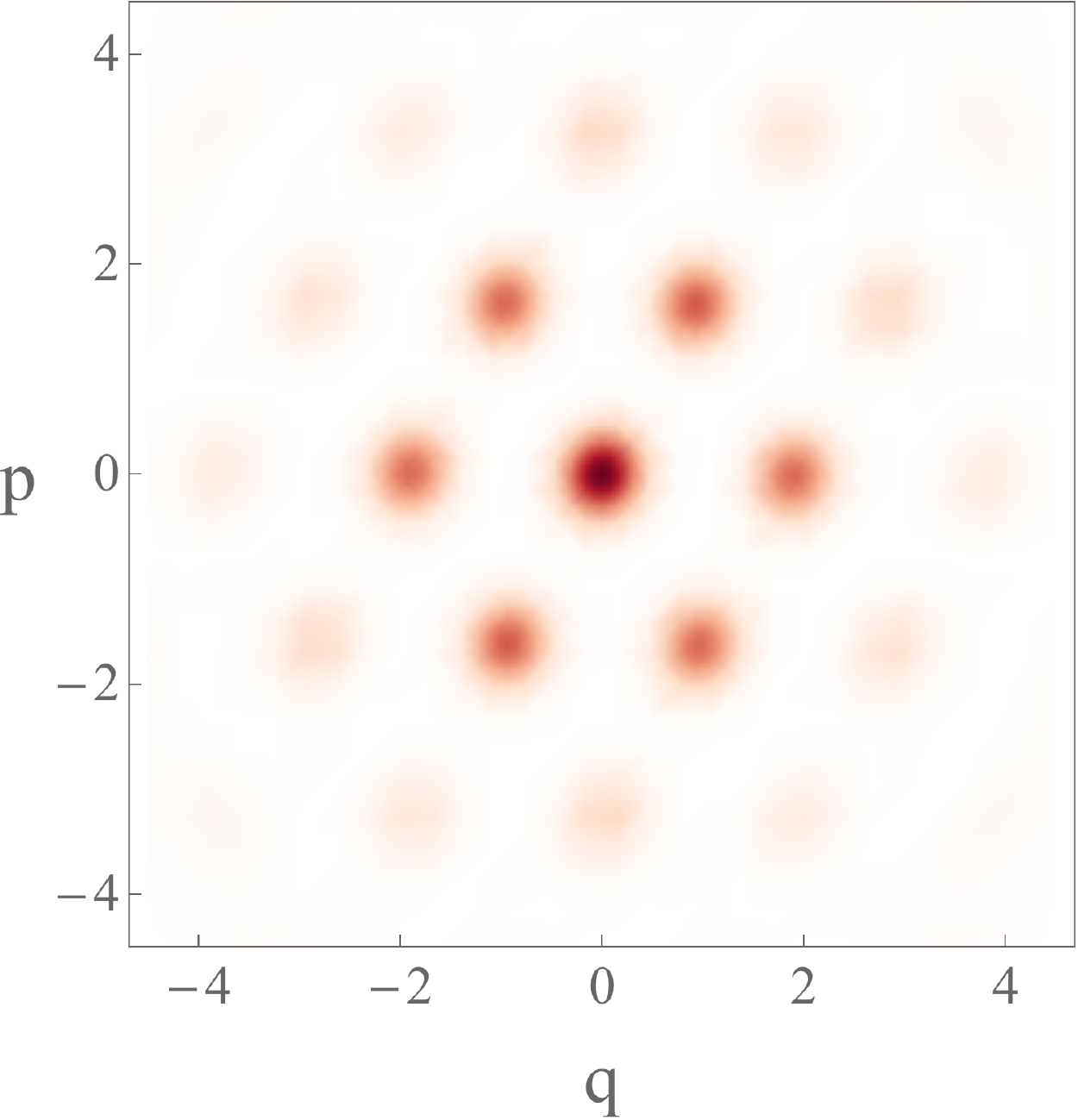}%
\label{fig:GKP hexagonal}}
\hfil
\subfloat{\raisebox{2.8ex}{\includegraphics[width=0.25in]{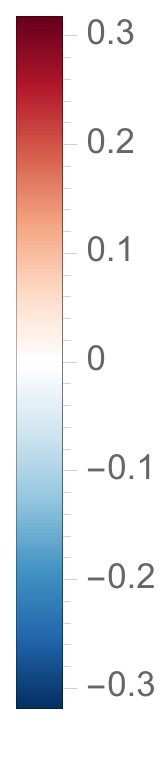}}
\label{fig:GKP PlotLegends}}
\caption{Wigner function of maximally mixed code states of the one-mode square lattice (left; $\mathcal{C}_{\sqtiny}^{[d]}$) and hexagonal lattice (right; $\mathcal{C}_{\hextiny}^{[d]}$) GKP code with $d=2$ and average photon number $\bar{n}=3$.}
\label{fig:GKP one mode combined}
\end{figure}

Universal gate set of the square lattice GKP code is given as follows: Logical Pauli gates (i.e., Clifford gates of hierarchy 1) are given by $\hat{Z}^{\sq}_{L} = (\hat{S}^{[d]}_{\sq,q})^{\frac{1}{d}} = \exp(i \hat{q}\sqrt{ 2\pi /d})$ and $\hat{X}^{\sq}_{L} =  (\hat{S}^{[d]}_{\sq,p})^{\frac{1}{d}}  = \exp(-i\hat{p} \sqrt{ 2\pi /d } )$, i.e., displacement operators, yielding $\hat{Z}^{\sq}_{L}|\mu^{\sq}_{L}\rangle = \exp(i 2\pi\mu /d )|\mu^{\sq}_{L}\rangle$, $\hat{X}^{\sq}_{L}|\mu^{\sq}_{L}\rangle= |(\mu\oplus 1)^{\sq}_{L}\rangle$, where $\oplus$ is the summation in modulo $d$. (The notion of Clifford hierarchy is due to \cite{Gottesman1999}.) Any logical Clifford gates of hierarchy 2 can be implemented by Gaussian unitary operations \cite{Gottesman2001}. For example, logical Hadamard gate can be implemented by phase rotation $\exp(-i \pi \hat{a}^{\dagger}\hat{a} / 2)$, as explained above. Logical CNOT gate can be realized by the SUM gate: $\exp(-i\hat{p}_{1}\hat{q}_{2})|\mu^{\sq}_{L}\rangle|\nu^{\sq}_{L}\rangle = |(\mu\oplus \nu)^{\sq}_{L}\rangle|\nu^{\sq}_{L}\rangle$. In $d=2$ case, we observe that $\exp(i\hat{q}^{4}/(4\pi))$ is the logical T gate (a Clifford gate of hierarchy 3), i.e., $\exp(i\hat{q}^{4}/(4\pi))|\mu^{\sq}_{L}\rangle = \exp(i\pi\mu/4)|\mu^{\sq}_{L}\rangle$, completing the gate set for universal quantum computation.   

Encoding of arbitrary logical states can be achieved as follows: Let $|\psi\rangle = \sum_{\mu=0}^{d-1}c_{\mu}|\mu\rangle$ be an arbitrary state of a physical qudit and assume that one of the GKP logical states (e.g., $|0^{\sq}_{L}\rangle$) is available. Also, assume the gate $\sum_{\nu=0}^{d-1}  (\hat{X}^{\sq}_{L})^{\nu} \otimes |\nu\rangle\langle \nu |$, generated by a Hamiltonian of the form $\hat{H}\propto i(\hat{a}_{1}-\hat{a}_{1}^{\dagger})\otimes \hat{a}_{2}^{\dagger}\hat{a}_{2}$, can be implemented, where $\hat{a}_{1}$ is a bosonic annihilation operator associated with the encoded mode and $\hat{a}_{2} \equiv \sum_{\mu=0}^{d-1}\sqrt{\mu}|\mu-1\rangle\langle \mu|$, associated with the physical qudit. Then, we get 
\begin{equation}
\Big{[}\sum_{\nu=0}^{d-1}  (\hat{X}^{\sq}_{L})^{\nu} \otimes |\nu\rangle\langle \nu |\Big{]} (|0^{\sq}_{L}\rangle\otimes |\psi\rangle) = \sum_{\mu=0}^{d-1}c_{\mu}|\mu^{\sq}_{L}\rangle\otimes |\mu\rangle. 
\end{equation} 
Upon a unitary Fourier gate on the physical qudit, $|\mu\rangle \rightarrow \sqrt{1/d} \sum_{\nu=0}^{d-1} \exp(-i2\pi\mu\nu/d) |\nu\rangle$, this state becomes $\sum_{\mu,\nu=0}^{d-1}c_{\mu}\exp(-i 2\pi \mu\nu/d)|\mu^{\sq}_{L}\rangle\otimes |\nu\rangle$. If one then measures the physical qudit in computational basis $\lbrace|0\rangle, \cdots,|d-1\rangle \rbrace$ and the measurement outcome is $|\nu\rangle$, the oscillator state is collapsed into $|\psi^{\sq}_{L}(\nu)\rangle = \sum_{\mu=0}^{d-1}c_{\mu}\exp( - i 2\pi \mu\nu/d)|\mu^{\sq}_{L}\rangle$. Applying $(\hat{Z}^{\sq}_{L})^{\nu}$, we finally obtain the desired encoded logical state
\begin{equation}
(\hat{Z}^{\sq}_{L})^{\nu} |\psi^{\sq}_{L}(\nu)\rangle = \sum_{\mu=0}^{d-1}c_{\mu}|\mu^{\sq}_{L}\rangle. \label{eq:encoding one mode GKP as desired}
\end{equation}

If the encoded GKP states are sent though the Gaussian random displacement channel, decoding can be achieved by measuring two stabilizers $\hat{S}_{\sq,q}^{[d]},\hat{S}_{\sq,p}^{[d]}$ given in Eq.\eqref{eq:one mode square lattice GKP stabilizers}, providing protection against the channel noise as specified in the text below Eq.\eqref{eq:one mode square lattice GKP stabilizers}. Given that fresh supply of ancillary GKP state is available, such stabilizer measurements can be achieved by Gaussian operations and homodyne detection: Assume we have an ancillary oscillator mode prepared at $|\bar{0}^{\sq}_{L}\rangle$ (i.e., $q_{2}\equiv 0\,\, \textrm{mod}\sqrt{2\pi/d}$). Upon the SUM gate $\exp(-i\hat{q}_{1}\hat{p}_{2})$, the ancillary position operator $\hat{q}_{2}$ is transformed into $\hat{q}_{2}+\hat{q}_{1}$ while the system's position operator remains at $\hat{q}_{1}$. We can then perform homodyne measurement of $\hat{q}_{2}$, and if the measurement outcome is $q_{2}^{M}(=q_{1}+q_{2})$, we can extract the value of $\hat{q}_{1}$ in modulo $\sqrt{2\pi/d}$, i.e., $q_{1} \equiv q_{2}^{M} \,\,\textrm{mod}\sqrt{2\pi/d}$, hence measuring $\hat{S}_{\sq,q}^{[d]}$. The other stabilizer $\hat{S}_{\sq,p}^{[d]}$ can also be measured in a similar way using an initial ancillary GKP state $|0^{\sq}_{L}\rangle$ (i.e., $p_{2}\equiv 0\,\, \textrm{mod}\sqrt{2\pi/d}$), SUM gate with control and target exchanged $\exp(-i\hat{p}_{1}\hat{q}_{2})$ and homodyne measurement of $\hat{p}_{2}$ (see the text below Eq.(105) in \cite{Gottesman2001}). From the extracted values of position and momentum in modulo $\sqrt{2\pi/d}$ (i.e., $q_{2}^{M},p_{2}^{M}$), one can infer $\Delta q = q_{2}^{M\star}$, $\Delta p = p_{2}^{M\star}$, where $q_{2}^{M\star},p_{2}^{M\star}$ are defined such that $q_{2}^{M}\equiv q_{2}^{M\star}\,\,\textrm{mod}\sqrt{2\pi/d}$, $p_{2}^{M}\equiv p_{2}^{M\star}\,\,\textrm{mod}\sqrt{2\pi/d}$ and $|q_{2}^{M\star}|,|p_{2}^{M\star}|\le \sqrt{\pi/(2d)}$. One can then correct such errors by counter displacement $\exp(i\hat{p}_{1}q_{2}^{M\star})$ and $\exp(-i\hat{q}_{1}p_{2}^{M\star})$. 

In both encoding and decoding, the most non-trivial task is to prepare a GKP state (e.g., $|0^{\sq}_{L}\rangle$) from an arbitrary non-GKP state. There have been many proposals to achieve such a goal in various experimental platforms \cite{Travaglione2002,Pirandola2004,Pirandola2006,Vasconcelos2010,
Terhal2016,Motes2017,Weigand2017}, including the one in the original paper \cite{Gottesman2001}. In particular, the proposal in \cite{Terhal2016} is based on the idea of phase estimation of unitary operators \cite{Kitaev1996}, which allows the following representation of the GKP states (equivalent to Eq.(29) in \cite{Gottesman2001}) 
\begin{align}
|\mu^{\sq}_{L}\rangle &\propto  (\hat{X}^{\sq}_{L})^{\mu} \sum_{n_{1},n_{2}=-\infty}^{\infty}  (\hat{S}_{\sq,p}^{[d]})^{n_{1}} (\hat{Z}^{\sq}_{L})^{n_{2}} |\phi_{0}\rangle 
\nonumber\\
&= \sum_{n_{1},n_{2}=-\infty}^{\infty}  e^{ -i\hat{p}\sqrt{\frac{2\pi}{d}} (dn_{1}+\mu) }e^{ i\hat{q}\sqrt{\frac{2\pi}{d}} n_{2} } |\phi_{0}\rangle. \label{eq:representation of GKP state from phase estimation intuition} 
\end{align}
Here, $\sum_{n_{1}=-\infty}^{\infty} (\hat{S}_{p}^{[d]})^{n_{1}}$ and $\sum_{n_{2}=-\infty}^{\infty}(\hat{Z}_{L})^{n_{2}}$ can be understood as the projection operator associated with the phase estimation (and correction) of $\hat{S}_{p}^{[d]}$ and $\hat{Z}_{L}$, which enforce $\hat{S}_{p}^{[d]} = \hat{Z}_{L} = 1$, hence $|0^{\sq}_{L}\rangle$. The last operation $(\hat{X}^{\sq}_{L})^{\mu}$ then transforms $|0^{\sq}_{L}\rangle$ into $|\mu^{\sq}_{L}\rangle$. Note that Eq.\eqref{eq:representation of GKP state from phase estimation intuition} is valid for any input state $|\phi_{0}\rangle$ with non-zero overlap with $|0^{\sq}_{L}\rangle$ which, for example, can be chosen to be the vacuum state. In this case, Eq.\eqref{eq:representation of GKP state from phase estimation intuition} can be understood as a superposition of coherent states $|\alpha\rangle$ in a $2$-dimensional square lattice $\alpha = \sqrt{\frac{\pi}{d}}(dn_{1}+\mu+in_{2})$, where $n_{1},n_{2}$ are integers. 

We remark that the scheme in \cite{Travaglione2002} implements Eq.\eqref{eq:representation of GKP state from phase estimation intuition} by post-selection, whereas \cite{Terhal2016} does so deterministically. The former protocol is within the reach of near-term technology of trapped ion systems, and some preliminary experimental progress has been made \cite{Fluhmann2017}.  In principle, the latter protocol can be realized in circuit quantum electrodynamics systems by, e.g., extending the quantum non-demolition measurement techniques used in \cite{Sun2014}. 

Based on the numerically optimized decoding operation, we earlier showed that the GKP codes offer excellent protection against bosonic pure-loss channel $\mathcal{N}[\eta,0]$ as well, despite not being specifically designed for such a purpose \cite{Albert2017}. In addition, by using sub-optical decoding (i.e., quantum-limited amplification $\mathcal{A}[1/\eta]$ followed by the conventional GKP decoding as described above), we can prove that the logical error probability scales as 
\begin{equation}
\epsilon_{L}^{\sq} \sim \exp\Big{[} - \frac{\pi}{4d}\Big{(}\frac{\eta}{1-\eta}\Big{)} \Big{]}, \label{eq:logical error probability one mode GKP sub-optimal}
\end{equation}
(cf, Eq.(7.24) in \cite{Albert2017}) which vanishes non-analytically in the limit of perfect transmission $\eta \rightarrow 1$. In the following subsection, we will explain that it is possible to improve the constant prefactor from $\frac{\pi}{4}$ to $\frac{\pi}{2\sqrt{3}}$ by using hexagonal lattice GKP code instead of square lattice.  

\subsection{One-mode hexagonal lattice GKP code}
\label{subsection:One-mode hexagonal lattice GKP code}

Code space of one-mode hexagonal lattice GKP code is defined by $\mathcal{C}_{\scriptsize{\textrm{hex}}}^{[d]}\equiv \lbrace |\psi\rangle \,|\, \hat{S}^{[d]}_{\hex,q}|\psi\rangle=|\psi\rangle,\,\,\hat{S}^{[d]}_{\hex,p}|\psi\rangle=|\psi\rangle \rbrace$, where the stabilizers are given by 
\begin{align} 
\hat{S}^{[d]}_{\hex,q} &\equiv  \exp[i\sqrt{2\pi d} (S_{11}\hat{q}+S_{21}\hat{p}) ], 
\nonumber\\
\hat{S}^{[d]}_{\hex,p} &\equiv  \exp[-i\sqrt{2\pi d} (S_{12}\hat{q}+S_{22}\hat{p}) ], \label{eq:one mode hexagonal lattice GKP stabilizers}
\end{align}
and $S_{ij}$ are matrix elements the symplectic matrix 
\begin{equation}
\mathbf{S}_{\hex}= \begin{pmatrix}
S_{11} & S_{12}   \\
S_{21} & S_{22}
\end{pmatrix} = \Big{(}\frac{2}{\sqrt{3}}\Big{)}^{\frac{1}{2}} \begin{pmatrix}
1 & \frac{1}{2}  \\
0 & \frac{\sqrt{3}}{2}  
\end{pmatrix}. \label{eq:one mode hexagonal lattice matrix}
\end{equation}
Note that stabilizers of the square lattice GKP code are Eq.\eqref{eq:one mode hexagonal lattice GKP stabilizers} with $\mathbf{S}_{\sq}=\mathbf{I}_{2}$, where $\mathbf{I}_{2}$ is the $2\times 2$ identity matrix. Thus, hexagonal lattice GKP code states can be generated by applying Gaussian unitary operation, with corresponding symplectic transformation $\mathbf{S}_{\hex}^{-1}$, to the square lattice GKP code states. Similarly as in the case of square lattice GKP code, logical Pauli operators are given by $\hat{Z}_{L}^{\hex} = (\hat{S}^{[d]}_{\hex,q})^{\frac{1}{d}}$ and $\hat{X}_{L}^{\hex} = (\hat{S}^{[d]}_{\hex,p})^{\frac{1}{d}}$. Following the same reasoning to obtain Eq.\eqref{eq:representation of GKP state from phase estimation intuition}, logical states of the hexagonal GKP code are given by 
\begin{align}
&|\mu^{\hex}_{L}\rangle \propto  (\hat{X}^{\hex}_{L})^{\mu} \sum_{n_{1},n_{2}=-\infty}^{\infty}  (\hat{S}_{\hex,p}^{[d]})^{n_{1}} (\hat{Z}^{\hex}_{L})^{n_{2}} |\phi_{0}\rangle 
\nonumber\\
&\,\,= \sum_{n_{1},n_{2}=-\infty}^{\infty}  e^{ -i(\frac{1}{2}\hat{q}+\frac{\sqrt{3}}{2}\hat{p})\sqrt{\frac{4\pi}{\sqrt{3}d}} (dn_{1}+\mu) }   e^{ i\hat{q}\sqrt{\frac{4\pi}{\sqrt{3}d}} n_{2} } |\phi_{0}\rangle, 
\label{eq:representation of GKP state from phase estimation intuition hexagonal} 
\end{align}
in the computational basis, where $|\phi_{0}\rangle$ is an arbitrary state with non-zero overlap with $|0_{L}^{\hex}\rangle$. If $|\phi_{0}\rangle$ is chosen to be the vacuum state, $|\mu_{L}^{\hex}\rangle$ is a superposition of coherent states $|\alpha\rangle$ in a $2$-dimensional hexagonal lattice $\alpha = \sqrt{\frac{2\pi}{\sqrt{3}d}} ((\frac{\sqrt{3}}{2} - \frac{i}{2})(dn_{1}+\mu) + in_{2} )$, where $n_{1},n_{2}$ are integers. For visualization of the one-mode hexagonal lattice GKP code space, see the right panel of Fig. \ref{fig:GKP one mode combined}.  

Note that since $\mathbf{S}_{\sq}$ and $\mathbf{S}_{\hex}$ are symplectic matrices (see Eq.\eqref{eq:symplectic matrix defining property}), the lattice generated by each of them has a unit cell with unit area: $\textrm{det}(\mathbf{S}_{\sq}) = \textrm{det}(\mathbf{S}_{\hex})=1$. For the square lattice, the minimum distance between different lattice points is given by $d_{\min}^{\sq}=1$, and for hexagonal lattice, $d_{\min}^{\hex} = (2/\sqrt{3})^{1/2} > d_{\min}^{\sq}$. Thus, square lattice GKP code can correct displacement errors within the radius $r\equiv \sqrt{|\Delta q|^{2}+|\Delta p|^{2}} \le \sqrt{\frac{\pi}{2d}}$, whereas hexagonal lattice GKP code does so for $r \le \sqrt{\frac{\pi}{\sqrt{3}d}}$. This leads to the following logical error probability
\begin{equation}
\epsilon_{L}^{\hex} \sim \exp\Big{[} - \frac{\pi}{2\sqrt{3}d}\Big{(}\frac{\eta}{1-\eta}\Big{)} \Big{]}, \label{eq:logical error probability one mode GKP sub-optimal hexagonal}
\end{equation}    
for the hexagonal lattice GKP code against bosonic pure-loss channel. Note that the hexagonal lattice allows the densest sphere packing in $2$-dimensional Euclidean space \cite{Fejes1942}. 

\subsection{Multi-mode symplectic dual lattice GKP codes} 

Generalizing Eq.\eqref{eq:one mode hexagonal lattice GKP stabilizers}, one can consider the following stabilizers for $N$-mode GKP code space, encoding $d$ logical states per mode: 
\begin{equation}
\hat{S}^{[d]}_{k} \equiv \exp [i \sqrt{2\pi d} (-1)^{k+1} (\mathbf{\hat{x}}^{T}\mathbf{S})_{k}], \label{eq:stablizers GKP code multi-mode general symplectic matrix}
\end{equation}
for $k\in\lbrace 1,\cdots,2N \rbrace$. Here, $\mathbf{\hat{x}} = (\hat{x}_{1},\cdots,\hat{x}_{2N})^{T} = (\hat{q}_{1},\hat{p}_{1},\cdots,\hat{q}_{N},\hat{p}_{N})^{T}$ are the quadrature operators of $N$ oscillator modes and $\mathbf{S}$ is a $2N\times 2N$ symplectic matrix (see also \cite{Gottesman2001,Harrington2001}). This more general type of GKP code states can be generated by applying Gaussian operation with corresponding symplectic transformation $\mathbf{S}^{-1}$ to the $N$ copies of one-mode square lattice GKP states, i.e., $|\vec{\mu}^{\,\mathbf{S}}_{L}\rangle = \hat{U}_{\mathbf{S}^{-1}}|\vec{\mu}^{\,\sq}_{L}\rangle$, where $\vec{\mu}=(\mu^{1},\cdots,\mu^{N})$, $|\vec{\mu}^{\,\sq}_{L}\rangle = |\mu^{1}_{L}\rangle\otimes \cdots\otimes |\mu^{N}_{L}\rangle$ and $\mu^{k}\in\lbrace0,\cdots,d-1\rbrace$.

\subsection{Achievable quantum communication rate of the GKP codes}
\label{subsection:GKP code rate}

Similar to the one-mode case, it is possible to increase the correctable radius of displacement in $N$-mode case, by choosing a $2N$-dimensional symplectic lattice allowing more efficient sphere packing than the square lattice. It is known that there exists a $2N$-dimensional lattice in Euclidean space allowing $d_{\min}\ge \sqrt{N/(\pi e)}$ \cite{Hlawka1943} and a stronger statement was proven in \cite{Buser1994} that the same holds also for symplectic lattices. Choosing such a lattice to define the GKP code, one can correct all displacement error within radius $r\le   \sqrt{N/( 2 e d)}$. In the Gaussian random displacement channel $\mathcal{N}_{B_{2}}[\sigma^{2}]$, the probability of displacement with radius larger than $ \sqrt{2N}\sigma$ occurring vanishes in the limit of infinitely many modes $N\rightarrow\infty$. Thus, if $\sqrt{N/( 2 e d)}\ge  \sqrt{2N}\sigma$ is satisfied, i.e., 
\begin{equation}
d\le d_{\sigma} \equiv  \frac{1}{4e\sigma^{2}}, 
\end{equation}
encoded information can be transmitted faithfully with vanishing decoding error probability. Then, it follows that the communication rate $R=\log \lfloor d_{\sigma} \rfloor = \log \lfloor \frac{1}{4e\sigma^{2}}\rfloor$ can be achieved for the Gaussian random displacement channel (see Eq.(55) in \cite{Harrington2001}; floor function is due to the fact that $d$ can only be an integer).  

\begin{figure*}[!t]
\centering
\subfloat{\includegraphics[width=3.2in]{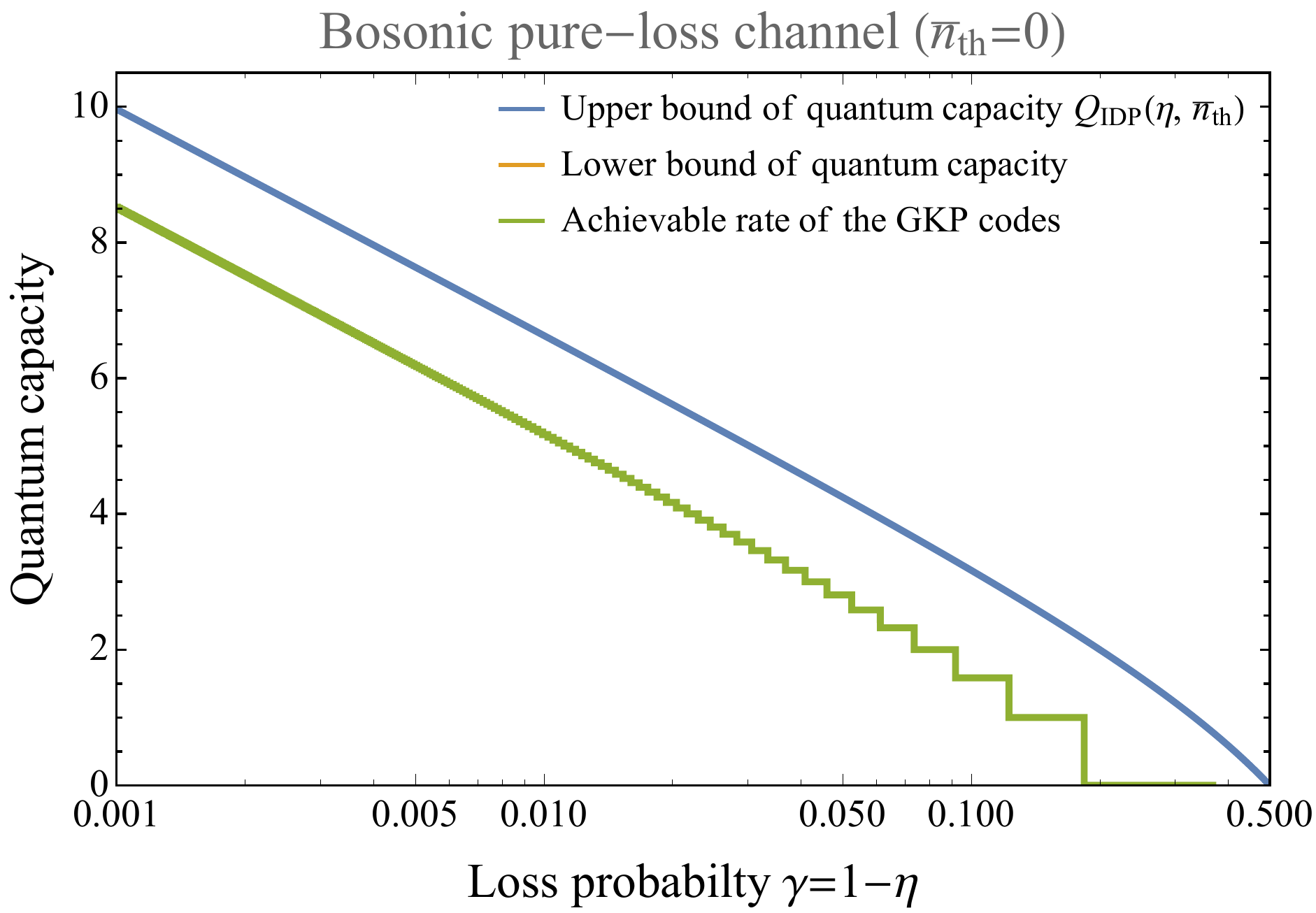}%
\label{fig:GKP rate compared with pure loss channel capacity}}
\hfil
\subfloat{\includegraphics[width=3.2in]{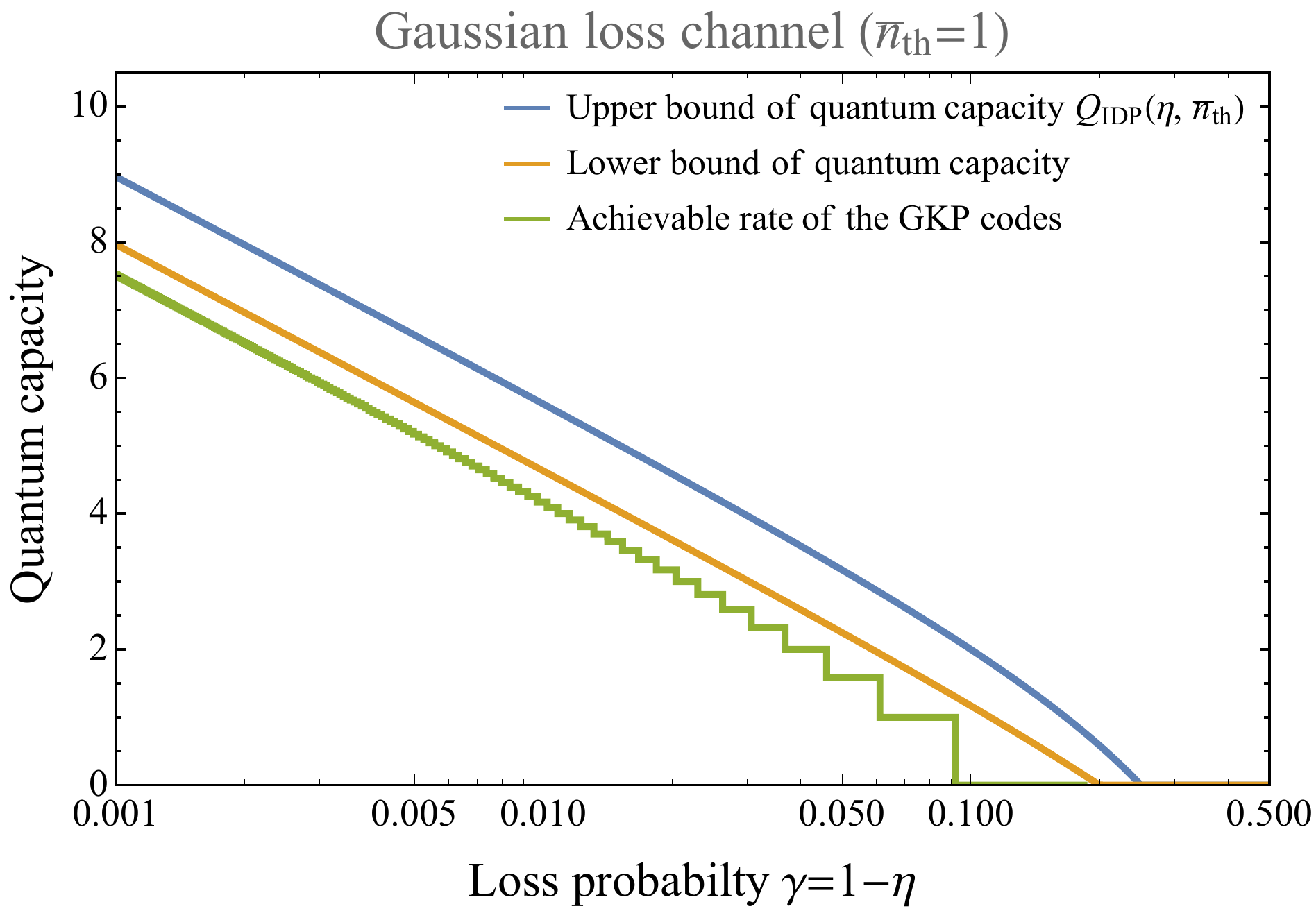}%
\label{fig:GKP rate compared with thermal loss channel capacity}}
\caption{Achievable quantum communication rate of the GKP codes (green) compared with lower (orange) and upper (blue) bounds of quantum capacity of bosonic pure-loss channel $\mathcal{N}[\eta,\nthtiny = 0]$ (left) and Gaussian loss channel $\mathcal{N}[\eta,\nthtiny = 1]$ (right). Orange line in the left panel is hidden behind the blue line, since for the bosonic pure-loss channel, lower and upper bounds coincide with each other and the quantum capacity is completely known.}
\label{fig:Achievable rate of GKP codes compared with capacity}
\end{figure*}

Note that the above estimation is overly conservative since we did not take into account the correctable displacement outside the correctable sphere. With an improved estimation of the decoding error probability, the following statement was ultimately established in \cite{Harrington2001}:  

\begin{lemma}[Eq.(66) in \cite{Harrington2001}]  
Let $\mathcal{N}_{B_{2}}[\sigma^{2}]$ be a Gaussian random displacement channel as defined in Definition \ref{definition:Gaussian displacement channel}. Also, let $\mathcal{C}_{\mathbf{S}}^{[d]}$ be the $N$-mode GKP code space defined by the stabilizers in Eq.\eqref{eq:stablizers GKP code multi-mode general symplectic matrix}. Then, there exists a symplectic lattice generated by $\mathbf{S}$ such that the GKP code achieves the following rate for the Gaussian random displacement channel $\mathcal{N}_{B_{2}}[\sigma^{2}]$ in $N\rightarrow\infty$ limit.  
\begin{equation}
R = \max \Big{(} \log \Big{\lfloor} \frac{1}{e\sigma^{2}} \Big{\rfloor} , 0  \Big{)}.  
\end{equation}
Here, $\lfloor x \rfloor$ is the floor function, due to the fact that $d$ can only be an integer.  \label{lemma:achievable GKP communication rate for Gaussian random displacement channel}
\end{lemma}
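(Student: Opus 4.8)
The plan is to reduce the quantum-communication problem for $\mathcal{N}_{B_{2}}[\sigma^{2}]$ to a lattice sphere-packing/covering question and then invoke an existence theorem for good symplectic lattices. An $N$-mode GKP code $\mathcal{C}_{\mathbf{S}}^{[d]}$ encodes $d^{N}$ logical states into $N$ modes, hence has rate $\tfrac{1}{N}\log(d^{N})=\log d$; so it suffices to show that for every integer $d$ with $d<1/(e\sigma^{2})$ there is a sequence of symplectic matrices $\mathbf{S}=\mathbf{S}_{N}$ for which the decoding error probability tends to $0$ as $N\to\infty$. First I would spell out the decoder. Measuring the stabilizers of Eq.~\eqref{eq:stablizers GKP code multi-mode general symplectic matrix} on a state corrupted by a random displacement $\xi\in\mathbb{R}^{2N}$ reveals $\xi$ modulo the logical-operator lattice $\Lambda$, whose covolume is $(2\pi/d)^{N}$ (using $\det\mathbf{S}=1$). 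Maximum-likelihood decoding for a Gaussian error returns the coset representative of least Euclidean norm, so the decoding succeeds whenever $\xi\in\mathcal{V}(\Lambda)$, the Voronoi cell; thus $P_{\mathrm{err}}(N,d,\mathbf{S})\le\Pr[\xi\notin\mathcal{V}(\Lambda)]$, where the $2N$ real components of $\xi$ are i.i.d.\ Gaussian with variance $\sigma^{2}$ (the channel has $\mathbf{N}=\sigma^{2}\mathbf{I}_{2}$), so $\mathbb{E}\|\xi\|^{2}=2N\sigma^{2}$.

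The heart of the argument combines two ingredients. (i) \emph{Concentration of the noise:} by the law of large numbers, with Gaussian tails giving exponentially small deviation probabilities, $\xi$ concentrates on a ``typical set'' $\mathcal{T}_{N}$ of volume $2^{N\log(2\pi e\sigma^{2})}=(2\pi e\sigma^{2})^{N}$ --- equivalently on the sphere $\|\xi\|\approx\sqrt{2N}\,\sigma$ with a near-uniform direction --- and $\Pr[\xi\notin\mathcal{T}_{N}]\to0$. (ii) \emph{Existence of a good symplectic lattice:} whenever $\mathrm{covol}(\Lambda)=(2\pi/d)^{N}$ strictly exceeds the typical-set volume $(2\pi e\sigma^{2})^{N}$, i.e.\ whenever $d<1/(e\sigma^{2})$, the symplectic strengthening of the Minkowski--Hlawka theorem \cite{Buser1994} supplies, for each $N$, a symplectic lattice whose Voronoi cell captures all but an exponentially-small-probability fraction of $\mathcal{T}_{N}$. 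The mechanism is a random-coding (averaging) argument: average $\Pr[\xi\notin\mathcal{V}(\Lambda)]$ over the Haar measure on symplectic lattices of the prescribed covolume and bound it with a Siegel-type mean-value identity (cf.~\cite{Buser1994}), obtaining $\mathbb{E}_{\Lambda}\Pr[\xi\notin\mathcal{V}(\Lambda)]\to0$; hence some lattice in the ensemble works. Combining (i) and (ii) gives a sequence $\mathbf{S}_{N}$ with $P_{\mathrm{err}}\to0$, so the rate $\log d$ is achievable for every integer $d<1/(e\sigma^{2})$.

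Optimizing over $d$ finishes the proof: if $1/(e\sigma^{2})\ge2$, take $d=\lfloor 1/(e\sigma^{2})\rfloor$, giving achievable rate $\log\lfloor 1/(e\sigma^{2})\rfloor$; if $\sigma^{2}\ge1/e$ then $\log\lfloor1/(e\sigma^{2})\rfloor\le0$ and rate $0$ is trivially achievable. Either way the achievable rate is $\max(\log\lfloor 1/(e\sigma^{2})\rfloor,0)$, i.e.\ Eq.~(66) of \cite{Harrington2001}; note it matches the coherent-information lower bound Eq.~\eqref{eq:quantum capacity Gaussian displacement channel lower bound} up to the floor function.

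I expect the main obstacle to be step (ii) with the \emph{sharp} constant. A worst-case estimate would require the typical sphere (radius $\sqrt{2N}\,\sigma$) to lie inside the \emph{inscribed} ball of $\mathcal{V}(\Lambda)$, i.e.\ inside a ball of radius equal to the packing radius $\sqrt{N/(2ed)}$ attainable by \cite{Buser1994}; this only yields the loose threshold $d<1/(4e\sigma^{2})$ and merely reproduces the conservative estimate $d\le1/(4e\sigma^{2})$ noted earlier. Gaining the factor of $4$ forces one to replace this geometric bound by the ensemble-averaged (Siegel mean-value) estimate over \emph{symplectic} lattices, and then to match the exponential growth rate of the lattice-point count against the Gaussian large-deviation exponent so that the bound on $P_{\mathrm{err}}$ is genuinely $o(1)$ rather than $O(1)$. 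Threading both of these so the threshold lands exactly on $d<1/(e\sigma^{2})$ is the substantive work; there I would follow the derivation of \cite{Harrington2001}.
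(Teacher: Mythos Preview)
The paper does not actually prove this lemma: it is stated as a quotation of Eq.~(66) of \cite{Harrington2001}, and the surrounding text only gives the conservative sphere-packing estimate $d\le 1/(4e\sigma^{2})$ before remarking that ``with an improved estimation of the decoding error probability'' the sharper result was established in \cite{Harrington2001}. Your sketch is entirely consistent with that discussion---you reproduce the same conservative bound, correctly isolate where the extra factor of $4$ must come from (counting correctable displacements outside the inscribed ball via an averaging/Siegel-type argument over symplectic lattices), and, like the paper, defer to \cite{Harrington2001} for the actual derivation of the sharp threshold.
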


Thus, the GKP codes achieve one-shot coherent information of the Gaussian random displacement channel (cf, Eq.\eqref{eq:quantum capacity Gaussian displacement channel lower bound}) in modulo floor function. Based on this, we now establish the following main result: 
\begin{theorem}[Achievable quantum communication rate of the GKP codes for Gaussian loss channel]
Let $\mathcal{N}[\eta,\nth]$ be a Gaussian loss channel as defined in Definition \ref{definition:Gaussian loss channels}. Upon optimal choice of $2N\times 2N$ symplectic matrix $\mathbf{S}=\mathbf{S}^{\star}$, the $N$-mode GKP code as defined in Eq.\eqref{eq:stablizers GKP code multi-mode general symplectic matrix} achieves the following rate for the Gaussian loss channel $\mathcal{N}[\eta,\nth]$ in $N\rightarrow \infty$ limit: 
\begin{equation}
R = \max \Big{(} \log \Big{\lfloor} \frac{1}{e(1-\eta)(\nth+1)} \Big{\rfloor} , 0  \Big{)}, \label{eq:achievable rate of the GKP codes for Gaussian loss channel}
\end{equation}
where $\lfloor x \rfloor$ is the floor function. \label{theorem:achievable rate of the GKP codes for Gaussian loss channels}
\end{theorem}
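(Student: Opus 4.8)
The plan is to reduce the Gaussian loss channel to a Gaussian random displacement channel by \emph{pre-amplification} and then quote the rate already established for the displacement channel in Lemma~\ref{lemma:achievable GKP communication rate for Gaussian random displacement channel}. Concretely, fix the $N$-mode GKP code with encoding isometry $\mathcal{E}_{\mathbf{S}}$ into $\mathcal{C}_{\mathbf{S}}^{[d]}$, and augment the encoding by a quantum-limited amplification applied mode-wise, i.e., use the effective encoding $\mathcal{E}' \equiv \mathcal{A}[1/\eta]^{\otimes N}\cdot \mathcal{E}_{\mathbf{S}}$ (assume $\eta>0$; for $\eta=0$ the channel is anti-degradable and the claimed rate is trivially $0$). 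By Lemma~\ref{lemma:loss plus amplification is displacement pre-amplification}, sending the amplified encoded state through $\mathcal{N}[\eta,\nth]^{\otimes N}$ realizes
\begin{equation}
\big(\mathcal{N}[\eta,\nth]\cdot\mathcal{A}[1/\eta]\big)^{\otimes N} = \mathcal{N}_{B_{2}}[\tilde{\sigma}^{2}]^{\otimes N}, \qquad \tilde{\sigma}^{2} \equiv (1-\eta)(\nth+1),
\end{equation}
so the encoded GKP states effectively see $N$ copies of a Gaussian random displacement channel of variance $\tilde{\sigma}^{2}$.

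Given this reduction, let the receiver run the ordinary GKP stabilizer decoding for the displacement channel (the one-mode version is described below Eq.~\eqref{eq:one mode square lattice GKP stabilizers}, with the obvious $N$-mode symplectic-lattice generalization). Lemma~\ref{lemma:achievable GKP communication rate for Gaussian random displacement channel} then says that there is a symplectic matrix $\mathbf{S}=\mathbf{S}^{\star}$ for which this code achieves, in the $N\rightarrow\infty$ limit, the rate $\max(\log\lfloor 1/(e\tilde{\sigma}^{2})\rfloor,0)$. Substituting $\tilde{\sigma}^{2} = (1-\eta)(\nth+1)$ gives exactly Eq.~\eqref{eq:achievable rate of the GKP codes for Gaussian loss channel}, since composing $\mathcal{E}'$, $\mathcal{N}[\eta,\nth]^{\otimes N}$, and the displacement-channel decoder is, by the operator identity above, the same map as composing $\mathcal{E}_{\mathbf{S}}$, $\mathcal{N}_{B_{2}}[\tilde{\sigma}^{2}]^{\otimes N}$, and that decoder.

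Two remarks indicate why this is the right route and where the only subtlety lies. First, one could instead move the amplification to the decoder using Corollary~\ref{corollary:loss plus amplification is displacement post-amplification}, but that produces the larger effective variance $\tfrac{1-\eta}{\eta}(\nth+1)$ and hence a strictly smaller rate; it is exactly the pre-amplification of Lemma~\ref{lemma:loss plus amplification is displacement pre-amplification} (which does not amplify the channel's own added noise) that yields the $(1-\eta)(\nth+1)$ in the theorem, so that lemma --- not its post-amplification counterpart --- is the tool to use. Second, $\mathcal{A}[1/\eta]$ inflates the mean photon number of the encoded state, which is harmless because we work in the energy-unconstrained scenario; this is precisely the step that would block a naive energy-constrained analogue. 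Beyond these points there is essentially no obstacle: absorbing $\mathcal{A}[1/\eta]^{\otimes N}$ into the encoding leaves the GKP lattice structure untouched, and all the quantitative work --- the refined decoding-error estimate that improves on the crude sphere-packing bound --- is already packaged in Lemma~\ref{lemma:achievable GKP communication rate for Gaussian random displacement channel} from \cite{Harrington2001}.
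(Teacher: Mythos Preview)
Your proposal is correct and follows essentially the same route as the paper: invoke Lemma~\ref{lemma:loss plus amplification is displacement pre-amplification} to convert $\mathcal{N}[\eta,\nth]$ into $\mathcal{N}_{B_2}[(1-\eta)(\nth+1)]$ via pre-amplification, then apply Lemma~\ref{lemma:achievable GKP communication rate for Gaussian random displacement channel} and substitute. Your added remarks (why post-amplification is inferior, and why the energy-unconstrained setting is needed) are accurate elaborations the paper leaves implicit.
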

\begin{proof}
Lemma \ref{lemma:loss plus amplification is displacement pre-amplification} states that the Gaussian loss channel $\mathcal{N}[\eta,\nth]$ can be converted into the Gaussian random displacement channel $\mathcal{N}_{B_{2}}[\tilde{\sigma}^{2}_{\eta,\nthtiny}]$ by quantum-limited amplification $\mathcal{A}[1/\eta]$, where $\tilde{\sigma}^{2}_{\eta,\nthtiny} = (1-\eta)(\nth+1)$. Combining this with Lemma \ref{lemma:achievable GKP communication rate for Gaussian random displacement channel}, Eq.\eqref{eq:achievable rate of the GKP codes for Gaussian loss channel} follows. 
\end{proof}

Recall Theorem \ref{theorem:improved upper bound of quantum capacity energy unconstrained} and note that $Q_{\reg}(\mathcal{N}[\eta,\nth]) \le Q_{\scriptsize{\textrm{IDP}}}(\eta,\nth)$ where
\begin{equation}
Q_{\scriptsize{\textrm{IDP}}}(\eta,\nth) < \max \Big{[}\log\Big{(} \frac{1}{(1-\eta)(\nth+1)} \Big{)}, 0\Big{]}, 
\end{equation}
as derived from Eq.\eqref{eq:improved upper bound of quantum capacity energy unconstrained simplified}. Comparing this with Eq.\eqref{eq:achievable rate of the GKP codes for Gaussian loss channel}, we get $Q_{\reg}(\mathcal{N}[\eta,\nth]) - R \lesssim \log e = 1.44269\cdots $, where $\sim$ is due to the floor function. Thus, GKP codes defined over optimal symplectic lattice achieve quantum capacity of Gaussian loss channels up to at most a constant gap from the upper bound of quantum capacity (see Fig. \ref{fig:Achievable rate of GKP codes compared with capacity} for illustration).

We note that the established rate in Theorem \ref{theorem:achievable rate of the GKP codes for Gaussian loss channels} relies on the existence of a symplectic lattice in higher dimensions satisfying a certain desired condition (see Eqs.(56),(57) in \cite{Harrington2001}). In this regard, we remark that $E_{8}$ lattice and Leech lattice $\Lambda_{24}$ (both symplectic; see appendix of \cite{Buser1994}) were recently shown to support the densest sphere packing in 8 and 24 dimensional Euclidean space, respectively \cite{Viazovska2017,Cohn2017}. 

\section{Biconvex encoding and decoding optimization}
\label{section:Biconvex encoding and decoding optimization}

In this subsection, we consider the energy-constrained scenario by imposing maximum allowed photon number in each mode. Instead of attempting to establish a theoretically rigorous statement as we did in the energy-unconstrained case, we tackle the problem by numerical optimization. Since decoding error probability cannot be suppressed arbitrarily close to zero with only finite number of modes, we aim to find a set of optimal encoding and decoding maps which maximize the \textit{entanglement fidelity}. The main reason we chose entanglement fidelity (instead of other measures, e.g., diamond norm) as the fidelity measure is to make the optimization problem more tractable, as maximization of entanglement fidelity can be formulated as a biconvex optimization, which can be tackled by alternating semidefinite programming method.  

\subsection{Choi matrix and superoperator of a quantum channel} 

A quantum channel $\mathcal{A}:\mathcal{L}(\mathcal{H}_{1})\rightarrow\mathcal{L}(\mathcal{H}_{2})$ is a completely positive trace preserving (CPTP) map \cite{Choi1975}, which has one-to-one correspondence with the Choi matrix $\hat{X}_{\mathcal{A}}\in\mathcal{L}(\mathcal{H}_{1}\otimes \mathcal{H}_{2})$. Matrix elements of the Choi matrix are defined as $(\hat{X}_{\mathcal{A}})_{[ij],[i'j']} = \langle j_{\mathcal{H}_{2}}| \mathcal{A}(|i_{\mathcal{H}_{1}}\rangle\langle i'_{\mathcal{H}_{1}}|)|j'_{\mathcal{H}_{2}}\rangle$,  
where $|i_{\mathcal{H}_{1}}\rangle,|i'_{\mathcal{H}_{1}}\rangle$ and $|j_{\mathcal{H}_{2}}\rangle$, $|j'_{\mathcal{H}_{2}}\rangle$ are orthonormal basis of $\mathcal{H}_{1}$ and $\mathcal{H}_{2}$, respectively. Choi matrix $\hat{X}_{\mathcal{A}}$ is hermitian positive semidefinite by definition of complete positivity (CP) of $\mathcal{A}$ \cite{Choi1975}. Also, trace preserving (TP) condition imposes the affine constraint $\textrm{Tr}_{\mathcal{H}_{2}}\hat{X}_{\mathcal{A}} \equiv \sum_{j=0}^{\scriptsize{\textrm{dim}}(\mathcal{H}_{2})-1} (X_{\mathcal{A}})_{[ij],[i'j]} |i_{\mathcal{H}_{1}}\rangle\langle i'_{\mathcal{H}_{1}}|  = \hat{I}_{\mathcal{H}_{1}}$. 

Let $\mathcal{B}:\mathcal{L}(\mathcal{H}_{2})\rightarrow \mathcal{L}(\mathcal{H}_{3})$ be another quantum channel (i.e., a CPTP map) and consider the composite channel $\mathcal{B}\cdot\mathcal{A}:\mathcal{L}(\mathcal{H}_{1})\rightarrow \mathcal{L}(\mathcal{H}_{3})$. To analyze the composite channel, it is convenient to define superoperator $\hat{T}_{\mathcal{A}}$ of a channel $\mathcal{A}$ whose matrix elements are given by $(\hat{T}_{\mathcal{A}})_{jj',ii'}\equiv \langle j_{\mathcal{H}_{2}}| \mathcal{A}(|i_{\mathcal{H}_{1}}\rangle\langle i'_{\mathcal{H}_{1}}|)|j'_{\mathcal{H}_{2}}\rangle = (\hat{X}_{\mathcal{A}})_{[ij],[i'j']}$ \cite{Preskill}: Superoperator of a composite channel is then given by the matrix multiplication of the superoperators of its constituting channels, i.e., $\hat{T}_{\mathcal{B}\cdot\mathcal{A}} = \hat{T}_{\mathcal{B}}\hat{T}_{\mathcal{A}}$.   

\subsection{Entanglement fidelity and Choi matrix}

Let $\mathcal{H}_{n}$ be a Hilbert space of dimension $n$ and $\mathcal{N}:\mathcal{L}(\mathcal{H}_{n})\rightarrow\mathcal{L}(\mathcal{H}_{n})$ be a CPTP map describing a noisy quantum channel. Consider $d$-dimensional ($d\le n$) Hilbert spaces $\mathcal{H}_{0}$ and $\mathcal{H}_{0}'$ and assume the information sender prepared a maximally entangled state in a local noiseless memory: 
\begin{equation}
|\Phi^{+} \rangle \equiv \frac{1}{\sqrt{d}}\sum_{i =0 }^{d-1} |i_{\mathcal{H}_{0}}\rangle|i_{\mathcal{H}_{0}'}\rangle. 
\end{equation}
The sender then encodes half of the entangled state in $\mathcal{H}_{0}$ to the channel input by a CPTP encoding map $\mathcal{E}:\mathcal{L}(\mathcal{H}_{0})\rightarrow \mathcal{L}(\mathcal{H}_{n})$, and then send it through the noisy channel $\mathcal{N}:\mathcal{L}(\mathcal{H}_{n})\rightarrow\mathcal{L}(\mathcal{H}_{n})$. The receiver then maps the received state at the channel output to the local memory by a CPTP decoding map $\mathcal{D}:\mathcal{L}(\mathcal{H}_{n})\rightarrow \mathcal{L}(\mathcal{H}_{0})$. As a result, both parties obtain an approximate entangled state 
\begin{equation}
\hat{\rho} \equiv (\mathcal{D}\cdot\mathcal{N}\cdot\mathcal{E}\otimes \textrm{id}_{\mathcal{H}_{0}'})(|\Phi^{+}\rangle\langle\Phi^{+}|),  \label{eq:approximate entangled state}
\end{equation}     
where $\textrm{id}_{\mathcal{H}_{0}'}:\mathcal{L}(\mathcal{H}_{0}')\rightarrow \mathcal{L}(\mathcal{H}_{0}')$ is the identity map associated with the noiseless local memory at the sender's side. 
Note that $\hat{\rho}$ in Eq.\eqref{eq:approximate entangled state} is explicitly given by 
\begin{equation}
\hat{\rho} = \frac{1}{d}\sum_{i,i'=0}^{d-1} (\hat{X}_{\mathcal{D}\cdot\mathcal{N}\cdot\mathcal{E}})_{[ij],[i'j']} |j_{\mathcal{H}_{0}}\rangle\langle j'_{\mathcal{H}_{0}} | \otimes |i_{\mathcal{H}_{0}'}\rangle\langle i'_{\mathcal{H}_{0}'} | , 
\end{equation}
where $\hat{X}_{\mathcal{D}\cdot\mathcal{N}\cdot\mathcal{E}}\in\mathcal{L}(\mathcal{H}_{0}\otimes \mathcal{H}_{0})$ is the Choi matrix of the composite channel $\mathcal{D}\cdot\mathcal{N}\cdot\mathcal{E}$. Thus, $\hat{\rho}$ is isomorphic to the Choi matrix $\hat{X}_{\mathcal{D}\cdot\mathcal{N}\cdot\mathcal{E}}$.  

Quality of the approximate non-local entangled state $\hat{\rho}$ can be measured by the entanglement fidelity $F_{e}(\hat{\rho})\equiv \langle \Phi^{+}|\hat{\rho}|\Phi^{+}\rangle$ \cite{Schumacher1996B}, where $F_{e}(\hat{\rho}) = 1$ iff $\hat{\rho}$ is a noiseless maximally entangled state $|\Phi^{+}\rangle\langle \Phi^{+}|$. Isomorphism between $\hat{\rho}$ and $\hat{X}_{\mathcal{D}\cdot\mathcal{N}\cdot\mathcal{E}}$ then yields 
\begin{equation}
F_{e}(\hat{\rho})=(1/d^{2})\sum_{i,i'=0}^{d-1}(\hat{X}_{\mathcal{D}\cdot\mathcal{N}\cdot\mathcal{E}})_{[ii],[i'i']} = \frac{1}{d^{2}}\textrm{Tr}[\hat{T}_{\mathcal{D}\cdot\mathcal{N}\cdot\mathcal{E}}], 
\end{equation}
where $\hat{T}_{\mathcal{D}\cdot\mathcal{N}\cdot\mathcal{E}}$ is the superoperator of $\mathcal{D}\cdot\mathcal{N}\cdot\mathcal{E}$. Note that $\hat{T}_{\mathcal{D}\cdot\mathcal{N}\cdot\mathcal{E}} = \hat{T}_{\mathcal{D}} \hat{T}_{\mathcal{N}} \hat{T}_{\mathcal{E}} $ can be decomposed as 
\begin{align} 
&(\hat{T}_{\mathcal{D}\cdot\mathcal{N}\cdot\mathcal{E}})_{jj',ii'}  
\nonumber\\
&\,\,\,=\sum_{k,k',l,l'=0}^{n-1} (\hat{X}_{\mathcal{D}})_{[lj],[l'j']} (\hat{X}_{\mathcal{N}})_{[kl],[k'l']}  (\hat{X}_{\mathcal{E}})_{[ik],[i'k']},  \label{eq:superoperator of composite channel in terms of constituting Choi matrices}
\end{align}
where $\hat{X}_{\mathcal{D}}\in\mathcal{L}(\mathcal{H}_{0}\otimes \mathcal{H}_{n})$, $\hat{X}_{\mathcal{N}}\in\mathcal{L}(\mathcal{H}_{n}\otimes \mathcal{H}_{n})$ and $\hat{X}_{\mathcal{E}}\in\mathcal{L}(\mathcal{H}_{n}\otimes \mathcal{H}_{0})$ are the Choi matrices of decoding, noisy channel, and encoding, respectively, and $i,i',j,j'\in \lbrace 0,\cdots,d-1 \rbrace$. We note that the entanglement fidelity $F_{e}(\hat{\rho})$ is a bi-linear function of $\hat{X}_{\mathcal{E}}$ and $\hat{X}_{\mathcal{E}}$, since $\hat{T}_{\mathcal{D}\cdot\mathcal{N}\cdot\mathcal{E}}$ is bi-linear in $\hat{X}_{\mathcal{E}}$ and $\hat{X}_{\mathcal{E}}$, as can be seen from Eq.\eqref{eq:superoperator of composite channel in terms of constituting Choi matrices}.

To make the bi-linearity more evident, we define a linear map $f_{\mathcal{N}}:\mathcal{L}(\mathcal{H}_{n}\otimes \mathcal{H}_{0})\rightarrow \mathcal{L}(\mathcal{H}_{0}\otimes \mathcal{H}_{n})$ such that 
\begin{equation}
\big{(}f_{\mathcal{N}}(\hat{X})\big{)}_{[l'i'],[li]} \equiv \sum_{k,k'=0}^{n-1} (\hat{X}_{\mathcal{N}})_{[kl],[k'l']}  (\hat{X})_{[ik],[i'k']}, 
\end{equation}
where $l,l'\in\lbrace 0, \cdots,n-1\rbrace$. Entanglement fidelity $F_{e}(\hat{\rho})$ is then given by 
\begin{equation}
F_{e}(\hat{\rho}) = \frac{1}{d^{2}}\textrm{Tr}\Big{[}\hat{X}_{\mathcal{D}} f_{\mathcal{N}}(\hat{X}_{\mathcal{E}}) \Big{]},  
\label{eq:entanglement fidelity in terms of Choi matrices} 
\end{equation}
which is apparently bi-linear in $\hat{X}_{\mathcal{E}}$ and $\hat{X}_{\mathcal{D}}$.

\subsection{Maximization of entanglement fidelity}

\begin{figure*}[!t]
\centering
\includegraphics[width=6.5in]{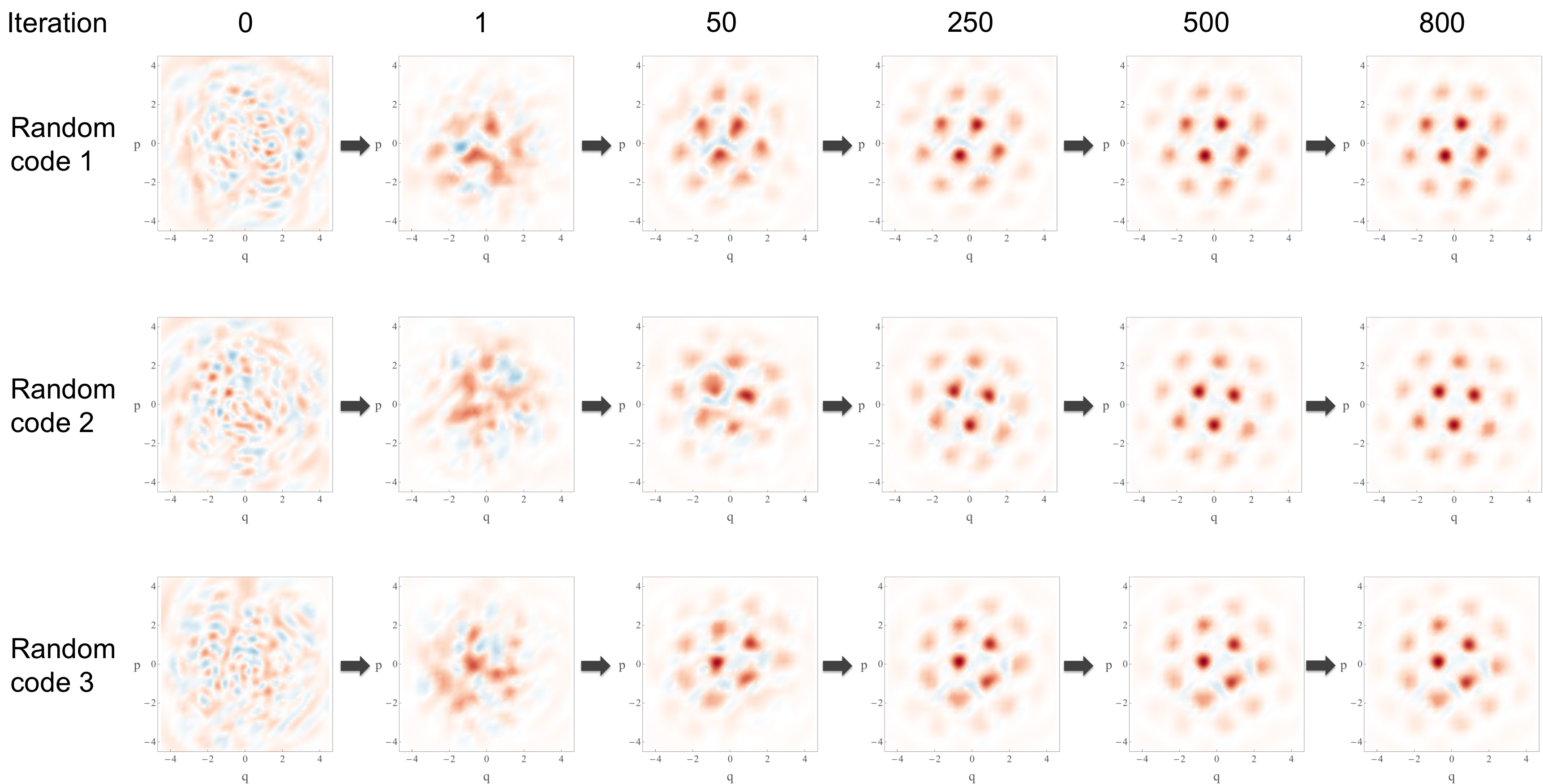}
\caption{Biconvex optimization of the encoding and decoding maps for bosonic pure-loss channel $\mathcal{N}[\eta,0]$ with $\eta=0.9$. We chose $n=20$ and $d=2$ and imposed photon number constraint $\textrm{Tr}[\hat{n}\hat{\rho}_{\mathcal{E}}] \le  \ 3$, where $\hat{\rho}_{\mathcal{E}} = (1/d) \textrm{Tr}_{\mathcal{H}_{0}}\hat{X}_{\mathcal{E}}$ is the maximally mixed code state. First column of each row is the Wigner function of $\hat{\rho}_{\mathcal{E}}$ for a randomly generated encoding map $\mathcal{E}$. From the second to the sixth columns represent the updated code spaces after $1,50,250,500$ and $800$ iterations of alternating semidefinite programming. Color scale is the same as in Fig. \ref{fig:GKP one mode combined}.}
\label{fig:biconvex optimization combined}
\end{figure*}

It is known that finding optimal decoding operation which maximizes the entanglement fidelity is a semidefinite programming, if the encoding map is fixed to $\mathcal{E}=\bar{\mathcal{E}}$ \cite{Reimpell2005,Fletcher2007}:
\begin{alignat}{2}
&\max_{\hat{X}_{\mathcal{D}}}&& \textrm{Tr}[ \hat{X}_{\mathcal{D}}(f_{\mathcal{N}}(\hat{X}_{\bar{\mathcal{E}}}))]
\nonumber\\
&\,\,\,\textrm{s.t.}&& \hat{X}_{\mathcal{D}}=\hat{X}_{\mathcal{D}}^{\dagger}\succeq 0,\,\, \textrm{Tr}_{\mathcal{H}_{0}}\hat{X}_{\mathcal{D}} = \hat{I}_{\mathcal{H}_{n}}, 
\end{alignat} 
where the constrains are due to the CPTP nature of $\mathcal{D}$. Similarly, optimizing encoding while fixing decoding is also a semidefinite programming, and thus the entire problem is a biconvex optimization \cite{Kosut2009}. Here, we further impose an energy constraint to the encoding map while still preserving bi-convexity of the problem. Let $\hat{E}\in \mathcal{L}(\mathcal{H}_{n})$ be an energy observable. We define $\textrm{Tr}_{\mathcal{H}_{n}}[\hat{E}\hat{\rho}_{\mathcal{E}}]$ to be the average energy in the encoded state, where $\hat{\rho}_{\mathcal{E}}\equiv \mathcal{E}(\frac{1}{d}\sum_{i=0}^{d-1}|i_{\mathcal{H}_{0}}\rangle\langle i_{\mathcal{H}_{0}}|)=\frac{1}{d} \textrm{Tr}_{\mathcal{H}_{0}} \hat{X}_{\mathcal{E}}$ is the maximally mixed encoded state. Then, the energy constraint reads $\textrm{Tr}[(\hat{E}\otimes \hat{I}_{\mathcal{H}_{0}}) \hat{X}_{\mathcal{E}}] \le  \bar{E}d$ and we obtain the following biconvex encoding and decoding optimization with energy constraint: 
\begin{alignat}{2}
&\max_{\hat{X}_{\mathcal{E}},\hat{X}_{\mathcal{D}}}&& \textrm{Tr}[\hat{X}_{\mathcal{D}}^{\dagger}f_{\mathcal{N}}(\hat{X}_{\mathcal{E}})], \,\,
\nonumber\\
&\quad \textrm{s.t.}&& \hat{X}_{\mathcal{D}}=\hat{X}_{\mathcal{D}}^{\dagger}\succeq 0,\,\, \textrm{Tr}_{\mathcal{H}_{0}}\hat{X}_{\mathcal{D}} = \hat{I}_{\mathcal{H}_{n}}, 
\nonumber\\
&&& \hat{X}_{\mathcal{E}}=\hat{X}_{\mathcal{E}}^{\dagger}\succeq 0,\,\, \textrm{Tr}_{\mathcal{H}_{n}}\hat{X}_{\mathcal{E}} = \hat{I}_{\mathcal{H}_{0}}, 
\nonumber\\
&&&\quad\textrm{and}\,\,\textrm{Tr}[(\hat{E}\otimes \hat{I}_{\mathcal{H}_{0}}) \hat{X}_{\mathcal{E}}] \le  \bar{E}d , 
\label{eq:biconvex optimization formulation} 
\end{alignat}
where the second line in the constraint is due to the CPTP nature of the encoding map $\mathcal{E}$, and the third line is due to the encoding energy constraint. 

We apply Eq.\eqref{eq:biconvex optimization formulation} to find an optimal qubit-into-oscillator encoding (i.e., $\textrm{dim}(\mathcal{H}_{0})=d=2$) against the bosonic pure-loss channel $\mathcal{N}[\eta,0]$ subject to the photon number constraint $\textrm{Tr}[\hat{n}\hat{\rho}_{\mathcal{E}}] \le \bar{n}$. For practical implementation, since the photon number cannot increase under the bosonic pure-loss channel, we confine the bosonic Hilbert space to a truncated subspace $\mathcal{H}_{n}\equiv\textrm{span}\lbrace |0\rangle,\cdots,|n-1\rangle \rbrace$, 
yielding $\mathcal{N}(\hat{\rho})=\mathcal{N}[\eta,0](\hat{\rho}) = \sum_{\ell =0}^{n-1}\hat{E}_{\ell}\hat{\rho}\hat{E}_{\ell}^{\dagger}$ where $\hat{E}_{\ell}  = \sqrt{(1-\eta)^{\ell}/\ell!} \eta^{\frac{\hat{n}}{2}}\hat{a}^{\ell}$. We choose $n\gg \bar{n}$ to avoid artifacts caused by truncation and solve Eq.\eqref{eq:biconvex optimization formulation} heuristically by alternating encoding and decoding optimization, each solved by semidefinite programming, starting from a random initial encoding map. We generate random initial codes by taking the first $d$ columns of a Haar random $n\times n$ unitary matrix. To solve each SDP sub-problem we used CVX, a package for specifying and solving convex programs \cite{CVX,Grant2008}.

In Fig. \ref{fig:biconvex optimization combined}, we took $\eta=0.9$, $n=20$, $d=2$ and $\bar{n}=3$ and plot the Wigner function of maximally mixed code states of the numerically optimized codes (last column), starting from three different random initial encoding maps (first column). In all instances, the obtained codes are similar to the hexagonal GKP code (in Fig. \ref{fig:GKP one mode combined}), except an overall displacement. We note that the optimized code in second row exhibits the best performance (i.e., $1-F_{\mathcal{N}}^{\star}=0.002092$) among all. Although we do not attempt to make a rigorous claim, this might indicate that the GKP codes defined over the symplectic lattice allowing the most efficient sphere packing may be the optimal encoding for the Gaussian loss channels. In particular, the constant gap in Theorem \ref{theorem:achievable rate of the GKP codes for Gaussian loss channels} may be closed if we assume the optimal decoding, instead of the sub-optimal one involving (noisy) quantum-limited amplification, which we assumed to prove Eq.\eqref{eq:achievable rate of the GKP codes for Gaussian loss channel}. 

We emphasize that the biconvex optimization in Eq.\eqref{eq:biconvex optimization formulation} explores the most general form of encoding maps, including the mixed state encoding. From the numerical optimization, however, we only obtained pure-state encoding (i.e., $\mathcal{E}(\hat{\rho}_{0})=\hat{V}\hat{\rho}_{0}\hat{V}^{\dagger}$, where $\hat{V}:\mathcal{H}_{0}\rightarrow\mathcal{H}_{n}$ is an isometry $\hat{V}^{\dagger}\hat{V}=\hat{I}_{\mathcal{H}_{0}}$) as an optimal solution at all iterations of SDP sub-problem. We also stress that the alternating semidefinite programming method is not guaranteed to yield a global optimal solution, although the obtained hexagonal GKP code in Fig. \ref{fig:biconvex optimization combined} may be a global optimum. In principle, global optimal solution of Eq.\eqref{eq:biconvex optimization formulation} can be deterministically found by global optimization algorithm outlined in \cite{Floudas1990}. To implement the algorithm, however, one should in general solve exponentially many convex sub-problems in the number of complicating variables (responsible for non-convexity of the problem; see \cite{Floudas1990} for details), which is intractable in our application. 

\section{Conclusion}

Exploiting various synthesis and decomposition of Gaussian channels, we provided improved upper bounds on the Gaussian thermal loss channel, both in the energy constrained and unconstrained cases, and the Gaussian displacement channel (Theorems \ref{theorem:improved upper bound of quantum capacity energy unconstrained}, \ref{theorem:improved upper bound of quantum capacity energy constrained}, \ref{theorem:optimized data-processing upper bound} and Eq.\eqref{eq:improved upper bound of Gaussian random displacement channel}). We also established an achievable rate of the GKP codes for the Gaussian loss channels (Theorem \ref{theorem:achievable rate of the GKP codes for Gaussian loss channels}). In the energy-unconstrained case, in particular, we showed that the GKP codes achieve the upper bound of Gaussian loss channel capacity upto at most a constant gap $\simeq \log e = 1.44269\cdots$ in the unit of qubits per channel use. In the energy-constrained case, we formulated a biconvex encoding and decoding optimization problem and solved it via alternating semidefinite programming method. We demonstrated in the one-mode case that hexagonal GKP code emerges as an optimal encoding from Haar random initial codes.     

As was proven in \cite{Barnum2002}, entanglement infidelity of the Petz decoding \cite{Petz1986,Petz1988} (or, transpose decoding) is at most twice as large as the optimal entanglement infidelity. Thus, communication rate achievable by Petz decoding equals to the optimal achievable rate. We leave finding the optimal communication rate of the GKP codes for Gaussian loss channels as an open problem.   

\textit{Note added}: While preparing the manuscript, we became aware of a related work \cite{Rosati2018} in which Lemma \ref{lemma:Gaussian loss channel decomposed into pure loss and amplficiation pre-amplification} and Theorem \ref{theorem:improved upper bound of quantum capacity energy unconstrained} were independently discovered. We also realized our Theorem \ref{theorem:improved upper bound of quantum capacity energy constrained} was independently discovered in \cite{Sharma2017}. We thank the authors of \cite{Sharma2017} for pointing out a mistake we made in Eq.\eqref{eq:new photon number after amplification} in an earlier version of our manuscript. 

\section*{Acknowledgment}

We would like to thank Steven M. Girvin, Barbara Terhal, John Preskill, Steven T. Flammia, Sekhar Tatikonda, Richard Kueng, Linshu Li and Mengzhen Zhang for fruitful discussions. We also thank Mark M. Wilde and Matteo Rosati for useful comments on our manuscript. We acknowledge support from the ARL-CDQI (W911NF-15-2-0067), ARO (W911NF-14-1-0011, W911NF-14-1-0563), ARO MURI (W911NF-16-1-0349 ), AFOSR MURI (FA9550-14-1-0052, FA9550-15-1-0015), NSF (EFMA-1640959), the Alfred P. Sloan Foundation (BR2013-049), and the Packard Foundation (2013-39273). KN acknowledges support through the Korea Foundation for Advanced Studies. VAA acknowledges support from the Walter Burke Institute for Theoretical Physics at Caltech.  

\appendix

\section{Bosonic mode, Gaussian states and Gaussian unitary operations}
\label{section:Bosonic mode, Gaussian states and Gaussian unitary operations}

Let $\mathcal{H}$ denote an infinite-dimensional Hilbert space. Quantum states of $N$ bosonic modes are in the tensor product of $N$ such Hilbert spaces $\mathcal{H}^{\otimes N}$. Each bosonic mode is associated with the annihilation and creation operators $\hat{a}_{k}$ and $\hat{a}^{\dagger}_{k}$, satisfying the bosonic communication relation $[\hat{a}_{i},\hat{a}_{j}] = [\hat{a}^{\dagger}_{i},\hat{a}^{\dagger}_{j}] = 0$, $[\hat{a}_{i},\hat{a}^{\dagger}_{j}] =\delta_{ij}$, where $\delta_{ij}$ is the Kronecker delta function and $[\hat{A},\hat{B}]\equiv \hat{A}\hat{B}-\hat{B}\hat{A}$. Hilbert space $\mathcal{H}$ is spanned by the eigenstates of the number operator $\hat{n}\equiv \hat{a}^{\dagger}\hat{a}$, i.e., $\mathcal{H}=\textrm{span}\lbrace |n\rangle \rbrace_{n=0}^{\infty}$ where $\hat{n}|n\rangle= n|n\rangle$. In the number basis (or Fock basis), annihilation and creation operators are given by $\hat{a} = \sum_{n=1}^{\infty}\sqrt{n}|n-1\rangle\langle n|$, $\hat{a}^{\dagger} = \sum_{n=0}^{\infty}\sqrt{n+1}|n+1\rangle\langle n|$. Coherent state $|\alpha\rangle$ is an eigenstate of the annihilation operator $\hat{a}$ with eigenvalue $\alpha$, i.e., $\hat{a}|\alpha\rangle = \alpha|\alpha\rangle$. In the Fock basis, $|\alpha\rangle$ is given by $|\alpha\rangle = e^{-\frac{1}{2}|\alpha|^{2}}\sum_{n=0}^{\infty}\frac{\alpha^{n}}{\sqrt{n!}}|n\rangle$. 
Note that the vacuum state $|0\rangle$ is a special case of coherent state with $\alpha=0$. Displacement operator $\hat{D}(\alpha)$ is defined as $\hat{D}(\alpha) \equiv \exp(\alpha\hat{a}^{\dagger}-\alpha^{*}\hat{a})$. Coherent state $|\alpha\rangle$ is a displaced vacuum state: 
\begin{equation}
|\alpha\rangle = \hat{D}(\alpha)|0\rangle.  \label{eq:coherent state equals to displacec vacuum}
\end{equation}

Quadrature operators are defined as $\hat{q}_{k}\equiv \frac{1}{\sqrt{2}}(\hat{a}_{k}+\hat{a}_{k}^{\dagger})$, $\hat{p}_{k}\equiv \frac{i}{\sqrt{2}}(\hat{a}_{k}^{\dagger}-\hat{a}_{k})$ and are called position and momentum operator, respectively. We follow the same convention as used for the GKP codes \cite{Gottesman2001,Albert2017} which differs from Ref. \cite{Weedbrook2012} by a factor of $\sqrt{2}$ in the definition of $\hat{q}_{k}$ and $\hat{p}_{k}$. Define $\mathbf{\hat{x}}\equiv  (\hat{q}_{1},\hat{p}_{1},\cdots,\hat{q}_{N},\hat{p}_{N})^{T}$. 
Then, the bosonic commutation relation reads $[\hat{x}_{i},\hat{x}_{j}]=i \Omega_{ij}$, where $\mathbf{\Omega}$ is defined as 
\begin{equation}
\mathbf{\Omega} \equiv \begin{pmatrix}
\mathbf{\omega} &  &  \\
& \ddots & \\
& & \mathbf{\omega}
\end{pmatrix}  \textrm{ and } \mathbf{\omega} \equiv 
\begin{pmatrix}
0 & 1    \\
-1 & 0
\end{pmatrix}. 
\end{equation}
Eigenvalue spectrum of quadrature operators are continuous, $\hat{q}|q\rangle = q|q\rangle, \quad \hat{p}|p\rangle = p|p\rangle$, where $q,p\in(-\infty,\infty)$ and the eigenstates are normalized by the Dirac delta function: $\langle q|q'\rangle = \delta(q-q')$ and $\langle p|p'\rangle = \delta(p-p')$. Also, $|q\rangle$ and $|p\rangle$ are related by Fourier transformation $|q\rangle = \frac{1}{\sqrt{2\pi}}\int_{-\infty}^{\infty}dp e^{-i q p}|p\rangle$, $|p\rangle = \frac{1}{\sqrt{2\pi}}\int_{-\infty}^{\infty}dp e^{i q p}|q\rangle$. Upon displacement, 
\begin{alignat}{2} 
\hat{D}(\xi_{1}/\sqrt{2})|q\rangle &= e^{-i\xi_{1}\hat{p}}|q\rangle\, &&= |q+\xi_{1}\rangle , 
\nonumber\\
\hat{D}(i\xi_{2}/\sqrt{2})|p\rangle &= e^{i\xi_{2}\hat{q}}|p\rangle &&= |p+\xi_{2}\rangle . 
\end{alignat}

Let $\mathcal{L}(\mathcal{H})$ be the space of linear operators on the Hilbert space $\mathcal{H}$. General quantum states (including mixed states) are described by density operator $\hat{\rho}\in\mathcal{D}(\mathcal{H})$, where $\mathcal{D}(\mathcal{H})\equiv \lbrace \hat{\rho}\in\mathcal{L}(\mathcal{H})\, |\, \hat{\rho}^{\dagger}=\hat{\rho}\succeq 0, \textrm{Tr}[\hat{\rho}]=1\rbrace$. Expectation value of an observable $\hat{E}$ of the state $\hat{\rho}$ is given by $\langle \hat{E} \rangle = \textrm{Tr}[\hat{\rho}\hat{E}]$. Wigner characteristic function $\chi(\mathbf{\xi})$ is defined as $\chi(\mathbf{\xi})\equiv \textrm{Tr}[\hat{\rho}\exp(i\mathbf{\hat{x}}^{T}\mathbf{\Omega} \mathbf{\xi})]$, where $\hat{\rho}\in\mathcal{D}(\mathcal{H}^{\otimes N})$ and $\mathbf{\xi} = (\xi_{1},\cdots,\xi_{2N})^{T}$. Weyl operator $\exp(i\mathbf{\hat{x}}^{T}\mathbf{\Omega} \mathbf{\xi})$ is of the form of displacement operator and satisfies the orthogonality relation $\textrm{Tr}[\exp(-i\mathbf{\hat{x}}^{T}\mathbf{\Omega} \mathbf{\xi})\exp(i\mathbf{\hat{x}}^{T}\mathbf{\Omega} \mathbf{\xi'})] = (2\pi)^{N}\delta(\mathbf{\xi}-\mathbf{\xi'})$. Wigner characteristic function $\chi(\mathbf{\xi})$ is in one-to-one correspondence with the state $\hat{\rho}$ and the inverse function is given by $\hat{\rho} = \frac{1}{(2\pi)^{N}}\int d^{2N}\mathbf{\xi} \chi(\mathbf{\xi})  \exp(-i\mathbf{\hat{x}}^{T}\mathbf{\Omega} \mathbf{\xi})$. Wigner function $W(\mathbf{x})$ is Fourier transformation of $\chi(\mathbf{\xi})$, i.e., 
\begin{equation}
W(\mathbf{x}) = \frac{1}{(2\pi)^{N}}\int d^{2N}\mathbf{\xi} \chi(\mathbf{\xi})  \exp(-i\mathbf{x}^{T}\mathbf{\Omega} \mathbf{\xi}), \label{eq:definition of Wigner function}
\end{equation}  
where $\mathbf{x} = (x_{1},\cdots,x_{N})^{T}$ is the eigenvalue of the quadrature operator $\mathbf{\hat{x}}$. 

A quantum state $\hat{\rho}$ is Gaussian iff its Wigner characteristic function and Wigner function are Gaussian \cite{Weedbrook2012}:
\begin{align} 
\chi(\mathbf{\xi}) &= \exp\big{[} -\frac{1}{2}\mathbf{\xi}^{T}(\mathbf{\Omega} \mathbf{V} \mathbf{\Omega}^{T})\mathbf{\xi} -i (\mathbf{\Omega} \mathbf{\bar{x}})\mathbf{\xi}\big{]},  
\nonumber\\
W(\mathbf{x}) &= \frac{\exp\big{[} -\frac{1}{2}(\mathbf{x}-\mathbf{\bar{x}})^{T}\mathbf{V}^{-1}(\mathbf{x}-\mathbf{\bar{x}})\big{]}}{(2\pi)^{N}\sqrt{\textrm{det}\mathbf{V}}}, \label{eq:Wigner characteristic function for Gaussian states}
\end{align}
where $\mathbf{\bar{x}},\mathbf{V}$ are first and second moments of the state $\hat{\rho}$. 
\begin{equation}
\mathbf{\bar{x}} \equiv  \langle \mathbf{\hat{x}} \rangle  = \textrm{Tr}[\hat{\rho}\mathbf{\hat{x}}], \quad V_{ij} \equiv \frac{1}{2}\langle \lbrace \hat{x}_{i}-\bar{x}_{i} , \hat{x}_{j}-\bar{x}_{j}  \rbrace \rangle , 
\end{equation} 
where $\lbrace \hat{A},\hat{B}\rbrace\equiv \hat{A}\hat{B}+\hat{B}\hat{A}$. Thus, Gaussian states are fully characterized by the first two moments, i.e., $\hat{\rho}=\hat{\rho}_{G}(\mathbf{\bar{x}},\mathbf{V})$. Heisenberg uncertainty relation reads $\mathbf{V}+\frac{i}{2}\mathbf{\Omega} \succeq 0$ and implies $V(\hat{q}_{k})V(\hat{p}_{k})\ge \frac{1}{4}$ for all $k\in\lbrace 1,\cdots,N \rbrace$, where $V(\hat{x}_{i}) \equiv \mathbf{V}_{ii}$. 

Vacuum state $|0\rangle\langle 0|$ is the simplest example of one-mode Gaussian states with $\mathbf{\bar{x}}=\mathbf{0}$ and $\mathbf{V} = \frac{\mathbf{I}_{2}}{2}$, where $\mathbf{I}_{n}$ is defined as the $n\times n$ identity matrix. Coherent state $|\alpha\rangle\langle\alpha|$ is also Gaussian: $|\alpha\rangle\langle\alpha| = \hat{\rho}_{G}(\mathbf{\bar{x}}_{\alpha},\frac{\mathbf{I}_{2}}{2})$ with $\mathbf{\bar{x}}_{\alpha}\equiv \sqrt{2}(\alpha_{R},\alpha_{I})^{T}$ and $\alpha= \alpha_{R}+i\alpha_{I}$. Coherent states (including the vacuum state) saturate the uncertainty inequality and thus have the minimum uncertainty. Thermal state is an example of Gaussian mixed states and is given by
\begin{equation}
\hat{\rho}_{\nthtiny} \equiv \sum_{n=0}^{\infty} \frac{(\nth)^{n}}{(\nth+1)^{n+1}} |n\rangle\langle n| =\hat{\rho}_{G}(\mathbf{0},(\nth+\frac{1}{2})\mathbf{I}_{2}) ,   \label{eq:thermal state definition and Gaussian specification} 
\end{equation} 
in the Fock basis, where $\nth$ is the average photon number, i.e., $\nth = \textrm{Tr}[\hat{\rho}_{\nthtiny}\hat{n}]$. Quantum entropy of a state $\hat{\rho}$ is defined as $H(\hat{\rho})\equiv -\textrm{Tr}[\hat{\rho}\log\hat{\rho}]$, where $\log$ is the logarithm with base $2$. Entropy of a thermal state $\hat{\rho}_{\nthtiny}$ is given by 
\begin{equation}
H(\hat{\rho}_{\nthtiny}) = g(\nth), \label{eq:thermal state entropy}  
\end{equation}
where $g(x) \equiv (x +1) \log(x+1)-x\log x$. Since thermal state is a mixed state, $H(\hat{\rho}_{\nthtiny})\ge 0$ where the equality holds only when the state is vacuum, i.e., $\nth=0$.  

Unitary operations that map a Gaussian state to another Gaussian state are called Gaussian unitary operations. Gaussian unitaries are generated by second order polynomials of $\mathbf{\hat{a}} = (\hat{a}_{1},\cdots,\hat{a}_{N})^{T} $ and $\mathbf{\hat{a}^{\dagger}} = (\hat{a}_{1}^{\dagger},\cdots,\hat{a}_{N}^{\dagger})^{T}$, i.e., $\hat{U}_{G} = \exp(-i\hat{H})$ with $\hat{H} = i (\mathbf{\alpha}^{T}\mathbf{\hat{a}^{\dagger}} + \mathbf{\hat{a}^{\dagger}}\mathbf{F}\mathbf{\hat{a}} + \mathbf{\hat{a}^{\dagger}}\mathbf{G}\mathbf{\hat{a}^{\dagger}}) + \textrm{h.c.}$, where $\mathbf{\alpha}^{T} = (\alpha_{1},\cdots,\alpha_{N})$ and $\mathbf{F},\mathbf{G}$ are $N\times N$ complex matrices. In the Heisenberg picture, annihilation operator $\mathbf{\hat{a}}$ is transformed into $\hat{U}_{G}^{\dagger} \mathbf{\hat{a}} \hat{U}_{G}  = \mathbf{A}\mathbf{\hat{a}} + \mathbf{B}\mathbf{\hat{a}^{\dagger}} +\mathbf{\alpha}$, where $N\times N$ complex matrices $\mathbf{A},\mathbf{B}$ (determined by $\mathbf{F},\mathbf{G}$) satisfy $\mathbf{A}\mathbf{B}^{T} = \mathbf{B}\mathbf{A}^{T}$ and $\mathbf{A}\mathbf{A}^{\dagger}=\mathbf{B}\mathbf{B}^{\dagger}+\mathbf{I}_{N}$. In terms of quadrature operators, the transformation reads $\mathbf{x}\rightarrow \hat{U}_{G}^{\dagger}\mathbf{x}\hat{U}_{G} = \mathbf{S}\mathbf{x} + \mathbf{d}$, where $\mathbf{d} = (d_{1},\cdots,d_{2N})^{T}=\sqrt{2}(\alpha_{1}^{R},\alpha_{1}^{I},\cdots,\alpha_{N}^{R},\alpha_{N}^{I})^{T}$ and $2N\times 2N$ matrix $\mathbf{S}$ is symplectic: 
\begin{equation}
\mathbf{S}\mathbf{\Omega}\mathbf{S}^{T}  =\mathbf{\Omega}. \label{eq:symplectic matrix defining property}
\end{equation}
Gaussian unitaries are thus fully characterized by $\mathbf{S},\mathbf{d}$, and under $\hat{U}_{\mathbf{S},\mathbf{d}}$ the first two moments are transformed as 
\begin{equation}
\mathbf{\bar{x}} \rightarrow \mathbf{S}\mathbf{\bar{x}}+\mathbf{d}, \quad \mathbf{V}\rightarrow\mathbf{S}\mathbf{V}\mathbf{S}^{T}.  \label{eq:transformation of moments of quadrature by Gaussian unitary}
\end{equation}

Displacement operator $\hat{D}(\alpha)$ is a one-mode Gaussian unitary operation with $\alpha$ and $\mathbf{F}=\mathbf{G}=0$, yielding $\mathbf{A}=1,\mathbf{B}=0$ and $\mathbf{S}=\mathbf{I}_{2}$, $\mathbf{d}=\sqrt{2}(\alpha_{R},\alpha_{I})^{T}$. Squeeze operator $\hat{S}(r)\equiv \exp(\frac{r}{2}(\hat{a}^{2}-\hat{a}^{\dagger 2}) )$ is another example of one-mode Gaussian unitaries and transforms quadrature operators by $\hat{q}\rightarrow e^{-r}\hat{q}$ and $\hat{p}\rightarrow e^{r}\hat{p}$, i.e., $\mathbf{S}=\textrm{diag}(e^{-r},e^{r})$. Quadrature eigenstate can thus be understood as an infinitely squeezed state: For example, $|\hat{q}=0\rangle \propto \lim_{r \rightarrow+\infty} \hat{S}(r)|0\rangle$ and $|\hat{p}=0\rangle \propto \lim_{r \rightarrow-\infty} \hat{S}(r)|0\rangle$, where $|0\rangle$ is the vacuum state. Phase rotation operator is defined as $\hat{U}(\theta)\equiv \exp(i\theta\hat{a}^{\dagger}\hat{a})$. Under the phase rotation, quadrature operators are transformed as 
\begin{equation}
\mathbf{x} \rightarrow \mathbf{R}(\theta)\mathbf{x} \,\,\, \textrm{and} \,\,\,  \mathbf{R}(\theta) \equiv \begin{pmatrix}
\cos\theta & -\sin\theta    \\
\sin\theta & \cos\theta
\end{pmatrix}, \label{eq:symplectic transformation phase rotation}
\end{equation}
yielding, e.g., $\hat{U}(\theta)|\alpha\rangle = |\alpha e^{i\theta}\rangle$. 

Beam splitter is a two-mode Gaussian unitary operation generated by the Hamiltonian of the form $\hat{H}\propto i(\hat{a}_{1}^{\dagger}\hat{a}_{2}-\hat{a}_{1}\hat{a}_{2}^{\dagger})$.
Beam splitter unitary $\hat{B}(\eta)$ transforms the annihilation operators by $\hat{a}_{1}\rightarrow \sqrt{\eta}\hat{a}_{1}+\sqrt{1-\eta}\hat{a}_{2}$ and $\hat{a}_{2}\rightarrow -\sqrt{1-\eta}\hat{a}_{1}+\sqrt{\eta}\hat{a}_{2}$, where $\eta=\cos^{2}\theta\in[0,1]$ is called the transmissivity. In terms of the quadrature operator $\mathbf{\hat{x}}=(\hat{q}_{1},\hat{p}_{1},\hat{q}_{2},\hat{p}_{2})^{T}$, the transformation reads
\begin{equation}
\mathbf{\hat{x}}\rightarrow \mathbf{B}(\eta) \mathbf{\hat{x}} \,\,\textrm{and}\,\,  \mathbf{B}(\eta)\equiv \begin{pmatrix}
\sqrt{\eta}\mathbf{I}_{2} & \sqrt{1-\eta}\mathbf{I}_{2}   \\
-\sqrt{1-\eta}\mathbf{I}_{2} & \sqrt{\eta}\mathbf{I}_{2}
\end{pmatrix}. \label{eq:symplectic transformation beam splitter}
\end{equation}
Another example of two-mode Gaussian operation is the two-mode squeezing generated by $\hat{H}\propto i(\hat{a}_{1}\hat{a}_{2}-\hat{a}_{1}^{\dagger}\hat{a}_{2}^{\dagger})$. Under the two-mode squeezing $\hat{S}_{2}(G)$, annihilation operators are transformed as $\hat{a}_{1}\rightarrow \sqrt{G}\hat{a}_{1}+\sqrt{G-1}\hat{a}_{2}^{\dagger}$ and $\hat{a}_{2}\rightarrow \sqrt{G-1}\hat{a}_{1}^{\dagger}+\sqrt{G}\hat{a}_{2}$, where $G\ge 1$ is the gain. Under $\hat{S}_{2}(G)$, quadrature operators are transformed as 
\begin{equation}
\mathbf{\hat{x}}\rightarrow \mathbf{S}_{2}(G) \mathbf{\hat{x}} \,\,\textrm{and}\,\,  \mathbf{S}_{2}(G)\equiv \begin{pmatrix}
\sqrt{G}\mathbf{I}_{2} & \sqrt{G-1}\mathbf{Z}_{2}   \\
\sqrt{G-1}\mathbf{Z}_{2} & \sqrt{G}\mathbf{I}_{2}
\end{pmatrix},  \label{eq:symplectic transformation two mode squeezer}
\end{equation}
where $\mathbf{Z}_{2}\equiv \textrm{diag}(1,-1)$. 

\section{Characterization of Gaussian random displacement channel}
\label{subsection:Characterization of Gaussian random displacement channel}

Gaussian random displacement channel $\mathcal{N}_{B_{2}}[\sigma^{2}](\hat{\rho})$ as defined in Definition \ref{definition:Gaussian displacement channel} is characterized by $\mathbf{T}=\mathbf{I}_{2}$, $\mathbf{N}=\sigma^{2}\mathbf{I}_{2}$ and $\mathbf{d}=0$. 
\begin{proof}
Let $\hat{\rho}'\equiv \mathcal{N}_{B_{2}}[\sigma^{2}](\hat{\rho})$. Note that, in the case of one-mode, $\exp(i\mathbf{\hat{x}}^{T}\mathbf{\Omega}\mathbf{\xi}) = \hat{D}(\xi/\sqrt{2})$ where $\xi = \xi_{1}+i\xi_{2}$ (see Appendix \ref{section:Bosonic mode, Gaussian states and Gaussian unitary operations} for the definition of $\mathbf{\Omega}$ and $\mathbf{\hat{x}}$). The Wigner characteristic function of $\hat{\rho}'$ is then given by 
\begin{align} 
\chi'(\mathbf{\xi}) &= \textrm{Tr}\Big{[}\mathcal{N}_{B_{2}}[\sigma^{2}](\hat{\rho})\hat{D}\Big{(}\frac{\xi}{\sqrt{2}}\Big{)}\Big{]}
\nonumber\\
&=  \textrm{Tr}\Big{[}\frac{1}{\pi\sigma^{2}} \int d^{2}\alpha e^{-\frac{|\alpha|^{2}}{\sigma^{2}}}  \hat{\rho}  \hat{D}^{\dagger}(\alpha)  \hat{D}\Big{(}\frac{\xi}{\sqrt{2}}\Big{)}\hat{D}(\alpha) \Big{]}
\nonumber\\
&= \frac{1}{\pi\sigma^{2}} \int d^{2}\alpha e^{-\frac{|\alpha|^{2}}{\sigma^{2}}} e^{\frac{1}{\sqrt{2}}(\alpha^{*}\xi-\alpha\xi^{*})}  \textrm{Tr}\Big{[} \hat{\rho}   \hat{D}\Big{(}\frac{\xi}{\sqrt{2}}\Big{)} \Big{]}
\nonumber\\
&= e^{-\frac{\sigma^{2}}{2}|\xi|^{2}}\chi(\mathbf{\xi}), 
\end{align}
where $\chi(\xi)$ is the Wigner characteristic function of the input state $\hat{\rho}$--cf, text above Eq.\eqref{eq:definition of Wigner function}. We used cyclic property of trace to derive the second equality, and applied Baker--Campbell--Hausdorff formula to obtain $\hat{D}^{\dagger}(\alpha)\hat{D}(\beta)\hat{D}(\alpha) = \exp(\alpha^{*}\beta-\alpha\beta^{*})\hat{D}(\beta)$ and the third equality. The last equality follows from the evaluation of Gaussian integral and the definition of $\chi(\mathbf{\xi})$. If the input state is Gaussian, i.e., $\hat{\rho}=\hat{\rho}_{G}(\mathbf{\bar{x}},\mathbf{V})$, Winger characteristic function of the output state is given by
\begin{equation}
\chi'(\mathbf{\xi}) = \exp\big{[} -\frac{1}{2}\mathbf{\xi}^{T}(\mathbf{\Omega}  (\mathbf{V}+\sigma^{2}\mathbf{I}_{2}) \mathbf{\Omega}^{T})\mathbf{\xi} -i (\mathbf{\Omega} \mathbf{\bar{x}})\mathbf{\xi}\big{]},  
\end{equation}
and thus $\hat{\rho}'=\hat{\rho}_{G}(\mathbf{\bar{x}},\mathbf{V}+\sigma^{2}\mathbf{I}_{2}$).   
\end{proof}  

\section{Derivation of the optimized data-processing bound}
\label{section:Derivation of the optimized data-processing bound} 

\begin{proof}
Assume that a Gaussian loss channel is decomposed as 
\begin{equation}
\mathcal{N}[\eta,\nth] =\mathcal{A}[G_{1}] \mathcal{N}[\bar{\eta},0]\mathcal{A}[G_{2}], 
\end{equation}
cf, Eq.\eqref{eq:decomposition of thermal loss into preamp pure loss and post amp}. Then, the channel on the right hand side is characterized by $\mathbf{T} = \sqrt{G_{1}\bar{\eta}G_{2}}\mathbf{I}_{2}$ and 
\begin{align}
\mathbf{N} &= \Big{[} G_{1}\Big{(}\bar{\eta}\times \frac{1}{2}(G_{2}-1) + \frac{1}{2} (1-\bar{\eta}) \Big{)} + \frac{1}{2}(G_{1}-1)  \Big{]} \mathbf{I_{2}} 
\nonumber\\
&= \frac{1}{2}\Big{[} G_{1}\bar{\eta}G_{2} -2G_{1}\bar{\eta} +2G_{1}-1 \Big{]}\mathbf{I_{2}} . 
\end{align}
Imposing $\mathbf{T} = \sqrt{\eta}\mathbf{I}_{2}$ and $\mathbf{N}=(1-\eta)(\nth+\frac{1}{2})\mathbf{I}_{2}$, we get $(1-\bar{\eta})G_{1} = (1-\eta)(\nth+1)$ and thus $\bar{\eta} = 1-\frac{1}{G_{1}} (1-\eta)(\nth+1)$, $G_{2} = \frac{\eta}{G_{1}\bar{\eta}} = \frac{\eta}{G_{1}-(1-\eta)(\nth+1)}$. Then, $Q_{\reg}^{n\le \bar{n}}(\mathcal{N}[\eta,\nth])$ is upper bounded by 
\begin{align}
&Q^{n\le G_{2}\bar{n}+(G_{2}-1)}(\mathcal{N}[\bar{\eta},0])
\nonumber\\
&\quad\equiv \max\Big{[}  g\big{(}\bar{\eta}(G_{2}\bar{n}+(G_{2}-1))\big{)}
\nonumber\\
&\qquad\qquad\qquad- g\big{(}(1-\bar{\eta})(G_{2}\bar{n}+(G_{2}-1))\big{)},0 \Big{]} .
\end{align}
hence Eq.\eqref{eq:definition of the f function}. To ensure $G_{1}\ge 1$ and $G_{2} = \frac{\eta}{G_{1}-(1-\eta)(\nth+1)} \ge 1$, $G_{1}$ should be in the range $1 \le G_{1}\le 1+(1-\eta)\nth$. Thus, Eq.\eqref{eq:optimized data-processing bound} follows.   
\end{proof}

\bibliography{GKP_v6_arxiv}







\end{document}